\newcommand{\posints}{\ensuremath{\mathbb{Z}_+}}
\newcommand{\reals}{\ensuremath{\mathbb{R}}}
\newcommand{\posreals}{\ensuremath{\mathbb{R}_+}}
\newtheorem*{lemma*}{Lemma}
\newtheorem*{theorem*}{Theorem}
\newcommand{\nn}{\mathbf{n}}
\newcommand{\na}{\mathbf{a}}
\newcommand{\nb}{\mathbf{b}}
\newcommand{\nc}{\mathbf{c}}
\newcommand{\nf}{\mathbf{f}}
\newtheorem{theorem}{Theorem}[section]
\newtheorem{lemma}[theorem]{Lemma}
\theoremstyle{definition}
\newtheorem{definition}[theorem]{Definition}
\newtheorem{proposition}[theorem]{Proposition}
\newcommand{\opt}{\mathit{OPT}}
\newcommand{\I}{\mathcal{I}}
\DeclareMathOperator{\rank}{rank}
\DeclareMathOperator{\spa}{span}
\DeclareMathOperator{\poly}{poly}
\newcommand{\outp}{R}
\newcommand{\lb}{\!\left(}
\newcommand{\rb}{\right)}
\DeclareMathOperator*{\argmax}{arg\,max}
\newcommand{\cI}{\mathcal{I}}
\newcommand{\cB}{B}
\newcommand{\cM}{\mathcal{M}}
\newcommand{\cH}{\mathcal{H}}
\newcommand{\cT}{\mathcal{T}}
\newcommand{\cO}{\mathcal{O}}
\DeclareMathAlphabet{\mathsfit}{T1}{\sfdefault}{\mddefault}{\sldefault}
\SetMathAlphabet{\mathsfit}{bold}{T1}{\sfdefault}{\bfdefault}{\sldefault}
\newcommand{\pU}{\mathsfit{U}}
\newcommand{\pT}{\mathsfit{T}}
\newcommand{\pP}{\mathsfit{P}}
\newcommand{\pQ}{\mathsfit{Q}}
\newcommand{\pA}{\mathsfit{A}}
\newcommand{\pp}{\mathsfit{p}}
\newcommand{\points}{\pP}
\newcommand{\pw}{\mathsfit{w}}
\newcommand{\pZ}{\mathsfit{Z}}
\newcommand{\pa}{\mathsfit{a}}
\newcommand{\pb}{\mathsfit{b}}
\newcommand{\pc}{\mathsfit{c}}
\newcommand{\pd}{\mathsfit{d}}
\newcommand{\pf}{\mathsfit{f}}
\newcommand{\pg}{\mathsfit{g}}
\newcommand{\Guess}{\textsc{Guess}}
\newcommand{\RepSet}{\textsc{RepSet}}
\newcommand{\SRepSet}{\textsc{StreamingRepSet}}
\newcommand{\StreamingCoverage}{\textsc{StreamingCoverage}}
\newcommand{\ProcessElem}{\textsc{ProcessElem}}
\newcommand{\FindNode}{\textsc{FindNode}}
\newcommand{\all}{\textsc{AllReps}}
\newcommand{\parent}{\textsc{ParentElem}}
\newcommand{\treeNodeThree}[9]{%
  \fill[#7] (#1,#2 - 0.66) circle[radius=2pt] node [above,#7] {\scriptsize  \ensuremath{#3}};
  \fill[#8] (#1+0.5,#2-0.66) circle[radius=2pt] node [above,#8] {\scriptsize \ensuremath{#4}};
  \fill[#9] (#1+1,#2-0.66) circle[radius=2pt] node [above,#9] {\scriptsize \ensuremath{#5}};
  \draw (#1,#2) arc[radius=0.5, start angle=90,end angle=270];
  \draw (#1,#2-1) -- (#1+1,#2-1);
  \draw (#1,#2) -- (#1+1,#2);
  \draw (#1+1,#2) arc[radius=0.5, start angle=90,end angle=-90];
  \draw (#1 -0.4 ,#2 - 0.5
) node [left] {\scriptsize \ensuremath{#6}};
}
\title{FPT-Algorithms for the $\ell$-Matchoid Problem with a Coverage Objective}
\author{}
\author[1]{Chien-Chung Huang}
\author[2]{Justin Ward}
\affil[1]{CNRS, DI ENS, PSL, France. \texttt{villars@gmail.com}}
\affil[2]{School of Mathematical Sciences, Queen Mary University of London, United Kingdom. \texttt{justin.ward@qmul.ac.uk}}
\begin{document}
\date{}
\maketitle
\begin{abstract}

We consider the problem of optimizing a coverage function under a $\ell$-matchoid of rank $k$. 
We design fixed-parameter algorithms as well as streaming algorithms to compute an exact solution. 
Unlike previous work that presumes linear representativity of matroids, we consider the general 
oracle model. 

For the special case where the coverage function is linear, we give a deterministic fixed-parameter 
algorithm parameterized by $\ell$ and $k$. This result, combined with the lower bounds of 
Lov\'asz~\cite{Lovasz1981}, and Jensen and Korte~\cite{Jensen82}, 
demonstrates a separation between the $\ell$-matchoid and the matroid $\ell$-parity problems 
in the setting of fixed-parameter tractability. 

For a general coverage function, we give both deterministic and randomized fixed-parameter algorithms, 
parameterized by $\ell$ and $z$, where $z$ is the number of points covered in an optimal solution. The resulting algorithms can be directly translated into streaming algorithms. For unweighted coverage functions, we show that we can find an exact solution even when the function is given in the form of a value oracle (and so we do not have access to an explicit representation of the set system). Our result can be implemented in the streaming setting and stores a number of elements depending only on $\ell$ and $z$, but completely indpendent of the total size $n$ of the ground set. This shows that it is possible to circumvent the recent space lower bound of Feldman et al.~\cite{feldman2020}, by 
parameterizing the solution value. This result, combined with existing lower bounds, also provides a new separation 
between the space and time complexity of maximizing an arbitrary submodular function and a coverage function 
in the value oracle model.
\end{abstract}

\newpage

\section{Introduction}
\label{sec:introduction}

A (weighted) coverage function $f : 2^X \to \posreals$ is defined by a collection $X$ of subsets of points\footnote{Here and throughout, we use the term ``points'' when discussing elements of the underlying universe of a coverage function to avoid confusion with the elements of $X$.} from some underlying universe, each with a weight. Given some $A \subseteq X$, $f(A)$ is simply the total weight of all points that appear in at least one set in $A$.
Here we consider the problem of maximizing a coverage function subject to one or more matroid constraints, which are captured by the notion of an $\ell$-matchoid. Formally, suppose we are given a ground set $X$ and a coverage function $f: 2^X \rightarrow \posreals$. 
The goal is to compute a \emph{feasible} set $S \subseteq X$ with $f(S)$ being maximized. 
The feasible sets of $X$ are defined by an $\ell$-matchoid $\cM$ over $X$. 
The $\ell$-matchoid $\cM$ is a collection $\{M_i=(X_i,\I_i)\}_{i=1}^s$ of matroids, each defined on some (possibly distinct) subset $X_i \subseteq X$, in which each element $e \in X$ appears in at most $\ell$ of the sets $X_i$. 
We then say that a set $S \subseteq X$ is feasible if and only if $S \cap X_i \in \I_i$ for each $1 \leq i \leq s$. 
Intuitively, an $\ell$-matchoid can be regarded as the intersection of several matroid constraints, in which any element ``participates'' in at most $\ell$ of the constraints. The \emph{rank} of an $\ell$-matchoid $\cM$, is defined as the maximum size of any feasible set. 
When the coverage function $f$ is an unweighted linear function, our problem is usually called the \textsc{$\ell$-Matchoid} problem in the literature~\cite{JenkynsMatchoid}. 

The family of $\ell$-matchoid constraints includes several other commonly studied matroid constraints. A $1$-matchoid is simply a matroid, and for $\ell > 1$, letting $s = \ell$ and $X_i = X$ for all $i$ gives an intersection of $\ell$ matroid constraints. Additionally, the $\ell$-set packing or $\ell$-uniform hypergraph matching problems can be captured by letting the elements of $X$ correspond to the given sets or hyperedges (each of which contains at most $\ell$ vertices) and defining one uniform matroid of rank 1 for each vertex, allowing at most $1$ hyperedge containing that vertex to be selected. It is NP-hard to maximize a coverage function even under a single, uniform matroid constraint, and the $\ell$-\textsc{Matchoid} problem (in our setting, this is the special case in which $f$ is an unweighted, linear function) is NP-hard when $\ell \geq 3$.
Thus, various approximation algorithms have been introduced for both coverage functions and, more generally, submodular objective functions in a variety of special cases e.g.,~\cite{Calinescu2011,Filmus2014,Fisher1978,Lee2013,Lee2010a,Nemhauser1978,Wolsey1982}. 

In this work, we study the problem from the point of view of 
\emph{fixed-parameter tractability}, in which some underlying parameter of problem instances is assumed to be a fixed-constant. A variety of fixed-parameter algorithms have been obtained for matroid constrained optimization problems, under the assumption that 
the matroids have a linear representation. Another approach is to require only that we are able to test whether or not any given set is feasible for each matroid (i.e.\ \emph{independence oracle}). This is typically the case in the \emph{streaming} setting, in which the  ground set $X$ is not known in advance but instead arrives one element at a time.  In this setting the algorithm may only store a small number of elements throughout its execution but must produce a solution for the entire instance at the end of the stream. Motivated by such settings, we consider what can be accomplished for such problems in the oracle model without access to a linear representation of the entire matroid.



\subsection{Our Contributions}
\label{sec:our-contributions}

We give FPT-algorithms for maximizing a coverage function $f$ under an $\ell$-matchoid $\cM$ of rank $k$, 
given only independence oracles for the matroids in $\cM$. Here the coverage function $f$ can be either given in the form of a value oracle, or explicitly as a family of sets over points. 
We accomplish our goal by constructing a \emph{joint $k$-representative set} for $\cM=\{M_i=(X_i,\I_i)\}_{i=1}^s$ with respect to a subset $T \subseteq X$. This is a set $R \subseteq T$ with the property that given a feasible set $B$ of $\cM$ with $|B| \leq k$ and any $e \in T \cap B$, there is some representative $e' \in R$ for $e$ so that $B-e+e'$ remains feasible in $\cM$. 
Our construction also works in the case in which each element $e \in X$ has some weight $w(e) \in \reals$, in which case we guarantee that an element $e$'s representative $e'$ has $w(e') \geq w(e)$.  Note that the set $R$ contains representatives only for those elements in $T$, but we ensure that these representatives provide valid exchanges with respect to \emph{any} sets $B_i \in \I_i$, which may include elements not in $T$. This allows us to easily employ our construction in the streaming setting, in which we can treat $T \subseteq X$ as the set of elements that is currently available to the algorithm at some time. 
Table~\ref{tab:1} gives a summary of our results. We emphasize 
that if we use any \emph{strict} subset of the parameters proposed, the problems become at least $W[1]$-hard---see Table~\ref{tab:2} for a summary and Appendix~\ref{sec:hardn-results-altern} for further details.

\begin{table}
\centering
\setlength{\tabcolsep}{0.25em}
\begin{tabular}{lcl@{\hskip 1em}lcc}
\toprule
Objective & Params & Kernel Size $(\ell = 1)$ & Kernel Size $(\ell > 1)$ & Type & Theorem \\
\midrule
Linear & $\ell,k$ & $k$ & $\cO(\ell^{(k-1)\ell})$ & D & Thm~\ref{thm:main-streaming}\\[0.5ex]
Unweighted Coverage (Oracle) & $\ell,z$ & 
$\cO(2^{(z-1)^2}z^{2z+1})$ &
$\cO(2^{(z-1)^2}\ell^{z(z-1)\ell}z^{z+1})$ & D & Thm~\ref{thm:main-oracle} \\
Weighted Coverage (Explicit) & $\ell,z$ & 
$(4e)^z\ln(\epsilon^{-1})$ &
$\cO((4e)^z\ell^{(z-1)\ell}\ln(\epsilon^{-1}))$ & R & Thm~\ref{thm:streaming-coverage} \\
Weighted Coverage (Explicit) & $\ell,z$ & 
$2^{\cO(z)}z\log^2(m)$ &
$2^{\cO(z)}\ell^{(z-1)\ell}\log^2(m)$ & D & Thm~\ref{thm:streaming-coverage} \\


 \bottomrule\\
\end{tabular}
\caption{A summary of our results.  All problems are constrained by an $\ell$-matchoid $\cM$ of rank $k$, and for coverage problems $z$ denotes the number of points covered in some optimal solution. The kernel size is stated as number of elements. In the last row, $m$ refers to the size of the underlying universe of the coverage function (i.e., the number of points). 
In the second last column, D indicates a deterministic algorithm and R a randomized algorithm with success probability $(1-\epsilon)$. In the offline setting, our algorithms require a number of independence oracle queries at most $n$ times the stated bounds on the kernel size.
}
\label{tab:1}
\end{table}

\begin{table}
\setlength{\tabcolsep}{0.5em}
\centering
\begin{tabular}{lccc}
 \toprule
 Objective & Params & Hardness  & Source \\
 \midrule
 Linear & $\ell$ &Para-NP-hard & \cite{DBLP:conf/coco/Karp72} \\
  Linear & $k$ &W[1]-hard & \cite{10.5555/2568438} \\
  Unweighted Coverage (Explicit) & $\ell, k$ & W[2]-hard & \cite{DBLP:journals/ita/BonnetPS16} \\
Unweighted Coverage (Explicit) & $z$ & W[1]-hard & \cite{10.5555/2568438} \\
  General Submodular (Oracle) & $\ell,k,f(\opt)$ & $\max\left(\Omega(n^k), \Omega(n^{f(\opt)/2})\right)$ queries &
 \cite{Jensen82,Lovasz1981} \\
\bottomrule\\
\end{tabular}
\caption{Hardness results for subsets of the parameters we consider. Here, $\ell$ is the number of matroids defining our $\ell$-matchoid and $k$ is the size of the solution. For coverage functions, $z$ is the number of points to cover. For the first four hardness results, see Appendix~\ref{sec:hardn-results-altern} for the reductions. For the last result, see discussion below.}
\label{tab:2}
\end{table}


As a warm-up, we show that a simple, combinatorial branching procedure is sufficient to produce a kernel  
of size $\Gamma_{\ell,k} \triangleq \sum_{q = 0}^{(k-1)\ell}\ell^q$ for the general, weighted \textsc{$\ell$-Matchoid} problem, parameterized by $\ell$ and the rank $k$ of the $\ell$-matchoid. This set can be computed \emph{deterministically} using $\Gamma_{\ell,k}\cdot |X|$ independence oracle queries plus the time required to sort the elements of $X$ by weight. To see this result in a larger context, we point out that for fixed-parameterized tractability, the \textsc{$\ell$-Matchoid} problem
is in a sense the most general problem one can handle using only an independence oracle, as the ``equivalent'' 
\textsc{$\ell$-Matroid Parity} problem cannot be solved with such an oracle. 

More precisely, in the \textsc{$\ell$-Matroid Parity} 
problem, we are given disjoint blocks of $k$ elements whose union must be independent in a single matroid. Although both this problem and the \textsc{$\ell$-Matchoid} are reducible to one another~\cite{Kaparis20,LovaszPlummer},
Lov\'asz~\cite{Lovasz1981} and Jensen and Korte~\cite{Jensen82} 
show that even when $\ell=2$, any algorithm finding $k$ blocks whose union is independent (implying a solution of value $2k$ in our setting) needs $\Omega(n^{k})$ independence queries. Therefore, our results give a new 
separation between the \textsc{$\ell$-Matchoid} and the \textsc{Matroid $\ell$-Parity} problems in the parameterized setting. To reconcile this apparent contradiction, we note that the classical reduction between these problems takes an instance of \textsc{Matroid $\ell$-Parity} with optimal solution $k$ to an \textsc{$\ell$-Matchoid} instance with optimal solution $n + k$. Thus, the given lower-bound for \textsc{Matroid $\ell$-Parity} indeed does not apply if we parameterize by $k$. It is also interesting to observe how critically the linear representability of matroids affects the tractability. Marx~\cite{DBLP:journals/tcs/Marx09} gives a randomized FPT-algorithm for \textsc{Matroid $\ell$-Parity}, parameterized by $\ell$ and $k$, thus showing that it is possible to circumvent the lower bound of~\cite{Jensen82,Lovasz1981} when the matroid is linear. 


Building on this, our main result considers the parameterized
\textsc{Maximum $(\cM,z)$-Coverage} problem, in which now $f$ is a general coverage function and we must select 
$S \subseteq X$ that is feasible for $\cM$ and covers either $z$ points or, in the weighted variant, $z$ points of maximum weight. We obtain FPT-algorithms for this problem, parameterized by $\ell$ and $z$ (Theorems~\ref{thm:main-oracle} and~\ref{thm:streaming-coverage}). Here, it is known that parameterizing by $\ell$ and $k$ causes the problem to be at least $W[2]$-hard. Coverage functions often serve as a motivating example for the study of submodular functions, and so it is tempting to ask whether one might obtain a similar FPT algorithm for an arbitrary submodular function by parameterizing by $z = f(\opt)$. However, this is impossible due to the aforementioned lower bound of~\cite{Jensen82,Lovasz1981}. To see this, observe that the objective for unweighted \textsc{Matroid 2-Parity} is a 2-polymatroid rank 
function, which is submodular. The lower bound construction of \cite{Jensen82,Lovasz1981} thus can be interpreted 
as follows: given a submodular function $f: 2^X \rightarrow \posints$, computing a set $S$ with $|S|\leq k$ and $f(S) \geq 2k$ requires $\Omega(n^{k})$ queries. In our context, by setting 
$z=2k$ implies that we need $\Omega(n^{z/2})$ queries even when $\cM$ is a single, uniform matroid.

In order to make this distinction rigorous, we again encounter questions of representation---for a general submodular function one must typically assume that the objective $f$ is available via a \emph{value oracle} that, for any set $S$, returns the value $f(S)$. In contrast, for coverage functions, $f$ can be given explicitly as a family of sets over the points in the universe, which may provide additional information not available in the value oracle model. In our main result, we show that it is possible to obtain a fixed-parameter algorithm for \textsc{Maximum $(\cM,z)$-Coverage} even when $f$ is given only as a value oracle, which reports only the number of points covered 
by a set $S \subseteq X$. This algorithm is technically the most demanding part of this paper. Here the lack of point representation requires 
a sophisticated data structure to store the elements properly. Moreover, we need to guarantee that the stored elements 
are not only compatible with the rest of the elements in the optimal solution under the $\ell$-matchoid, but 
also cover the points that are ``diffuse'' enough (so that at least one of them covers the points that are not 
already covered by the rest of the optimal solution). The latter goal is achieved by an extensive use of the 
joint $k$-representative sets. 

As a result, we demonstrate a new separation between what is possible for 
an arbitrary, integer-valued submodular function and an unweighted coverage function in the context of fixed-parameter tractability. A similar separation between coverage functions and arbitrary submodular functions was shown by Feige and Tennenholtz~\cite{Feige2017} 
and by Dughmi and Vondr\'ak~\cite{DBLP:journals/geb/DughmiV15} in different contexts. In the former, a separation of the 
approximability between a general and a submodular function is shown under a single uniform matroid, but applies only in a restricted setting in which the algorithm may only query the value of sets of size \emph{exactly} $k$. In the latter, 
a separation is established in the specific setting of truthful mechanism design. Our results imply a clean separation between what is possible for a coverage function and a general submodular function in the setting of fixed-parameter tractability, without 
any restriction on what types of sets the algorithm can query. Although our algorithm has rather high space and time complexity, 
here we emphasize that its main interest lies in the above theoretical implications. In Section~\ref{sec:impr-algor-expl}, we give more efficient randomized and deterministic algorithms via a color-coding technique when we have explicit access to the underlying representation of a coverage function. These algorithms work even in the general, weighted case.

All of our algorithms may be implemented in the streaming setting as well, which results in new consequences in the recently introduced setting of \emph{fixed-parameter streaming} 
algorithms~e.g., see \cite{DBLP:conf/iwpec/ChitnisC19,Chitnis2016,DBLP:conf/soda/ChitnisCHM15,DBLP:conf/mfcs/FafianieK14} and the references therein. Here, the idea is to allow the \emph{space} available for a streaming algorithm to scale as $g(p)\poly(\log n)$, where $p$ is a parameter. Just as the original motivation of fixed-parameterized complexity is
to identify the parameters that cause a problem to have large running time, here we want to identify the parameters that cause a problem to require large space. 

Recently, Feldman et al.~\cite{feldman2020} showed that given an unweighted coverage function $f$ and a uniform matroid of rank $k$ as contraint, a streaming algorithm attaining an approximation ratio of $1/2+\epsilon$ must use memory $\Omega(\epsilon n/k^3)$.
Our Theorem~\ref{thm:main-oracle} implies that one can circumvent this lower bound by parameterizing by the solution value $z$.  
As before, it is natural to ask when $f$ is an arbitrary submodular function in the value oracle model, can one solve the problem 
using the same amount of space? However, Huang et al.~\cite{DBLP:journals/corr/abs-2002-05477}, shows that to obtain the approximation ratio of $2 - \sqrt{2} + \epsilon$, one requires $\Omega(n/k^2)$ space, under 
a uniform matroid of rank $k$. This lower-bound construction uses a submodular function $f : 2^X \to \posints$ whose maximum value is $\cO(k^2)$. It implies that it is impossible to obtain an exact solution by storing only $g(z)$ elements, where $z=\cO(k^2)$, 
even for a single uniform matroid. Theorem~\ref{thm:main-oracle} thus 
again shows a separation in the space complexity between an arbitrary submodular function and a coverage function in the parameterized streaming setting. 

\subsection{Related Work}

Marx was the first to initiate the study of \textsc{Matroid $\ell$-Parity} from the perspective of fixed-parameterized 
tractability~\cite{DBLP:journals/tcs/Marx09}, using the idea of representative families. Fomin et al.~\cite{DBLP:journals/jacm/FominLPS16} gave an improved algorithm for constructing representative families that, when combined with the techniques from~\cite{DBLP:journals/tcs/Marx09}, leads to a randomized FPT algorithm for \emph{weighted} \textsc{Matroid $\ell$-Parity} in linear matroids. Their algorithm was subsequently derandomized by Lokshtanov et al.~\cite{DBLP:journals/talg/LokshtanovMPS18}, by showing that truncation can be performed on the representation of a linear matroid deterministically. 

The above results presume that the matroids in question are linearly representable. A related issue is how to compute the linear representation of such a matroid efficiently. Deterministic algorithms for finding linear representation of transversal matroids and gammoids are given by Misra et al.~\cite{MPRS2020} and Lokshtanov et al.~\cite{Lokshtanov2018}. The notion of 
\emph{union representation}, a generalization of linear representation, is also introduced in~\cite{Lokshtanov2018}.  

In another line of work, van Bevern et al.~\cite{DBLP:conf/ciac/BevernTZ19} considered a matroid constrained variant of facility location. Their approach can be shown to yield an FPT algorithm for weighted coverage functions subject to $\ell$ matroid constraints. 
Their algorithm requires linear representation for the underlying matroid when $\ell >1$. Moreover, as their approach is 
involved and uses an offline algorithm for 2-matroid intersection, it is unclear if it can be applied in the streaming setting without further insights.


The \textsc{Maximum $k$-Coverage} problem is an extensively studied special case of our problem, when $\cM$ is a single uniform matroid of rank $k$. Although this problem is known~\cite{Feige1998,Nemhauser1978} to be NP-hard to approximate beyond $(1-1/e)$, it is FPT when parameterized by the number $z$ of points to cover~\cite{DBLP:journals/ipl/Blaser03} or by the maximum of $k$ and the size of the largest set in $X$~\cite{DBLP:journals/ita/BonnetPS16}. 
However, it is $W[2]$-hard when parameterized by $k$ alone and $W[1]$-hard when parameterized by $k$ and the maximum number of sets any point appears in (sometimes called the \emph{frequency} of a point)~\cite{DBLP:journals/ita/BonnetPS16}, and FPT approximation schemes are known~\cite{DBLP:journals/iandc/Skowron17,DBLP:journals/jair/SkowronF17} when the maximum frequency is bounded. Recently, Manurangsi~\cite{DBLP:conf/soda/Manurangsi20} has shown that the problem cannot be approximated to 
better than $(1 - 1/e)$ in FPT time when parameterized by $k$, assuming the Gap-ETH. 

In the streaming setting, approximation algorithms for \textsc{Maximum $k$-Coverage} and more generally submodular optimization under special cases of the $\ell$-matchoid constraint were given in~\cite{Badanidiyuru2014a,DBLP:conf/spaa/BateniEM17,DBLP:journals/mp/ChakrabartiK15,DBLP:conf/icalp/ChekuriGQ15,DBLP:conf/nips/FeldmanK018,GJS2021,Huang20,DBLP:conf/icml/0001MZLK19,DBLP:journals/siamcomp/KorulaMZ18,LW2021,MTV2021,DBLP:journals/mst/McGregorV19,DBLP:conf/icml/Norouzi-FardTMZ18,DBLP:conf/sdm/SahaG09}. Recently, McGregor et al.~\cite{MTV2021} gave 
streaming exact and approximate algorithms for \textsc{Maximum $k$-Coverage}, as well as for the variant in which the goal is to maximize the number of points covered by \emph{exactly} one set (where, again, we may choose any collection of at most $k$ sets). Their algorithm stores $O(d^{d+1}k^d)$ elements, where $d$ is the maximum value of $f(e)$. For comparison, we parameterize by the total number of $z$ of points to be covered and use $2^{O(z)}$ space for a general matroid constraint (Thm~\ref{thm:streaming-coverage}) in the same explicit model.

\section{Preliminaries}
\label{sec:preliminaries}

Henceforth, we will use $A + e$ and $A - e$ to denote the sets $A \cup \{e\}$ and $A \setminus \{e\}$, respectively. For a set function $f : 2^X \to \mathbb{R}$, a set $A \subseteq X$ and an element $e \in X \setminus A$ we also use the shorthands $f(e)$ to denote $f(\{e\})$ and $f(e | A)$ to denote $f(A+e) - f(A)$.

A \emph{matroid} $M=(X,\I)$ over ground set $X$ is given by a family $\I \subseteq 2^X$ of \emph{independent sets} such that: (1) $\emptyset \in \I$, (2) $\I$ is downward closed: for all $A \subseteq B \subseteq X$, $B \in \I$ implies that $A \in \I$, and (3) $\I$ satisfies the augmentation property: if $A, B \in \I$ with $|A| < |B|$, there is some $e \in B \setminus A$ such that $A + e \in \I$. Here, we assume that matroids are given by an \emph{independence oracle}, which, when given a query set $A$, answers whether or not $A \in \I$.

For any set $A \subseteq X$, the \emph{rank} of $A$ in $M$ is given by $\rank_M(A) = \max\{|B|\,:\,B \subseteq A, B \in \I\}$. That is, $\rank_M(A)$ is the size of the largest independent set contained in $A$, and the rank of $M$ is simply $\rank_{M}(X)$, which is the common size of all maximal independent sets.

Here we will primarily work with the characterization of matroids in terms of \emph{spans}. Formally, the span of $A$ in $M$ is defined as $\spa_M(A) = \{e \in X\,:\,\rank_M(A +e) = \rank_M(A) \}$. Note that for any $T \subseteq X$, we have $T \subseteq \spa_M(T)$, and for independent $T \in \I$, $\rank_M(\spa_M(T)) = |T|$. Additionally, for $T \in \I$, $\spa_M(T) = T \cup \{e \in X \setminus T\,:\,T+e \not\in \mathcal{I}\}$. Thus, it is straightforward to compute the span of an independent set $T$ by using an independence oracle for $M$. The following additional facts will be useful in our analysis:
\begin{proposition} Let $M=(X,\I)$ be a matroid. Then,
\begin{enumerate}[topsep=0pt,itemsep=-1ex,partopsep=1ex,parsep=1ex]
\item For any sets $S,T \subseteq X$, if $S \subseteq \spa_M(T)$, then $\spa_M(S) \subseteq \spa_M(T)$. 
\item For any $S,T \in \I$ with $S \subseteq \spa_M(T)$ and $|S| = |T|$,  $\spa_M(S) = \spa_M(T)$.
\end{enumerate}
\label{prop:span}
\end{proposition}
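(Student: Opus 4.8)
The plan is to reduce both statements to the case where $S$ and $T$ are independent, and then to lean on the two facts already noted in the surrounding text: that $T \subseteq \spa_M(T)$ for every $T$, and that $\rank_M(\spa_M(B)) = |B|$ whenever $B \in \I$. The workhorse will be an auxiliary claim: $\spa_M(T) = \spa_M(B)$ for every maximal independent subset $B \subseteq T$ (in other words, replacing a set by one of its bases does not change its span). Granting this, part~(1) becomes the assertion that $\spa_M(A) \subseteq \spa_M(B')$ for independent $A,B'$ with $A \subseteq \spa_M(B')$ (apply the claim to bases $B_S$ of $S$ and $B_T$ of $T$, noting $B_S \subseteq S \subseteq \spa_M(T) = \spa_M(B_T)$), and part~(2) will follow by applying part~(1) twice, once with the roles of $S$ and $T$ exchanged.

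First I would record two trivial consequences of the matroid axioms: monotonicity of $\rank_M$, and the fact that for independent $B_1 \subseteq B_2$ one has $\spa_M(B_1) \subseteq \spa_M(B_2)$ — immediate from the description $\spa_M(B) = B \cup \{e : B+e \notin \I\}$ together with downward-closedness (a superset of a dependent set is dependent). Next I would prove the auxiliary claim. Given a maximal independent $B \subseteq T$: for $e \in \spa_M(B)$ with $e \notin B$ we have $B+e \notin \I$, and if we had $\rank_M(T+e) > \rank_M(T) = |B|$, augmenting $B$ inside $T+e$ would produce some $f \in (T+e)\setminus B$ with $B+f \in \I$; since $f \ne e$, this contradicts maximality of $B$ in $T$, so $\rank_M(T+e) = \rank_M(T)$ and $e \in \spa_M(T)$. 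Conversely, any $e \in \spa_M(T)$ satisfies $\rank_M(T+e) = \rank_M(T) = |B|$, so $B$ remains a maximal independent subset of $T+e$, forcing $B+e \notin \I$ (when $e \notin B$) and hence $e \in \spa_M(B)$. Combined with $\rank_M(\spa_M(B)) = |B|$, this also yields $\rank_M(\spa_M(T)) = \rank_M(T)$ for arbitrary $T$.

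With these in hand, part~(1) reduces to showing $\spa_M(A) \subseteq \spa_M(B')$ for independent $A, B'$ with $A \subseteq \spa_M(B')$. Since $\rank_M(\spa_M(B')) = |B'|$ we have $|A| \le |B'|$, so we may extend $A$ to a maximal independent subset $\tilde A$ of $\spa_M(B')$; then $|\tilde A| = |B'| = \rank_M(\spa_M(B'))$, so both $\tilde A$ and $B'$ are maximal independent subsets of $\spa_M(B')$, and the auxiliary claim applied to the set $\spa_M(B')$ gives $\spa_M(\tilde A) = \spa_M(\spa_M(B')) = \spa_M(B')$. Since $A \subseteq \tilde A$ with both independent, $\spa_M(A) \subseteq \spa_M(\tilde A) = \spa_M(B')$. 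For part~(2), part~(1) already gives $\spa_M(S) \subseteq \spa_M(T)$; for the reverse, $S \in \I$, $S \subseteq \spa_M(T)$, and $|S| = |T| = \rank_M(T) = \rank_M(\spa_M(T))$ show $S$ is a maximal independent subset of $\spa_M(T)$, so for each $e \in T \subseteq \spa_M(T)$ either $e \in S \subseteq \spa_M(S)$ or $S+e \notin \I$ and again $e \in \spa_M(S)$; hence $T \subseteq \spa_M(S)$, and part~(1) with $S$ and $T$ swapped gives $\spa_M(T) \subseteq \spa_M(S)$.

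The main obstacle is the auxiliary ``span is unchanged by passing to a basis'' claim: this is the only place the augmentation axiom is genuinely used, and it must be proved from scratch, the text having supplied only $\rank_M(\spa_M(B)) = |B|$ for independent $B$ and the explicit form of $\spa_M$ on independent sets. A secondary subtlety is that the chosen bases $B_S$ of $S$ and $B_T$ of $T$ need not be nested, so monotonicity of span cannot be invoked directly; the remedy is the ``extend inside $\spa_M(B')$'' step above, which converts the comparison into one between nested independent sets. Everything else is routine bookkeeping with ranks and the matroid axioms.
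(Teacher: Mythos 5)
Your proof is correct, but it is organized quite differently from the paper's. The paper does not prove part~(1) at all --- it cites it as a standard matroid fact (Schrijver, Theorem~39.9) --- and proves part~(2) by a one-step contradiction: if $e \in \spa_M(T) \setminus \spa_M(S)$ then $S+e \in \I$ and $S + e \subseteq \spa_M(T)$, contradicting $\rank_M(\spa_M(T)) = |T|$. You instead build everything from first principles via the auxiliary lemma that passing from a set to a maximal independent subset does not change the span, use it to reduce part~(1) to nested independent sets (your ``extend $A$ inside $\spa_M(B')$'' step), and then obtain part~(2) by showing $T \subseteq \spa_M(S)$ and invoking part~(1) twice; the key inequality is still the same one the paper uses, namely that $\spa_M(T)$ cannot contain an independent set of size $|T|+1$. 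Your route buys self-containedness --- part~(1) is actually proved from the axioms, which the paper leaves to a citation --- at the cost of carrying the basis-replacement lemma and the (standard, and correctly glossed-as-routine) fact that maximal independent subsets attain the rank; the paper's route is shorter because it outsources part~(1) and exploits the hypothesis $S,T \in \I$ directly in part~(2) rather than passing back through part~(1). Both arguments are sound.
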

\begin{proof}
The first claim is well-known (see e.g.~\cite[Theorem 39.9]{Schrijver2003}). For the second, note that since $S \subseteq \spa_M(T)$, we must have $\spa_M(S) \subseteq \spa_M(T)$ by the first claim. Suppose for the sake of contradiction that there is some element $e \in \spa_M(T) \setminus \spa_M(S)$. Then, $S+e \in \I$. Moreover, $S \subseteq \spa_M(S) \subseteq \spa_M(T)$, so $S+e \subseteq \spa_M(T)$. However, this means that $\spa_M(T)$ contains an independent set $S+e$
of size $|S|+1 = |T|+1$, and so $\rank_M(\spa_M(T)) \geq |T| + 1 > |T|$---a contradiction.
\end{proof}

Recall that we define an $\ell$-matchoid on $X$ as a collection $\cM=\{M_i=(X_i,\I_i)\}_{i = 1}^s$ of matroids, where each $X_i \subseteq X$ and any $e \in X$ appears in at most $\ell$ of the $X_i$. For every element $e \in X$, we let $X(e)$ denote the collection of the (at most $\ell$) ground sets $X_i$ with $e \in X_i$. We say that a set $S \subseteq X$ is \emph{feasible} for $\cM$ if $S \cap X_i \in \I_i$ for all $1 \leq i \leq s$. The \emph{rank} of an $\ell$-matchoid $\cM$ is the maximum size of a feasible set for $\cM$.
We suppose without loss of generality that for each element $e \in X$, $\{e\} \in \cI_i$ for all $X_i \in X(e)$, (i.e.\ none of the matroids in $\cM$ has loops), in other words, $\{e\}$ is feasible in $\cM$. Note that any element $e$ for which this is not the case cannot be part of any feasible solution and so can be discarded.

\section{Joint $k$-Representative Set}
\label{sec:our-main-constr}

Our main construction will involve the following notion of a representative set for a  collection $\cM = \{M_i=(X_i,\I_i)\}_{i = 1}^s$ of matroids.
\begin{definition}[Joint $k$-representative set for $(T,\cM,w)$]
\label{def:jointrepset}

Let $X$ be a set and suppose that each element $e \in X$ has some weight $w(e) \in \reals$. Let $\cM = \{M_i=(X_i,\I_i)\}_{i=1}^s$ be a matchoid with $X_i \subseteq X$ for all $1 \leq i \leq s$. Finally, let $T$ be a fixed subset of $X$. 

We say that some subset $R \subseteq T$ is a \emph{joint $k$-representative set for $(T,\cM,w)$} if for any feasible set $\cB$ of $\cM$, with $|\cB|\leq k$, and for any element $b \in T \cap \cB$, there exists some $e \in R$ with $w(e) \geq w(b)$ and 
$\cB-b+e$ feasible for $\cM$. 
\end{definition}

Note that such a joint $k$-representative set $R \subseteq T$ has the property that for any feasible solution $O$ of size at most $k$ in $\cM$, and each $b \in O \cap T$: either $R$ contains $b$ already (in which case, we let $e = b$), or $R$ contains some other element $e$ so that $O-b + e$ remains feasible and the new weight $w(O-b+e) \geq w(O)$. As we show in Theorem~\ref{thm:matchoid-offline}, if $R$ is a joint $k$-representative set for $X$, then it then follows that $R$ must contain a feasible solution of size at most $k$ with total weight at least as large as any feasible set $O \subseteq X$ with size at most $k$.


We now give an algorithm for computing a joint $k$-representative set for $(T,\cM,w)$. Our main procedure, $\RepSet$ is presented in Algorithm~\ref{alg:guess}. We suppose that we are given access to independence oracles for all matroids in $\cM$, as well as a weight function $w : X \to \reals$.
In order to compute a joint $k$-representative set for $(T,\cM,w)$,  the procedure $\RepSet(T)$ makes use of an auxiliary procedure $\Guess$ that takes a pair $(J,Y)$ as input. 
The first input $J$ to $\Guess$ is a multi-dimensional set $J=(J_1,\cdots, J_s)$, where each $J_i \subseteq X_i$ for $1 \leq i \leq s$, and the second input is a subset $Y \subseteq T$.
For a multi-dimensional set $J$, we define $\|J\| \triangleq \sum_{i=1}^{s}|J_i|$, and let $J +_i e$ denote the multi-dimensional set obtained from $J$ by adding $e$ to the set $J_i$. That is, $J +_{i} e = (J_1,\ldots, J_{i-1},J_i+e,J_{i+1},\ldots, J_s)$. Given a pair of inputs $(J, Y)$, the procedure $\Guess$ first selects a maximum weight element $e$ of $Y$ and adds $e$ to the output set $\outp$. If $\|J\| < (k-1)\ell$, it then considers each of the matroids $M_i = (X_i,\I_i)$ for which $X_i \in X(e)$. For each of these, it makes a recursive call in which $e$ has been added to the corresponding set $J_i$ of $J$ and all elements of $Y$ spanned by $J_i+e$ in $M_i$ have been removed from $Y$.

\begin{algorithm}[h]
\KwIn{parameters $k, \ell$, independence oracles for $\ell$-matchoid $\cM = \{M_i\}_{i=1}^s$ of rank $k$, weight function $w: X \to \reals$.}
\smallskip
\myproc{$\RepSet(T)$}{
  \Return the output of $\Guess((\emptyset,\ldots,\emptyset),T)$\;
}
\myproc{$\Guess(J=(J_1,\cdots, J_s),Y)$}{
\lIf{$Y = \emptyset$}{\Return $\emptyset$}
Let $e = \arg\max_{a \in Y} w(a)$\;
$\outp=\{e\}$\;
\If{$\|J\| < (k-1)\ell$}{
\ForEach{$X_i \in X(e)$}{
 Define $Y_i = Y \backslash \spa_{M_i}(J_i+e)$\label{li:guess-span}\;
 $\outp = \outp \cup \Guess(J +_{i} e , Y_i)$\label{li:guess-rec}\;
}

}
\Return $\outp$\;
}
\caption{FPT-algorithm}
\label{alg:guess}
\end{algorithm}

In our analysis, it will be helpful to consider the tree of recursive calls to $\Guess$ made during the execution of $\RepSet(T)$. Each node in this tree corresponds to some call $\Guess(J,Y)$, where $Y \subseteq T$ and $J$ is a multi-dimensional set. 

The following proposition is a rather straightforward consequence of our algorithm. 
\begin{proposition} For any call $\Guess(J,Y)$ in the tree of recursive calls made by $\RepSet(T)$,
\begin{enumerate}[topsep=0pt,itemsep=-1ex,partopsep=1ex,parsep=1ex]
\item $J_i \in \I_i$ for $1 \leq i \leq s$. 
\item $e \in Y$ if and only if $e \in T$ and for every $X_i \in X(e)$, $e \not \in \spa(J_i)$. 
\end{enumerate}
\label{pro:input}
\end{proposition}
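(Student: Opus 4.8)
The plan is to prove both items together by induction on the depth of a call $\Guess(J,Y)$ in the recursion tree rooted at $\Guess((\emptyset,\dots,\emptyset),T)$, handling item~2 first (it mentions only the sets $J_i$ and $Y$, not their independence) and then reading off item~1 from it. I interpret $\spa(J_i)$ in the statement as $\spa_{M_i}(J_i)$, and I understand item~2 to assert, for \emph{every} $a \in X$, that $a \in Y$ holds exactly when $a \in T$ and $a \notin \spa_{M_i}(J_i)$ for all $X_i \in X(a)$; I use $a$ for this generic element, keeping $e$ for the maximum-weight element picked inside a call.

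For the base case at the root, $J=(\emptyset,\dots,\emptyset)$ and $Y=T$: item~1 is just the matroid axiom $\emptyset\in\I_i$, and for item~2 I would invoke the no-loops assumption of Section~\ref{sec:preliminaries}, which makes $\spa_{M_i}(\emptyset)=\emptyset$, so the span condition is vacuous and the equivalence collapses to $a\in T \iff a\in T$. For the inductive step of item~2, I consider a child $\Guess(J+_i e,\,Y\setminus\spa_{M_i}(J_i+e))$ with $X_i\in X(e)$, and the only structural fact I need is monotonicity of span, $\spa_{M_i}(J_i)\subseteq\spa_{M_i}(J_i+e)$, a consequence of Proposition~\ref{prop:span}(1). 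I would then split on whether the generic element $a$ lies in $X_i$: if $a\notin X_i$ then $a$ is neither removed from $Y$ when forming the child nor affected in any of its span conditions, so item~2 transfers verbatim from the parent; if $a\in X_i$ then $a$ survives into the child's set exactly when it was in $Y$ and is not in $\spa_{M_i}(J_i+e)$, and combining the parent's characterization with the inclusion $\spa_{M_i}(J_i)\subseteq\spa_{M_i}(J_i+e)$ shows this is precisely the characterization claimed for the child. The inductive step of item~1 is then short: coordinates $J_j$ with $j\neq i$ are unchanged and independent by hypothesis, while for coordinate $i$ the picked element $e\in Y$ satisfies $e\notin\spa_{M_i}(J_i)$ by item~2 applied to the parent, so $\rank_{M_i}(J_i+e)=|J_i|+1$ and hence $J_i+e\in\I_i$.

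The main thing to be careful about is the case analysis for item~2: one must verify that the elements deleted from $Y$ on the way to a child — namely those in $\spa_{M_i}(J_i+e)$ — are exactly the ones whose span condition with respect to $M_i$ changes when $e$ is added to $J_i$, and it is the inclusion $\spa_{M_i}(J_i)\subseteq\spa_{M_i}(J_i+e)$ that makes this bookkeeping go through. Everything else is a direct unwinding of the definition of $\Guess$ and of matroid span.
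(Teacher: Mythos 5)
Your proof is correct: the paper states this proposition as immediate from the definition of $\Guess$ and the fact that the root call has $J=(\emptyset,\ldots,\emptyset)$, and your induction on the depth of the recursion tree is exactly the careful unwinding of that observation, including the right use of span monotonicity ($\spa_{M_i}(J_i)\subseteq\spa_{M_i}(J_i+e)$, via Proposition~\ref{prop:span}(1)) for the backward direction of item~2. You also correctly note the one point where an assumption is silently used, namely that the no-loops convention makes the span conditions vacuous at the root.
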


\begin{proof} We prove the proposition to be true by induction on the depth of the tree node corresponding to a call $\Guess(J,Y)$.
In the root, the proposition holds trivially as $J= (\emptyset,\ldots,\emptyset)$. Consider now a non-root node corresponding 
to some call $\Guess(J,Y)$. Such a call is invoked by the parent node corresponding to some other call $\Guess(J',Y')$, 
where $J = J' +_{i} e$ for some $e \in Y'$ and $X_i \in X(e)$. 

For all $i' \neq i$, $J_{i'}=J'_{i'} \in \mathcal{I}_{i'}$, by induction hypothesis. For $J_i$, we note that by the induction hypothesis, $e \not \in \spa(J'_i)$ and $J'_i \in \cI_i$. Thus, $J_i = J'_i+e \in \mathcal{I}_i$ and part (1) of the proposition is proved. For part (2), by induction hypothesis, $e \in Y'$ if and only if $e \in T$ 
and for every $X_i \in X(e)$, $e \not \in \spa(J_i)$ and by the operation of the algorithm, $e \in Y' \backslash Y$ if and only if $e \in Y'$ and $e \in \spa_{M_i}(J'_i + e)$. The proof then follows.
\end{proof}
\noindent
Note that we can compute $Y \backslash \spa_M(J_i+e)$ in line~\ref{li:guess-span} of Algorithm~\ref{alg:guess} by using at most $|Y| \leq |T|$ independence oracle calls for $M_i$. Each call to $\RepSet$ will result in several recursive calls to $\Guess(J,Y)$. In our analysis, it will be useful to consider inputs $J,Y$ that satisfy the following property:

\begin{definition}
\label{def:legitimate}
Given a feasible set $\cB$ in the matchoid $\cM$ and $b \in \cB \cap T$, 
we call a pair of inputs $(J,Y)$ \emph{legitimate for $(\cB,b)$} 
if $b \in Y$ and $J_i \subseteq \spa_{M_i}(B_i - b)$, where $B_i = \cB \cap X_i$, for all $1 \leq i \leq s$. 
\end{definition}

\noindent Using this definition, we now formally analyze the behavior of our algorithm.

\begin{lemma} Suppose that the input $(J,Y)$ to the call $\Guess$ is legitimate for $(\cB, b)$. Consider the element $e = \arg\max_{a \in Y}w(a)$ selected in this call to \Guess. If $e \in \spa_{M_i}(B_i - b)$ for some $X_i \in X(e)$, then $|J_i| < |B_i - b|$.
\label{lem:eitherOr}
\end{lemma}

\begin{proof} Suppose that $e \in \spa_{M_i}(B_i - b)$ for some $X_i \in X(e)$ and assume for the sake of contradiction that $|J_i| \geq |B_i-b|$. By Proposition~\ref{pro:input}(1), $J_i \in \I_i$ and 
by definition $B_i - b \in \I_i$. Moreover, since $(J,Y)$ is legitimate for $(\cB,b)$, we have $J_i \subseteq \spa_{M_i}(B_i - b)$,
and, as $J_i$ and $B_i - b$ are independent, $|J_i| \leq |B_i - b|$. Thus, $|J_i| = |B_i-b|$.  Proposition~\ref{prop:span}(2) then implies that $\spa_{M_i}(B_i - b) = \spa_{M_i}(J_i)$. But then by Proposition~\ref{pro:input}(2), $e \not\in \spa_{M_i}(J_i)$ and so $e \not\in \spa_{M_i}(B_i - b)$---a contradiction. 
\end{proof}

\begin{lemma}\label{lem:claim-1}
Consider a legitimate input $(J,Y)$ for $(\cB,b)$, and let $e = \arg\max_{a \in Y}w(a)$. Then, $w(e) \geq w(b)$ and either $\cB-b+e$ remains feasible or there is some $X_i \in X(e)$ such that $J' = J +_i e$, $Y' = Y\setminus \spa_{M_i}(J_i+e)$ is a legitimate input for $(\cB,b)$, with $\|J'\| = \|J\| + 1$.
\end{lemma}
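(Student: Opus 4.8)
The first assertion is immediate: legitimacy of $(J,Y)$ for $(\cB,b)$ gives $b \in Y$, and $e$ maximizes $w$ over $Y$, so $w(e) \ge w(b)$. For the dichotomy I would assume $\cB - b + e$ is not feasible for $\cM$ and then produce the promised index. By definition of feasibility there is a matroid $M_j = (X_j,\I_j)$ with $(\cB - b + e) \cap X_j \notin \I_j$. Writing $B_j = \cB \cap X_j$, if $e \notin X_j$ then $(\cB - b + e)\cap X_j = B_j - b \subseteq B_j \in \I_j$, contradicting the choice of $M_j$; hence $X_j \in X(e)$ and $(\cB - b + e)\cap X_j = (B_j - b) + e$. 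Since $B_j - b \in \I_j$ by downward closure while $(B_j - b) + e \notin \I_j$, the definition of span gives $e \in \spa_{M_j}(B_j - b)$.

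Next I would verify that $J' = J +_j e$ and $Y' = Y \setminus \spa_{M_j}(J_j + e)$ is legitimate for $(\cB,b)$ with $\|J'\| = \|J\| + 1$. The containment condition of Definition~\ref{def:legitimate} is inherited from $(J,Y)$ for every $i \ne j$; for $i = j$ it follows from $J_j \subseteq \spa_{M_j}(B_j - b)$ (legitimacy of $(J,Y)$) together with $e \in \spa_{M_j}(B_j - b)$ just established, which gives $J_j + e \subseteq \spa_{M_j}(B_j - b)$. For the membership condition $b \in Y'$, I would use that $b \in Y$ already, so it suffices to show $b \notin \spa_{M_j}(J_j + e)$. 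Applying Proposition~\ref{prop:span}(1) to $J_j + e \subseteq \spa_{M_j}(B_j - b)$ yields $\spa_{M_j}(J_j + e) \subseteq \spa_{M_j}(B_j - b)$, so it is enough to observe $b \notin \spa_{M_j}(B_j - b)$: either $b \notin X_j$, and then $b$ lies outside the ground set of $M_j$ and hence outside $\spa_{M_j}(\cdot)$, or $b \in X_j$, and then $b \in B_j$ with $(B_j - b) + b = B_j \in \I_j$. Finally, $e \notin J_j$ because $X_j \in X(e)$ and $e \in Y$, so Proposition~\ref{pro:input}(2) gives $e \notin \spa_{M_j}(J_j)$, whence $e \notin J_j$ as $J_j \subseteq \spa_{M_j}(J_j)$; therefore $\|J'\| = \|J\| + 1$.

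I expect the only delicate step to be the membership claim $b \in Y'$: one must make sure that enlarging $J_j$ to $J_j + e$ and deleting its $M_j$-span from $Y$ does not remove $b$. This is precisely where I need that the span does not escape $\spa_{M_j}(B_j - b)$ (Proposition~\ref{prop:span}(1)) and that $b$ is not itself spanned by $B_j - b$ (the short case analysis on whether $b \in X_j$). Everything else is a direct unwinding of the definitions of feasibility, span, and legitimacy, and of Proposition~\ref{pro:input}; note in particular that throughout I treat $(J,Y)$ as the input to a call $\Guess(J,Y)$ in the recursion tree of $\RepSet(T)$, so that Proposition~\ref{pro:input} applies.
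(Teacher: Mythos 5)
Your proof is correct and follows essentially the same route as the paper's: inherit the containment condition of legitimacy, use $e\in\spa_{M_i}(B_i-b)$ from infeasibility, and keep $b\in Y'$ via Proposition~\ref{prop:span}(1) together with $b\notin\spa_{M_i}(B_i-b)$ (from $B_i\in\I_i$). The only differences are presentational: you unfold the infeasibility-implies-span step and the $e\notin J_i$ / $\|J'\|=\|J\|+1$ point explicitly (via Proposition~\ref{pro:input}(2)), whereas the paper cites Lemma~\ref{lem:eitherOr} at that stage, so no gap.
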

\begin{proof}
Since $(J,Y)$ is legitimate for $(\cB,b)$, we have $b \in Y$ and so $w(e) \geq w(b)$. If $\cB-b+e$ is not feasible, then 
$e \in \spa_{M_i}(B_i - b)$ for some $X_i \in X(e)$. 
By Lemma~\ref{lem:eitherOr}, $|J_i| < |B_i - b|$. 
Since $(J,Y)$ is legitimate for $(\cB,b)$, $J_i \subseteq \spa_{M_i}(B_i - b)$ and so in fact $J'_i=J_i + e \subseteq \spa_{M_i}(B_i - b)$, and for all $i' \neq i$, $J'_{i'} = J_{i'}  \subseteq \spa_{M_{i'}}(B_i-b)$. 

What remains to argue is that $b \in Y' = Y \setminus \spa_{M_i}(J_i+e)$. 
If $b \not \in B_i$, then $b \not \in X_i$, implying that $b$ cannot be part of  $\spa_{M_i}(J_i+e)$. So assume that 
$b \in B_i$. If $b \in \spa_{M_i}(J_i + e)$ then, since $J_i + e \subseteq \spa_{M_i}(B_i - b)$, Proposition~\ref{prop:span}(1) implies that $b \in \spa_{M_i}(B_i - b)$, contradicting that $B_i = \cB \cap X_i \in \mathcal{I}_i$. 
\end{proof}

We are now ready to prove our first main result: $\RepSet(T)$ constructs a joint $k$-representative set for $(T,\cM,w)$.

\begin{theorem} \label{thm:main-repset}
Consider an $\ell$-matchoid $\cM=\{M_i=(X_i, \I_i)\}_{i=1}^s$ and weight function $w : X \to \reals$. Then, for any subset $T \subseteq X$, $\RepSet(T)$ returns a joint $k$-representative set $\outp$ for $(T,\cM,w)$, with $|R| \leq \Gamma_{\ell,k} \triangleq \sum_{q = 0}^{(k-1)\ell}\ell^{q}$ using at most $\Gamma_{\ell,k}\cdot |T|$ independence oracle queries. For $\ell = 1$, $|R| \leq k$ and for $\ell > 1$, $|R| \leq \frac{\ell}{\ell-1}\ell^{(k-1)\ell} = \cO\lb\ell^{(k-1)\ell}\rb$.
\end{theorem}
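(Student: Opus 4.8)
The plan is to establish the three assertions separately --- that $R := \RepSet(T)$ is a joint $k$-representative set for $(T,\cM,w)$, that $|R| \le \Gamma_{\ell,k}$, and that the execution issues at most $\Gamma_{\ell,k}\cdot|T|$ independence queries --- and then to read off the two displayed special cases by evaluating the geometric sum $\Gamma_{\ell,k}=\sum_{q=0}^{(k-1)\ell}\ell^q$.

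\smallskip
\noindent\emph{Representativeness.} Fix a feasible set $\cB$ with $|\cB|\le k$ and an element $b\in T\cap\cB$; I must exhibit $e\in R$ with $w(e)\ge w(b)$ and $\cB-b+e$ feasible. I will follow a path descending from the root of the recursion tree of $\Guess$, maintaining the invariant that the current call $(J,Y)$ is legitimate for $(\cB,b)$ in the sense of Definition~\ref{def:legitimate}. The root $\Guess((\emptyset,\ldots,\emptyset),T)$ is legitimate since $b\in T=Y$ and $\emptyset\subseteq\spa_{M_i}(B_i-b)$ for every $i$. At a legitimate call $(J,Y)$ with selected element $e=\arg\max_{a\in Y}w(a)$, Lemma~\ref{lem:claim-1} yields $w(e)\ge w(b)$ together with a dichotomy: either $\cB-b+e$ is feasible --- in which case we stop, because this call inserts $e$ into its local output and hence $e\in R$ --- or there is an index $i$ with $X_i\in X(e)$ for which $(J+_i e,\;Y\setminus\spa_{M_i}(J_i+e))$ is again legitimate for $(\cB,b)$ with $\|J\|$ increased by one, and we move to that child. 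To see that this child is actually created, I use a uniform bound on $\|J\|$ along legitimate calls: from $J_i\subseteq\spa_{M_i}(B_i-b)$ with $J_i$ and $B_i-b$ both independent we get $|J_i|\le|B_i-b|=|(\cB-b)\cap X_i|$, so summing over $i$ and using that every element of $\cB-b$ lies in at most $\ell$ of the $X_i$ gives $\|J\|\le\sum_{e'\in\cB-b}|X(e')|\le\ell\,|\cB-b|\le\ell(k-1)$. Applying this to the legitimate child forces $\|J\|+1\le(k-1)\ell$, i.e.\ the parent satisfies $\|J\|<(k-1)\ell$, which is precisely the condition under which $\Guess$ recurses. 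Since $\|J\|$ strictly increases along the path and never exceeds $(k-1)\ell$ on legitimate calls, the path is finite and must terminate in the feasible branch, delivering the required $e\in R$.

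\smallskip
\noindent\emph{Size and query bounds.} Each invocation of $\Guess$ contributes at most one element (its selected $e$) to $R$, so $|R|$ is at most the number of nodes of the recursion tree. A node with $\|J\|=q$ has at most $|X(e)|\le\ell$ children, all with $\|J\|=q+1$, and the recursion halts once $\|J\|=(k-1)\ell$; hence there are at most $\ell^q$ nodes with $\|J\|=q$ for $0\le q\le(k-1)\ell$, and $|R|\le\sum_{q=0}^{(k-1)\ell}\ell^q=\Gamma_{\ell,k}$. For the query count, the only oracle calls occur while computing the spans $\spa_{M_i}(J_i+e)$ on line~\ref{li:guess-span}, and by the remark following Proposition~\ref{pro:input} each such span costs at most $|Y|\le|T|$ queries; within any call the number of span computations equals the number of recursive calls it makes, so over the whole execution the total number of span computations equals the number of non-root nodes, which is at most $\Gamma_{\ell,k}-1$, for a total of at most $\Gamma_{\ell,k}\cdot|T|$ queries. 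Finally $\Gamma_{1,k}=\sum_{q=0}^{k-1}1=k$, while for $\ell>1$, $\Gamma_{\ell,k}=\frac{\ell^{(k-1)\ell+1}-1}{\ell-1}\le\frac{\ell}{\ell-1}\,\ell^{(k-1)\ell}=\cO(\ell^{(k-1)\ell})$.

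\smallskip
The main obstacle I anticipate is the coupling between Lemma~\ref{lem:claim-1} and the guard $\|J\|<(k-1)\ell$: the lemma only promises a \emph{legitimate} successor, and one must argue separately that this successor really occurs in the recursion tree --- which is exactly where the bound $\|J\|\le(k-1)\ell$ for legitimate calls is needed and, in hindsight, explains the choice of threshold $(k-1)\ell$ in the algorithm. Once that is in place, counting nodes for the cardinality bound and span computations for the query bound is a routine traversal of the recursion tree.
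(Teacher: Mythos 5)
Your proof is correct and follows essentially the same route as the paper's: the root call is legitimate, Lemma~\ref{lem:claim-1} drives the descent, the bound $\|J\|\le(k-1)\ell$ for legitimate inputs (derived exactly as in the paper from $J_i\subseteq\spa_{M_i}(B_i-b)$ and the $\ell$-matchoid property) guarantees both that the recursion guard is satisfied and that the descent must end in the feasible branch, and the size/query bounds come from the same count of the recursion tree. The only difference is cosmetic --- you phrase the argument as following a single path, whereas the paper uses an induction on $\|J\|=d$ followed by a contradiction at $d=(k-1)\ell$ --- and your accounting of oracle queries and of the geometric sum matches the paper's.
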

\begin{proof}
We begin by showing that the set $\outp$ returned by $\RepSet(T)$ is a joint $k$-representative set for $(T,\cM,w)$. 
Let $\cB$ be a feasible set in $\cM$ and $|\cB|\leq k$. We need to show that for any $b \in T \cap \cB$, 
there must exist some $e \in \outp$ with $w(e) \geq w(b)$ and $\cB - b + e$ remains feasible in $\cM$. In the following we fix an arbitrary element $b \in T \cap \cB$.

First, we note that any $(J,Y)$ that is legitimate for $(\cB, b)$ must have $\|J\| \leq (k-1)\ell$. To see this, 
let $\cB^- = \cB - b$. Then $|\cB^-| \leq k - 1$. 
Since $\cM$ is an $\ell$-matchoid, each element in $\cB^-$ appears in at most $\ell$ of the sets $\cB^- \cap X_i$. 
Then, from Definition~\ref{def:legitimate} together with Proposition~\ref{pro:input}(1), a legitimate input $(J,Y)$ must have:
$\|J\| = \sum_{i = 1}^s|J_i| \leq \sum_{i = 1}^s|\cB^-\cap X_i| \leq \ell|\cB^-| \leq \ell(k - 1)$.

Now, we consider the set of all recursive calls made to $\Guess(J,Y)$ by $\RepSet(T)$. We first show that for any $0 \leq d \leq (k-1)\ell$, either an element $e$ with the desired properties is added to $\outp$ by call $\Guess(J,Y)$ with $\|J\| < d$ or there is some call to $\Guess(J,Y)$ with $(J,Y)$ legitimate for $b$ and $\|J\| = d$. We proceed by induction on $d$. Initially $\RepSet(T)$ makes a call to $((\emptyset,\ldots,\emptyset),Y=T)$, which is legitimate for $b$, since $b \in T$ by assumption. For the induction step, suppose that no call to $\Guess(J,Y)$ with $\|J\| < d < (k-1)\ell$ adds an element $e$ with the desired property to $\outp$. Then, by the induction hypothesis, there is some call to $\Guess(J,Y)$ with $(J,Y)$ legitimate for $b$ and $\|J\| = d$. Consider the element $e$ selected by this call. By Lemma~\ref{lem:claim-1}, either $e$ has the desired properties or there is some $X_i \in X(e)$ such that $J' = J +_i e$, $Y' = Y \setminus \spa_{M_i}(J_i + e)$ is legitimate for $(\cB,b)$ and $\|J'\| = \|J\| + 1 = d+1$. In the latter case, since $\|J\| < (k-1)\ell$, the procedure $\Guess(J,Y)$ will make a recursive call $\Guess(J',Y')$, for this legitimate input $(J',Y')$. This completes the proof of the induction step.

Suppose now, for the sake of contradiction, that no call to $\Guess(J,Y)$ made by $\RepSet(T)$ adds an element $e$ with the desired properties to $\outp$. Then by the claim above (with $d = (k-1)\ell$), there is some call to $\Guess(J,Y)$ with $(J,Y)$ legitimate for $(\cB,b)$ and $\|J\| = (k-1)\ell$. As this call must not have selected an element $e$ with the desired properties, Lemma~\ref{lem:claim-1} implies that there must be some $(J',Y')$ that is legitimate for $(\cB,b)$ with $\|J'\| = (k - 1)\ell + 1$, contradicting our bound on the size of any legitimate $\|J'\|$. Thus, for any arbitrary $b \in T \cap \cB$ there is indeed some $e \in \outp$ with $w(e) \geq w(b)$ and $\cB-b+e$ feasible for $\cM$ and so $\outp$ is then a joint $k$-representative set for $(T,\cM,w)$. 

Finally, we consider the complexity of the procedure $\RepSet(T)$. Consider the tree of recursive calls to $\Guess$ made by $\RepSet(T)$. Each call in this  tree contributes at most 1 additional element to the final output set. For all calls except the root, we also make at most $|Y| \leq |T|$ independence queries in Line~\ref{li:guess-span} of Algorithm~\ref{alg:guess} immediately before making this call. It follows that the total number of independence oracle queries is at most $|T|$ times the size of the recursion tree. Now, we note that each non-leaf call $\Guess(J,Y)$ of the tree has at most $\ell$ children $\Guess(J',Y')$ and for each child, $\|J'\| \geq \|J\| + 1$. Thus the depth of the recursion tree is at most $(k-1)\ell$ and so contains at most $\Gamma_{\ell,k} = \sum_{q = 0}^{(k-1)\ell} \ell^q$ calls. The stated bounds then follow.
\end{proof}
We now show that any joint $k$-representative set for $(X,\cM,w)$ can be used as a kernel for maximizing a linear function under an $\ell$-matchoid constraint $\cM$.

\begin{theorem}\label{thm:matchoid-offline}
Let $\cM=\{M_i\}_{i=1}^s$ be an $\ell$-matchoid of rank $k$. Then, the procedure $\RepSet(X)$ computes a kernel $R$ for finding a maximum weight feasible set for $\cM$ with $|R| \leq \Gamma_{\ell,k} \triangleq \sum_{q = 0}^{(k-1)\ell}\ell^q$. The procedure requires the time to make $\Gamma_{\ell,k}\cdot |X|$ independence oracle queries, plus the time required to sort the elements of $X$ by weight.
\end{theorem}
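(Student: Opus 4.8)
The plan is to show that a joint $k$-representative set $R$ for $(X,\cM,w)$ is a \emph{kernel}: the maximum weight of a feasible set of $\cM$ using only elements of $R$ equals the maximum weight of a feasible set of $\cM$ over all of $X$. Since $R \subseteq X$, any feasible set contained in $R$ is also feasible in $X$, so one direction is trivial; the work is in the other direction. The size bound and query bound come directly from Theorem~\ref{thm:main-repset} applied with $T = X$, together with the observation that a maximum-weight feasible set has size at most $k = \rank(\cM)$, so it suffices to build a joint $k$-representative set. Sorting $X$ by weight lets $\Guess$ find $\argmax_{a \in Y} w(a)$ quickly.

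For the nontrivial direction, let $O$ be a maximum weight feasible set of $\cM$; since removing elements only helps feasibility and weights may be negative we may first discard from $O$ any element of negative weight, and then note $|O| \le k$. I would argue that $O$ can be transformed into a feasible set $O' \subseteq R$ with $w(O') \ge w(O)$ by repeatedly swapping out elements of $O \setminus R$ for representatives in $R$. Concretely, perform the following exchange step: while $O \not\subseteq R$, pick any $b \in O \setminus T'$ where here $T' = R$ — wait, more carefully, pick $b \in O \setminus R$; since $b \in X = T$ and $O$ is feasible with $|O| \le k$, the defining property of a joint $k$-representative set gives some $e \in R$ with $w(e) \ge w(b)$ and $O - b + e$ feasible for $\cM$. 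Replace $O$ by $O - b + e$. This maintains feasibility, does not decrease weight, and — this is the point requiring care — must terminate.

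The main obstacle is termination of this exchange process: a priori, swapping $b$ for $e$ might later force $e$ back out, or the multiset of elements could cycle. The clean way to handle this is a potential-function / extremal argument. Among all feasible sets $O^\star$ with $w(O^\star) \ge w(O)$, choose one maximizing $|O^\star \cap R|$, and among those minimizing $|O^\star|$ (or minimizing some secondary quantity such as the number of elements, to avoid the degenerate possibility $e \in O^\star$ already). If $O^\star \not\subseteq R$, apply the representative property to some $b \in O^\star \setminus R$ to get $e \in R$; then $O^\star - b + e$ is feasible with weight $\ge w(O^\star) \ge w(O)$. If $e \notin O^\star$, then $|(O^\star - b + e)\cap R| = |O^\star \cap R| + 1$, contradicting maximality. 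If $e \in O^\star$ already, then $O^\star - b$ is feasible with the same $R$-intersection and strictly smaller size, contradicting minimality (and $w(O^\star - b) \ge w(O^\star)$ since $w(b) \le w(e)$ and all remaining elements... actually one must ensure $w(b) \ge 0$; this is why the preliminary pruning of negative-weight elements from $O$ — together with the fact that the extremal $O^\star$ inherits, or can be taken to inherit, nonnegativity — matters, or alternatively one simply argues $w(O^\star - b) = w(O^\star) - w(b) \ge w(O^\star) - w(e) $, and since dropping $b$ keeps feasibility we could instead have added nothing; the cleanest fix is to restrict attention throughout to feasible sets of nonnegative-weight elements, noting the optimum can be taken this way). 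Either case yields a contradiction, so $O^\star \subseteq R$, and $O^\star$ is the desired feasible subset of $R$ with weight at least that of the global optimum. This establishes that $R = \RepSet(X)$ is a kernel, and combined with the bounds from Theorem~\ref{thm:main-repset} completes the proof.
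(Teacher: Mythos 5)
Your overall route is the same as the paper's: both reduce the theorem to the joint $k$-representativity of $R=\RepSet(X)$ (Theorem~\ref{thm:main-repset} applied with $T=X$), and both turn a maximum-weight feasible set $O$ (of size at most $k$ by the definition of rank, so no pruning is needed for that) into a feasible subset of $R$ of no smaller weight by repeated single-element exchanges; the size, query and sorting bounds are quoted identically. The only genuine difference is bookkeeping: the paper fixes an ordering $O=\{b_1,\dots,b_{k'}\}$ and inducts on $r$, maintaining hybrids $O_r = O\setminus\{b_1,\dots,b_r\}\cup S_r$ with $|O_r|=|O|$ and $w(O_r)\ge w(O)$, whereas you run an extremal argument (among feasible sets of weight at least $w(O)$, maximize $|O^\star\cap R|$, then minimize $|O^\star|$). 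In the generic case, where the representative $e$ is not already in the current solution, your step is correct and is essentially the paper's step.

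The gap is in your second case, $e\in O^\star$. There the exchange degenerates to deleting $b$, and to contradict the minimality of $|O^\star|$ you need $O^\star-b$ to remain in the family you extremized over, i.e. $w(O^\star-b)=w(O^\star)-w(b)\ge w(O)$, which when $w(O^\star)=w(O)$ essentially requires $w(b)\le 0$. Your proposed repairs point the wrong way: pruning negative-weight elements, or ``ensuring $w(b)\ge 0$'', does nothing here, because the dangerous situation is exactly $w(b)>0$ --- dropping such a $b$ can push the weight strictly below $w(O)$, so no contradiction is reached, and the inequality $w(O^\star-b)\ge w(O^\star)-w(e)$ you mention is true but irrelevant. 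So this branch of your argument is not closed by any of the fixes you sketch. For comparison, the paper's induction never isolates this case: it sets $O_r=O_{r-1}-b_r+e_r$ and asserts $|O_r|=|O|$ and $w(O_r)\ge w(O_{r-1})$, which tacitly presumes $e_r\notin O_{r-1}-b_r$; your instinct that the case needs attention is sound, but the concrete patch you give (sign conditions on the weights) is incorrect and would need to be replaced, e.g.\ by an argument that a representative can be chosen outside the current solution or by different bookkeeping, neither of which appears in your write-up (nor, explicitly, in the paper).
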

\begin{proof}
Suppose that we are given a feasible set $O = \{b_1,\ldots,b_{k'}\}$, where $k' \leq k$. We show by induction on $0 \leq r \leq k'$ that there is some set $S_r \subseteq \outp$ such that $O_r = O \setminus \{b_1,\ldots,b_r\} \cup S_r$ is feasible for $\cM$, $|O_r| = |O|$ and $w(O_r) \geq w(O)$. If $r = 0$, then the claim holds trivially with $S_0 = \emptyset$, and $O_0 = O$.

In the general case, suppose that $r > 0$. By the induction hypothesis, there is some set of elements $S_{r-1} \subseteq R$ such that $O_{r-1} = O \setminus \{b_1,\ldots,b_{r-1}\} \cup S_{r-1}$ is feasible for $\cM$, $|O_{r-1}| = |O|$, and $w(O_{r-1}) \geq w(O)$. 
By Theorem~\ref{thm:main-repset}, there is then some element $e_r$ in the output of $\RepSet(X)$ with 
$O_{r-1}-b_r + e_r$ feasible and $w(e_r) \geq w(b_r)$. Let $S_r = S_{r-1} + e_r$ so that $O_r = O \setminus \{b_1,\ldots,b_r\} \cup S_r = O_{r-1} - b_r + e_r$. Then, $|O_r| = |O_{r-1}| = |O|$ and $w(O_r) \geq w(O_{r-1}) \geq w(O)$.  


The bounds on the number of oracle queries follows directly from Theorem~\ref{thm:main-repset}. Additionally, we note that each call to $\Guess(J,Y)$ requires finding the maximum weight element $e \in Y$. This can be accomplished by sorting $X$ at the beginning of the algorithm, and then storing each $Y$ according to this sorted order.
\end{proof}



\subsection{Joint $k$-Representative Sets in the Streaming Setting}

\label{sec:streaming-algorithm}

We next show that joint $k$-representative sets in the preceding section can be implemented in the streaming setting. Here, we suppose that the elements of $X$ are initially unknown, and at each step a new element $e$ arrives in the stream, together with the indices of the ground sets $X_i \in X(e)$. Recall that we are parameterizing by $\ell$, so we can assume that $\ell \leq n$. Furthermore, since each element participates in at most $\ell$ sets $X_i$ of an $\ell$-matchoid $\cM = \{M_i =(X_i,\I_i)\}_{i=1}^s$, we can assume that $s \leq n\ell$.

Our algorithm, shown in Algorithm~\ref{alg:streaming-guess}, maintains a representative set for all the elements that have previously arrived. When a new element arrives, we show that a new representative set for the entire stream can be obtained by applying the procedure $\RepSet$ to the set $T$ containing the representative set for the elements that have previously arrived together with this new element.

\begin{theorem}
\label{thm:main-streaming}
Consider an $\ell$-matchoid $\cM=\{M_i=(X_i, \I_i)\}_{i=1}^s$ and weight function $w : X \to \reals$.  Then, the set $R$ produced Algorithm~\ref{alg:streaming-guess} is a joint $k$-representative set for $(T,\cM,w)$, where $T$ is the subset of $X$ arriving in the stream so far. $|R| \leq \Gamma_{\ell,k} \triangleq \sum_{q = 0}^{(k-1)\ell}\ell^q$ and at all times during its execution, processing the arrival of an additional element requires temporarily storing this element together with an additional $\cO(k\ell\log n)$ bits. For $\ell = 1$, $|R| \leq k$ and for $\ell > 1$, $|R| = \cO\lb \ell^{(k-1)\ell}\rb$.
\end{theorem}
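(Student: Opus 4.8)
The plan is to reduce the streaming claim to Theorem~\ref{thm:main-repset} by maintaining the invariant that, after processing any prefix $T$ of the stream, the stored set $R$ is a joint $k$-representative set for $(T, \cM, w)$. The key structural observation I would establish is a \emph{transitivity} (or idempotence) property: if $R'$ is a joint $k$-representative set for $(T, \cM, w)$ and $R \subseteq R'$ is a joint $k$-representative set for $(R', \cM, w)$, then $R$ is itself a joint $k$-representative set for $(T, \cM, w)$. To see why this should hold, take any feasible $\cB$ with $|\cB| \le k$ and any $b \in T \cap \cB$. Since $R'$ represents $T$, there is some $e' \in R'$ with $w(e') \ge w(b)$ and $\cB - b + e'$ feasible. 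If $e' = b$ we are in the trivial case; otherwise $\cB - b + e'$ is a feasible set of size at most $k$ containing $e' \in R' \cap \cB'$ for $\cB' := \cB - b + e'$, so applying the representative property of $R$ with respect to $R'$ to the pair $(\cB', e')$ yields some $e \in R$ with $w(e) \ge w(e') \ge w(b)$ and $\cB' - e' + e = \cB - b + e$ feasible. (A small case check is needed when $b \in R'$ already, but then $b \in R' \cap \cB'$ and the same argument applies.) This gives exactly the representative property for $R$ with respect to $T$.

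With transitivity in hand, the streaming update is straightforward: on arrival of a new element $e$, we set $R \leftarrow \RepSet(R_{\text{old}} + e)$, where $R_{\text{old}}$ is the previously stored representative set for the prefix $T_{\text{old}}$. By Theorem~\ref{thm:main-repset}, the new $R$ is a joint $k$-representative set for $(R_{\text{old}} + e, \cM, w)$. It remains to argue that $R_{\text{old}} + e$ is a joint $k$-representative set for $(T_{\text{old}} + e, \cM, w)$: any feasible $\cB$ with $|\cB| \le k$ and $b \in (T_{\text{old}} + e) \cap \cB$ either has $b = e \in R_{\text{old}} + e$ (trivial), or $b \in T_{\text{old}} \cap \cB$, in which case the inductive invariant on $R_{\text{old}}$ supplies a representative in $R_{\text{old}} \subseteq R_{\text{old}} + e$. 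Then transitivity (with $T = T_{\text{old}} + e$, $R' = R_{\text{old}} + e$, and the new $R$) finishes the induction step. The base case is the empty prefix with $R = \emptyset$, vacuously a joint $k$-representative set.

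For the size and space bounds, $|R| \le \Gamma_{\ell,k}$ is immediate from Theorem~\ref{thm:main-repset} applied to $\RepSet(R_{\text{old}}+e)$, and the $\ell = 1$ and $\ell > 1$ refinements carry over verbatim. For the per-element space overhead, note that $\RepSet$ runs $\Guess$ on a ground set of size $|R_{\text{old}} + e| \le \Gamma_{\ell,k} + 1$, and its recursion tree has depth at most $(k-1)\ell$; along any root-to-leaf path the multi-dimensional set $J$ has $\|J\| \le (k-1)\ell$, so it is described by $\cO(k\ell)$ element identifiers, i.e.\ $\cO(k\ell \log n)$ bits, plus the pointer bookkeeping for the current $Y$, which lives inside the already-stored set. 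I would remark that the recursion can be evaluated in a depth-first manner so only one root-to-leaf path is live at a time, which is what keeps the \emph{temporary} overhead at $\cO(k\ell\log n)$ bits on top of the persistently stored $R$ and the incoming element.

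The main obstacle is the transitivity lemma: one must be careful that the intermediate feasible set $\cB - b + e'$ genuinely has size at most $k$ (it does, since it is obtained from $\cB$ by a swap) and that the element being replaced, $e'$, indeed lies in $R' \cap \cB'$ so that the representative property of $R$ is applicable — this is where the definition's requirement ``$b \in T \cap \cB$'' matters, and the chained weights $w(e) \ge w(e') \ge w(b)$ must be tracked. Everything else is routine induction over the stream together with a direct appeal to Theorem~\ref{thm:main-repset}.
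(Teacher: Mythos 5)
Your proposal is correct and follows essentially the same route as the paper: an induction over the stream in which the invariant ``the stored set represents the prefix seen so far'' is preserved by applying Theorem~\ref{thm:main-repset} to $R_{\mathrm{old}}+e$ and chaining representatives --- your explicit transitivity lemma is exactly the inline two-case step ($b$ arrived earlier versus $b=e_t$) in the paper's induction. The only point to tighten is the space bookkeeping: as in the paper, each level of the live recursion path should record the chosen element \emph{and} the matroid index (so $J$ is reconstructible), and each frame's $Y$ should be recomputed from this path via oracle calls rather than stored, which is what keeps the temporary overhead at $\cO(k\ell\log n)$ bits.
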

\begin{proof}
We proceed by induction on the stream of elements, in order of arrival. Let $\cB$ be a feasible set in $\cM$. 
For each $0 \leq t \leq n$, let $A_t$ be the first $t$ elements that arrive in the stream and $\outp_{t-1}$ be the current set $\outp$ immediately before the $t$-th element arrives. We show by induction that for each $0 \leq t \leq |T|$, for any $b \in A_t \cap \cB$ there is some $e \in \outp_t$, such that $\cB - b +e$ is feasible and $w(e) \geq w(b)$.  For $t = 0$, we have $A_t = \emptyset$ and so the claim follows trivially.

Let $t > 0$ and consider the arrival of the $t$-th element $e_t$ in the stream. Then, $A_{t} = A_{t-1} + e_t$.  Fix any element $b \in A_t \cap \cB$. We consider first the case that $b \in A_{t-1} \cap \cB$. By the induction hypothesis there is some $e \in \outp_{t-1}$ with $\cB-b+e$ feasible and $w(e) \geq w(b)$. Let $\cB' = \cB -b + e$. By Theorem~\ref{thm:main-repset}, $\outp_{t}$ is a joint $k$-representative set for $(\outp_{t-1}+e_t,\cM,w)$. Then, since $e \in \outp_{t-1}$ there is some $e' \in \outp_{t} = \RepSet(\outp_{t-1} + e_t)$  such that $w(e') \geq w(e) \geq w(b)$ and $\cB'-e+e'=\cB-b+e'$ feasible. 

Next consider the case $b = e_{t}$. Again, since $\outp_{t}$ is a joint $k$-representative set for $(\outp_{t-1}+e_t)$ there must exist some $e' \in \outp_t$ with $w(e') \geq w(e_t) = w(b)$ and $\cB - e_t +e' = \cB - b + e'$ feasible. 
This completes the proof of the induction step. The first claim in the theorem then follows by letting $t = |T|$, and noting that $A_{|T|} = T$ and $\outp_{|T|}$ is the set $\outp$ at the moment all of $T$ have arrived.

We note that by Theorem~\ref{thm:main-repset}, the size of the set $\outp$ computed in any step of the algorithm is always at most $\Gamma_{\ell,k}$. In order to process the arrival of an element $e$, the algorithm computes $\RepSet(\outp + e)$. This makes a tree of recursive calls $\Guess(J,Y)$, where $J$ is a multidimensional set and $Y \subseteq \outp+e$. As shown in the proof of Theorem~\ref{thm:main-repset}, this tree has depth at most $(k-1)\ell$ and so at any time we must maintain at most $(k-1)\ell$ such inputs $(J,Y)$ appearing on the path from the current call to the root of the tree. To store each $J$, we note that each recursive call made by $\Guess(J,Y)$ adds some element $\overline{e} \in Y$ to a set $J_i \in J$. Thus, we can represent $J$ implicitly by storing $\overline{e}$, together with a currently selected index $i$ at each such call in the tree. Storing this index requires $\log(s) \leq \log(n\ell) = O(\log(n))$ bits. Moreover, given $J$, we can easily determine $Y$, since it is precisely the set of elements $e' \in \outp+e$ such that $e' \not\in \spa_{M_i}(J_i)$ for all $X_i \in X(e')$.
Altogether then, to process the arrival of an element we must temporarily use at most $\cO(k\ell\log(n))$ additional bits of storage, together with the space required to temporarily store this single element.
\end{proof}

\begin{algorithm}[t]
\KwIn{parameters $\ell, k$, independence oracles for $\ell$-matchoid $\cM=\{M_i\}_{i=1}^s$ of rank $k$, weight function $w : X \to \reals$}
\smallskip
\myproc{$\SRepSet$}{
$\outp \gets \emptyset$\;
\ForEach{$e \in X$ arriving in the stream}{
  Let $\outp'$ be the result of running $\RepSet(\outp+e)$\;
  $\outp \gets \outp'$\;
}
\Return $\outp$\;
}
\caption{Streaming FPT-algorithm}
\label{alg:streaming-guess}
\end{algorithm}

\section{Unweighted Coverage Functions in the Value Oracle Model}
\label{sec:an-algor-unwe}

In the previous section, we have focused on the problem of maximizing a \emph{linear} function subject to an $\ell$-matchoid constraint. In this section and the next, we consider the more general \textsc{Maximum $(\cM,z)$-Coverage} problem. Here we are given an $\ell$-matchoid $\cM$,  together with a universe $\pU$ of size $m$, and each element $e \in X$ corresponds to some subset of $\pU$. The goal is then to find a set of elements $S$ that is feasible in $\cM$ and whose union contains at least $z$ points of the universe $\pU$. To avoid confusion, we refer to the elements of $\pU$ as \emph{points} and reserve the term \emph{element} for those elements of $X$ and set variables and functions related to points in $\mathsfit{sans\ serif}$. For each element $e \in X$, we denote by $\points(e)$ the set of points in $\pU$ that corresponds to $e$. Similarly, for any subset $T \subseteq X$, we let $\points(T)$ denote the set of points $\bigcup_{e \in T}\points(e)$ that are covered by at least one element of $T$. In the streaming setting, we suppose that $\pU$ and $X$ are not known in advance, and the elements of $X$ arrive one at a time.


In this section, we consider the case of an \emph{unweighted} coverage function, in which the objective is simply to find a set $S \subseteq X$ of elements that is independent in the given $\ell$-matchoid $\cM$ so that $f(S) = \left|\points(S)\right|$ is maximized. We further suppose that the representation of each element $e$ as a subset $\points(e) \subseteq \pU$ is not directly available, but instead we are given a value oracle for $f$. For any $S \subseteq X$, this oracle returns only the value $f(S)$ (that is, the number of points covered by the union of all elements in $S$). We give a fixed-parameter streaming algorithm constructing a kernel for the problem of finding a feasible set $S$ for an $\ell$-matchoid $\cM$ with $f(S) \geq z$, where $z,\ell \in \posints$ are the parameters. Recall that we can assume that for each $e \in X$ we have that $\{e\}$ feasible for $\cM$.  

\subsection{An intuitive description of our approach}
\label{sec:intu-behind-data}

Due to the limitations of the value oracle model, we require a rather sophisticated data structure to achieve our goal. Here we give some informal discussion and intuition; a formal description will follow. 

Consider any feasible set $O$ for our $\ell$-matchoid $\cM$, with $f(O) \geq z$. Fix some $b_r \in O$. Under what conditions are we justified in throwing away $b_r$ when it arrives in the stream? Here we are primarily concerned with the case in which $b_r$ is critically contributing to the value $f(O)$, so that $f(O) \geq z$ but $f(O - b_r) < z$. Intuitively, even in this case we can throw away $b_r$ if we have stored enough elements to ensure that there exists an element $e$ with the properties that 
\begin{itemize} 
\item[(i)] $O - b_r + e$ is a feasible set in $\cM$;
\item[(ii)] $e$ covers at least as many points outside of $O-b_r$ as $b_r$ itself, i.e., $f(e | O-b_r) \geq f(b_r|O-b_r)$. 
\end{itemize}
To achieve (i) we can simply utilize the joint representative sets introduced in the preceding section. However, guaranteeing (ii) is
trickier. Here, we must ensure that our replacement $e$ covers at least as many points outside of $\points(O-b_r)$ as $b_r$ does and, unlike in the case of linear functions, this marginal coverage will, in general, depend on how both $e$ and $b_r$ interacts with $O - b_r$. One simple approach would be to ensure that we store a representative $e$ for $b_r$ that covers a \emph{superset} of the points covered by $b_r$. However, this may require storing a prohibitively large number of elements: consider the case in which each element that arrives covers some \emph{distinct} set of $t$ points. 

Thus, we adopt a different approach. First, let us do some wishful thinking: imagine that after processing $b_r$, we have $z$ disjoint $z$-representative sets $R_1,\cdots, R_z$ for the set of elements $T$ that have arrived so far, with the following three properties:
\begin{enumerate}
\item[(a)] Each element $e$ in $\cup_{i=1}^{z}R_i$ has the same value $f(e) = f(b_r)$;
\item[(b)]  There exists a set $\pA \subseteq \pU$ of points that are shared by 
 all elements in $\cup_{i=1}^{z}R_i$ and the element $b_r$;
\item[(c)]  No two elements  in $\cup_{i=1}^{z}R_i$ share any point outside $\pA$.
\end{enumerate}
Note that these properties are more relaxed than the requirement that all elements $e$ in our representative set have $\points(b_r) \subseteq \points(e)$: here we require only that $e$ covers some subset $\pA$ of the points in $\points(b_r)$. However, we now further require that there are $z$ distinct such representative sets, and that the stored elements $e$ each cover a disjoint set of points in $\pU \setminus \pA$. 

We now show briefly why this suffices to satisfy property (ii). Given a collection of representative sets $R_1,\ldots,R_z$ satisfying (a)--(c), we can find $z$ distinct representatives (one from each $R_i$) for $b_r$. By the given properties, each of these elements will cover the same set of points $\pA$ as $b_r$, together with $f(b_r) - |\pA|$ unique points outside $\pA$. Then, since $f(O - b_r) < z$, property (b) and the pigeonhole principle imply that for at least one such representative element $e$, the set $\points(e) \setminus \pA$ must be disjoint from $\points(O - b_r)$. This element then covers all the points of $\pA$ that $b_r$ covers, together with a new set of $f(b_r) - |\pA|$ points not covered by any set in $O - b_r$. Thus (as we will formally show) it must satisfy property (ii).

The question now becomes how we can efficiently ensure that some collection of representative sets satisfying the above properties with respect to some set of points $\pA$ exists for any possible $b_r$. To do this, we maintain a tree of such collections for each possible value of $f(b_r) \in \{1,\ldots,z-1\}$. The nodes of each such tree will correspond to some set $\pA$ of commonly covered points, as above, and each node will store a collection of $z$ representative sets satisfying our properties (a)--(c) (with respect to the set $\pA$ of points) for each element that has previously arrived. Note that the algorithm only has access to a value oracle, so we do not know the precise value of the set $\pA$, only that some such common set \emph{exists}. The root node in each tree corresponds to $\pA = \emptyset$. Suppose that $\nn$ is a general node in the tree and that all elements stored in $\nn$ cover the common set $\pA$ of points. Then, for each element $e$ stored in $\nn$, we will potentially create a child node $\nn'$ of $\nn$ associated with $e$. The elements stored in each such child will cover a common set of points $\pA'$ where $\pA \subset \pA' \subseteq \points(e)$. Then, note that as we descend the tree, the set $\pA$ associated with our current node grows larger, and so at depth (at most) $z$, we will have $|\pA| = z$. 

When a new element $b_r$ arrives, we then consider the tree corresponding to $f(b_r)$. We will then descend through this tree until a node corresponding to some set $\pA$ with desired properties (a)--(c) is found. For any tree node, we must determine whether $b_r$ covers some same set of points $\pA$ as all elements in $\cup_{i=1}^{z}R_i$ and $b_r$ does not cover any point outside of $\pA$ that is covered by elements in $\cup_{i=1}^{z}R_i$---we show that this can be accomplished even if we have only a value oracle (and so do not know the points in the underlying set $\pA$). Once a tree node is found whose representative sets $R_1,\ldots,R_z$ satisfy properties (a)--(c) for $b_r$,   we can try to add $b_r$ into exactly one of the $z$-representative sets, $R_1,\cdots, R_z$. If $b_r$ cannot be added to any of these sets, then we must already have a set of $z$ representatives for $b_r$ as described above, and so $b_r$ can be safely thrown away.



\subsection{Formal description of the algorithm}
\label{sec:formal-description}

We now describe our algorithm formally. See the code in Algorithm~\ref{alg:streaming-guess-oracle}. After introducing the required notation, we give a concrete example of how the algorithm behaves for the tree structure shown in Figure~\ref{fig:dynamic}. 
Our algorithm maintains a collection of $z-1$ trees, each storing multiple joint $z$-representative sets. More precisely, for each $1 \leq j \leq z-1$, we maintain a tree that stores only elements $e$ with $f(e) = j$. Each node $\nn$  of our trees will maintain a collection of $z$ disjoint representative sets $R_1,\ldots,R_z$, with the property that each element in these sets covers some common set $\pA$ of at least $d$ points---where $d$ is the depth of $\nn$ in the tree---and no other points in common with any other element. Let $\all(\nn)$ denote the union $\bigcup_{j = 1}^zR_j$ of all the joint $z$-representative sets stored at $\nn$. A node $\nn$ has multiple children associated with each element $e \in \all(\nn)$. For each such child node $\nn'$ of $\nn$, we let $\parent(\nn')$ denote the element $e \in \all(\nn)$ that $\nn'$ is associated with. Specifically, consider some $e \in \all(\nn)$. Then $e$, as well as all other elements of $\all(\nn)$, cover some common set of points $\pA$. For each subset $\pQ$ of points in $\points(e) \setminus \pA$, we will potentially create a new child node $\nn'$ of $\nn$, with $\parent(\nn') = e$. This child node will store only elements that cover precisely this set $\pQ$ of points in $\points(e) \setminus \pA$, together with all points of $\pA$, and cover no other points in common. That is, all elements in this child $\nn'$ thus cover precisely the set of points $\pA \cup \pQ \subseteq \points(e) = \points(\parent(\nn'))$ and the sets of points $\points(e') \setminus (\pA \cup \pQ)$ for all elements $e' \in \all(\nn')$ are disjoint. For convenience, at the root node $\nn$ of each tree, we let $\parent(\nn)$ be a single dummy element $\bot$, with $f(\bot) = 0$ (so $\bot$ covers no points in the underlying representation of $f$).

In the value oracle mode, we do not have access to the actual sets of points covered by any element. We show (in Lemma~\ref{lem:point-queries}) that given elements $a,b,x \in X$, we can determine whether $a$ and $b$ cover the same subset of points in $\points(x)$, and whether they cover no other points in common outside of $\points(x)$ using only queries to the value oracle for $f$. 
When a new element $e$ arrives in the stream, we then first check if $f(e) \geq z$. If this is the case, then we can simply return $\{e\}$ as a kernel for the problem. Otherwise, algorithm will update the tree corresponding to $f(e)$ as follows. First, we find a node $\nn$ in the tree so that both $e$ and each element in $\all(\nn)$ cover some common set of points in $\points(\parent(\nn))$ and have no other points in common (note that this common set of points in $\parent(\nn)$ corresponds to the set $\pA$ in our previous discussion). Starting with $\nn$ as the root node $\nn_{f(e)}$ of the tree corresponding to $f(e)$, we test if this is the case. If not, there must be some element $r \in \all(\nn)$ so that $\points(e) \setminus \points(\parent(\nn))$ and $\points(r) \setminus \points(\parent(\nn))$ are not disjoint. We then descend the tree and recursively test whether our desired property holds for $e$ at each child node $\nn'$ associated with $r$. This is accomplished by the procedure $\FindNode(e,\nn)$, which ultimately either returns a suitable node in which $e$ should be stored or creates a new child of some node in the tree. In Lemma~\ref{lem:find-node}, we show that the node ultimately returned by this procedure indeed satisfies the required analogues of our intuitive properties (a)--(c) from Section~\ref{sec:intu-behind-data}.

Once we have an appropriate such node $\nn$, we iteratively attempt to add $e$ into each joint $z$-representative set $R_j$ stored in this node, stopping as soon as we succeed. This is accomplished by the procedure $\ProcessElem(e,\nn)$. In Lemma~\ref{lem:invariants}, we give several invariants that are maintained by our algorithm as a whole. In particular, all elements stored in the joint $z$-representative sets of any node have the desired properties (with respect to the points they cover), that all the joint $z$-representative sets $R_j$ at a node are disjoint, and that once an element is stored in some set $R_j$ it is never removed from this set later (and so also remains in the associated tree). In order to ensure this last property, we assign elements dummy weights in descending order of arrival. As we will show, this ensures that when rebuilding a representative set after the arrival of some element $e$, the procedure $\RepSet$ (which chooses a maximum weight element in each call to $\Guess$) will never exclude a previously selected element in favor of choosing $e$. At the end of the procedure, we return the union of all the joint $z$-representative sets stored at all the nodes of all the trees.


\begin{algorithm}
\KwIn{Parameters $\ell$,$z$, independence oracles for $\ell$-matchoid $\cM=\{M_i\}_{i = 1}^s$, value oracle for $f : 2^X \to \posreals$.}
\smallskip
\myproc{$\StreamingCoverage$}{
  Let $\bot$ be a dummy element with $f(\bot) = 0$ that covers no points\;
  \For{$1 \leq j \leq z-1$}{
    Let $\nn_j$ be a new root node with $\parent(\nn_j)=\bot$
  }
  \ForEach{$e \in X$ arriving in the stream}{
    \lIf{$f(e) \geq z$}{\Return $\{e\}$}
    Let $\hat{\nn}$ = $\FindNode(e,\nn_{f(e)})$\;
    $\ProcessElem(e,\hat{\nn})$\;
  }
  \Return the set of all elements stored in any node of the trees $\nn_1,\ldots,\nn_{z-1}$\;
}
\smallskip
\myproc{$\FindNode(e,\nn)$}{
  Let $p = \parent(\nn)$\;
  \ForEach{$r \in \all(\nn)$}{
    \If(\myc*[f]{$\points(e) \setminus \points(p)$ and $\points(r) \setminus \points(p)$ not disjoint.}){$f(e|p) \neq f(e|\{r,p\})$\label{li:fn-test1}}
    {
      \ForEach{child node $\nn'$ of element $r$}
      {
   \If{$f(r|r') = f(r|e) = f(r|\{r',e\})$ for all $r' \in \all(\nn')$\label{li:fn-test2}}
        {
          \myc{$e$ covers the same points in $\points(r)$ as each $r' \in \all(\nn')$.}
          \Return $\FindNode(e,\nn')$\label{li:fn-rec}\;
        }
      }
      \myc{$e$ covers a new, distinct set of points in $\points(r)$.}
      Create a new child node $\nn'$, with $\parent(\nn') = r$ and $R_j = \emptyset$ for $1 \leq j \leq z$\;
     \Return $\nn'$\label{li:fn-makechild}\;
    }
  }

\Return $\nn$\label{li:fn-returncurrent}\myc*[r]{$\points(e) \setminus \points(p)$ and $\points(r) \setminus \points(p)$ were disjoint for all $r \in \all(\nn)$}
}
\smallskip
\myproc{$\ProcessElem(e,\nn)$}{
  Let $R_1,\ldots,R_z$ be the sets stored in $\nn$\;
  \For{$1 \leq j \leq z$}{
    Let $w$ assign decreasing weights to points of $R_j+e$ in order of arrival.\;
    Let $R'_j$ be the output of $\RepSet(R_j+e)$ for $\cM$ and $w$, with parameters $k = z$ and $\ell$\;
    Replace $R_j$ by $R'_j$\;
    \lIf(\myc*[f]{$e$ was successfully added}){$e \in R'_j$}
    {\Return}
  }
}
\caption{Streaming FPT-algorithm for unweighted maximum coverage}
\label{alg:streaming-guess-oracle}
\end{algorithm}

\subsection{An example}
\label{sec:an-example}

In order to illustrate the operation of our algorithm and the desired properties of our data structure we now provide a small example. Figure~\ref{fig:dynamic} shows a tree $\nn_3$ storing elements $e$ with $f(e) = 3$ in the case that the parameter $z = 4$. Each node in the tree stores 4 representative sets, illustrated as a rounded box with 4 separate regions. For each child $\nn'$ of a node $\nn$ in a tree, we have illustrated the set $\pQ$ of points in the parent element $\parent(\nn)$ with which this child is associated. Suppose that elements $e_{1},\ldots,e_{14}$ have already arrived and been stored in the tree. We will explain how $e_{15}$ is processed and stored. Because we are working in the value oracle model, the algorithm does not have direct access to the underlying set of points corresponding to each element. However, in order to concretely illustrate our main ideas, here we show this underlying representation. In Lemma~\ref{lem:point-queries}, we argue that all of the tests we perform here can be effectively carried out using only the value oracle for $f$.

\begin{figure}
 \centering 
\begin{tikzpicture}[scale=0.7]
\draw[thick, red, rounded corners=3pt] (0.25, -1.86) ellipse (0.5 and 0.25);
\draw[thick, red, rounded corners=3pt] (0.75, -1.86) ellipse (0.5 and 0.25);
\draw[thick, red, rounded corners=3pt] (0, -1.86) ellipse (0.15 and 0.15);

\draw (2.15, 0.75) node[above] {$\bot$};
\draw[thick] (2.15, 0.75) -- (2.15,0.33);

\coordinate (a) at (-0.15, - 1.86);
\coordinate (b) at (0.25, -0.25 - 1.86);
\coordinate (c) at (1.25, - 1.86);

\treeNodeThree{0}{0}{\pa_1}{\pa_2}{\pa_3}{e_1}{black}{black}{black}
\treeNodeThree{0}{-1.2}{\pb_1}{\pb_2}{\pb_3}{e_2}{black}{black}{black}

\treeNodeThree{3}{0}{\pc_1}{\pc_2}{\pc_3}{e_3}{black}{black}{black}

\draw[thick, rounded corners=5pt] (-1.1, -2.5) -- (-1.1,0.33) -- (6.4,0.33) -- (6.4, -2.5) -- cycle;
\draw (1.8, -2.5) -- (1.8,0.33);
\draw (4.8, -2.5) -- (4.8,0.33);
\draw (5.6, -2.5) -- (5.6,0.33);
\draw (-1.1, -1.05) node[left] {$\nn_3$};

\begin{scope}[shift={(-6,-3.5)}]
\draw[thick, blue] (1, -1.86) ellipse (0.15 and 0.15);
\draw[thick, blue] (0.5, -1.86) ellipse (0.15 and 0.15);
\draw[thick, blue] (1, -0.66) ellipse (0.15 and 0.15);
\coordinate (a1) at (1.15, - 0.66);
\coordinate (b1) at (0.5, -0.15 - 1.86);
\coordinate (c1) at (1, -0.15-1.86);

\treeNodeThree{0}{0}{\pb_1}{\pd_1}{\pd_2}{e_4}{red}{black}{black}
\treeNodeThree{0}{-1.2}{\pb_1}{\pf_1}{\pf_2}{e_5}{red}{black}{black}

\draw[thick, rounded corners=5pt] (-1.1, -2.5) -- (-1.1,0.33) -- (4.2,0.33) -- (4.2, -2.5) -- cycle;
\coordinate (n1) at (2.15, 0.33);
\draw (-1.1, -1.05) node[left] {$\mathbf{a}$};

\draw (1.8, -2.5) -- (1.8,0.33);
\draw (2.6, -2.5) -- (2.6,0.33);
\draw (3.4, -2.5) -- (3.4,0.33);
\end{scope}

\begin{scope}[shift={(1.5,-3.5)}]
\coordinate (n2) at (2.15, 0.33);
\treeNodeThree{0}{0}{\pb_1}{\pb_2}{\pf_1}{e_6}{red}{red}{black}
\treeNodeThree{0}{-1.2}{\pb_1}{\pb_2}{\pf_2}{e_7}{red}{red}{black}
\draw (-1.1, -1.05) node[left] {$\mathbf{b}$};

\draw[thick, rounded corners=5pt] (-1.1, -2.5) -- (-1.1,0.33) -- (4.2,0.33) -- (4.2, -2.5) -- cycle;
\draw (1.8, -2.5) -- (1.8,0.33);
\draw (2.6, -2.5) -- (2.6,0.33);
\draw (3.4, -2.5) -- (3.4,0.33);
\end{scope}

\begin{scope}[shift={(8,-3.5)}]
\treeNodeThree{0}{0}{\pb_2}{\pb_3}{\pd_1}{e_8}{red}{red}{black}
\treeNodeThree{0}{-1.2}{\pb_2}{\pb_3}{\pf_2}{e_9}{red}{red}{black}
\coordinate (n3) at (2.15, 0.33);
\draw (-1.1, -1.05) node[left] {$\mathbf{c}$};

\draw[thick, rounded corners=5pt] (-1.1, -2.5) -- (-1.1,0.33) -- (4.2,0.33) -- (4.2, -2.5) -- cycle;
\draw (1.8, -2.5) -- (1.8,0.33);
\draw (2.6, -2.5) -- (2.6,0.33);
\draw (3.4, -2.5) -- (3.4,0.33);
\end{scope}

\begin{scope}[shift={(-9,-7)}]
\treeNodeThree{0}{0}{\pb_1}{\pf_1}{\pg_1}{e_{10}}{red}{blue}{black}
\treeNodeThree{0}{-1.2}{\pb_1}{\pf_1}{\pg_2}{e_{11}}{red}{blue}{black}
\coordinate (n4) at (2.15, 0.33);

\draw[thick, rounded corners=5pt] (-1.25, -2.5) -- (-1.25,0.33) -- (4.2,0.33) -- (4.2, -2.5) -- cycle;
\draw (1.8, -2.5) -- (1.8,0.33);
\draw (2.6, -2.5) -- (2.6,0.33);
\draw (3.4, -2.5) -- (3.4,0.33);
\draw (1.65, -2.5) node[below] {$\mathbf{d}$};
\end{scope}

\begin{scope}[shift={(-3,-7)}]
\treeNodeThree{0}{0}{\pb_1}{\pf_2}{\pg_2}{e_{12}}{red}{blue}{black}
\coordinate (n5) at (-0.5, 0.33);

\draw[thick, rounded corners=5pt] (-1.25, -2.5) -- (-1.25,0.33) -- (4.2,0.33) -- (4.2, -2.5) -- cycle;
\draw (1.8, -2.5) -- (1.8,0.33);
\draw (2.6, -2.5) -- (2.6,0.33);
\draw (3.4, -2.5) -- (3.4,0.33);
\draw (1.65, -2.5) node[below] {$\mathbf{e}$};
\end{scope}

\begin{scope}[shift={(3,-7)}]
\treeNodeThree{0}{0}{\pb_1}{\pd_2}{\pf_1}{e_{13}}{red}{blue}{black}
\treeNodeThree{0}{-1.2}{\pb_1}{\pd_2}{\pg_1}{e_{14}}{red}{blue}{black}
\treeNodeThree{3}{0}{\pb_1}{\pd_2}{\pg_2}{e_{15}}{red}{blue}{black}
\coordinate (n6) at (-1, 0.33);

\draw[thick, rounded corners=5pt] (-1.25, -2.5) -- (-1.25,0.33) -- (6.4,0.33) -- (6.4, -2.5) -- cycle;
\draw (1.8, -2.5) -- (1.8,0.33);
\draw (4.8, -2.5) -- (4.8,0.33);
\draw (5.6, -2.5) -- (5.6,0.33);
\draw (3, -2.5) node[below] {$\mathbf{f}$};
\end{scope}

\draw[thick,red] (a) -- (n1);
\draw[thick,red] (b) -- (n2);
\draw[thick,red] (c) -- (n3);
\draw[thick,blue] (a1) -- (n6);
\draw[thick,blue] (b1) -- (n4);
\draw[thick,blue] (c1) -- (n5);
\end{tikzpicture}

\caption{Let $z=4$. Here we demonstrate a tree of depth 2 rooted at node $\nn_3$. This tree contains elements $e$ 
 with $f(e)=3$.
 Each node in the tree has $4$ $z$-representative sets.} 
\label{fig:dynamic}
\end{figure}

Since $f(e_{15}) = 3$, we begin at the root $\nn_3$ of the tree shown. At the root $\nn_3$, $\all(\nn_3) = \{e_1,e_2,e_3\}$ and $p = \parent(\nn_3) = \bot$ (recall that $\bot$ is a dummy element with $\points(\bot) = \emptyset$). For each element of $r \in \all(\nn_3) = \{e_1,e_2,e_3\}$, we check in turn whether $\points(e_{15}) \setminus \points(p)$ and $\points(r) \setminus \points(p)$ are disjoint. We find that $\points(e_{15}) \setminus \emptyset = \{\pb_1,\pd_2,\pg_2\}$ and $\points(e_2) \setminus \emptyset = \{\pb_1,\pb_2,\pb_3\}$ are not disjoint. Thus, we will attempt to store $e_{15}$ in some child of $e_2$. 

We consider each child node $\na$, $\nb$, and $\nc$ associated with $e_2$ in turn. To select an appropriate child node, we check whether $e_{15}$ covers the same set of points in $\points(e_2)$ as all of the other elements stored in this child. This is the case for $\na$, since $e_{15}$, $e_4$, and $e_5$ all cover precisely $\pb_1$ and no other element from $\points(e_2)$. Thus, we attempt to insert $e$ into $\na$. At this stage, we have $\nn = \na$, and $p = e_2$. 

We check whether $\points(e_{15}) \setminus \points(p) = \points(e_{15}) \setminus \points(e_2)$ is disjoint from $\points(r) \setminus \points(p) = \points(r) \setminus \points(e_2)$ for each $r \in \all(\na)$. This is not the case, since $e_{15}$ and $e_4$ both cover the point $\pd_2 \not\in \points(e_2)$. Thus, we descend the tree again and consider all child nodes associated with element $e_4$. There is exactly one such child $\nf$. Now, we check if $e_{15}$ covers the same set of points in $e_{5}$ as each element $r \in \all(\nf)$. Indeed, each of these elements (and $e_{15}$) covers precisely the same set of points $\{\pb_1,\pd_2\} \subseteq \points(e_5)$. Thus, we will attempt to insert $e_{15}$ into $\nn = \nf$ with $p = \parent(\nf) = e_5$. 

Now, we find that $\points(e_{15}) \setminus \points(p)$ and $\points(r) \setminus \points(p)$ are disjoint for all $r \in \all(\nf)$. Thus, we process $e_{15}$ at this node. Suppose we try to insert $e_{15}$ into $R_{1} = \{e_{13},e_{14}\}$, but do not succeed. Then, we will try to insert $e_{15}$ into $R_2 = \{\}$ and succeed, giving the tree shown.

Observe that the way in which elements are processed ensures that the elements of $\all(\nn)$ at any node $\nn$ cover precisely the same set of points in $\parent(\nn)$ and, except for these points, are otherwise pairwise disjoint.

\subsection{Analysis}
\label{sec:analysis}
We now give the formal statements of the lemmas discussed above, and carry out our analysis. First we argue that we can determine the required properties of the (unknown) sets of points corresponding to each element by using only a value oracle for the associated coverage function.
\begin{lemma}
\label{lem:point-queries}
For any elements $a,b,x \in X$, 
\begin{enumerate}[topsep=0pt,itemsep=-1ex,partopsep=1ex,parsep=1ex]
\item $\points(a) \cap \points(x) = \points(b) \cap \points(x)$ if and only if $f(x | a) = f(x | b) = f(x | \{a,b\})$.
\item $\points(a) \setminus \points(x)$ and $\points(b) \setminus \points(x)$ are disjoint if and only if $f(a|x) = f(a|\{b,x\})$.
\end{enumerate}
\end{lemma}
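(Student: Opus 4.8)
The plan is to prove both equivalences by translating statements about point sets into statements about the cardinalities that the coverage value oracle reports, using inclusion–exclusion. Throughout I will write $\points(\cdot)$ for the point sets and recall that $f(S) = |\points(S)|$ in the unweighted case, so $f(S) = |\bigcup_{e \in S}\points(e)|$, and that $f(x \mid S) = f(S + x) - f(S) = |\points(x) \setminus \points(S)|$. The whole proof is just careful bookkeeping with these identities, so no step should be genuinely hard; the only thing to be careful about is getting the set-algebra manipulations exactly right.

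For part (1), the plan is to express each of the three conditional values $f(x\mid a)$, $f(x\mid b)$, $f(x\mid\{a,b\})$ as the size of an appropriate difference of point sets: $f(x\mid a) = |\points(x)\setminus\points(a)|$, $f(x\mid b) = |\points(x)\setminus\points(b)|$, and $f(x\mid\{a,b\}) = |\points(x)\setminus(\points(a)\cup\points(b))|$. Since $\points(x)\setminus(\points(a)\cup\points(b)) \subseteq \points(x)\setminus\points(a)$ and likewise with $b$, the common value of all three forces $\points(x)\setminus\points(a) = \points(x)\setminus(\points(a)\cup\points(b)) = \points(x)\setminus\points(b)$ (equal subsets of equal size inside a fixed finite set must coincide). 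Intersecting the first two with $\points(x)$ and complementing within $\points(x)$ gives $\points(x)\cap\points(a) = \points(x)\cap\points(b)$. The converse is immediate: if $\points(x)\cap\points(a) = \points(x)\cap\points(b)$, then $\points(x)\setminus\points(a)$, $\points(x)\setminus\points(b)$, and $\points(x)\setminus(\points(a)\cup\points(b))$ are all equal to $\points(x)$ minus this common intersection, so their sizes agree.

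For part (2), I will again use $f(a\mid x) = |\points(a)\setminus\points(x)|$ and $f(a\mid\{b,x\}) = |\points(a)\setminus(\points(b)\cup\points(x))| = |(\points(a)\setminus\points(x))\setminus\points(b)|$. Writing $P = \points(a)\setminus\points(x)$, the equation $f(a\mid x) = f(a\mid\{b,x\})$ says $|P| = |P\setminus\points(b)|$, i.e. $|P \cap \points(b)| = 0$, i.e. $P \cap \points(b) = \emptyset$. Now $P\cap\points(b) = \emptyset$ is exactly the statement that $\points(a)\setminus\points(x)$ is disjoint from $\points(b)$; and since $\points(b)\setminus\points(x) \subseteq \points(b)$, this is equivalent to $(\points(a)\setminus\points(x)) \cap (\points(b)\setminus\points(x)) = \emptyset$ — note that if the intersection of $\points(a)\setminus\points(x)$ with the full set $\points(b)$ is empty, then a fortiori it is empty with the subset $\points(b)\setminus\points(x)$, and conversely any point in $(\points(a)\setminus\points(x))\cap\points(b)$ automatically lies outside $\points(x)$ hence lies in $\points(b)\setminus\points(x)$. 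This chain of equivalences yields the claim, and the only subtlety to watch is this last symmetrization step, which is where I would be most careful in writing it up.
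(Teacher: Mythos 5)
Your proof is correct and follows essentially the same route as the paper's: both express the marginal values $f(\cdot\mid\cdot)$ as cardinalities of point-set expressions for the unweighted coverage function and then finish with elementary set algebra. The only differences are cosmetic (you argue the converses via ``nested sets of equal size coincide,'' while the paper chases an explicit witness point), so nothing needs to change.
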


\begin{proof}
For part 1, note that
\begin{align*}
f(x | a) &=  |\points(x)| - |\points(x) \cap \points(a)| \\
f(x | b) &=  |\points(x)| - |\points(x) \cap \points(b)| \\
f(x | \{a,b\}) &=  |\points(x)| - |(\points(x) \cap \points(a)) \cup (\points(x) \cap \points(b))|\,.
\end{align*}
If $a$ and $b$ share the same set of points in $x$, then $\points(x) \cap \points(a) = \points(x) \cap \points(b)$ and so all three of the above quantities are equal. On the other hand, if $a$ and $b$ do not share the same set of points in $x$ then there must be some point in only one of $\points(a) \cap \points(x)$ and $\points(b) \cap \points(x)$. Suppose without loss of generality that this point is in $\points(a) \cap \points(x)$. Then, we have $|(\points(a) \cap \points(x)) \cup (\points(b) \cap \points(x))| > |\points(b) \cap \points(x)|$ and so the above 3 quantities are \emph{not} equal.

For part 2, note that 
\begin{align*}
f(a | x) &= |\points(a)| - |\points(a) \cap \points(b) \cap \points(x)| 
- |\points(a) \cap (\points(x) \setminus \points(b))| \\
f(a | \{b,x\}) &= |\points(a)| - |\points(a) \cap \points(b) \cap \points(x)| - |\points(a) \cap (\points(x) \setminus \points(b))| - |\points(a) \cap (\points(b) \setminus \points(x))|\,.
\end{align*}
The above two quantities are equal if and only if $|\points(a) \cap (\points(b) \setminus \points(x))| = 0$ and so $a$ and $b$ do not share any points other than those in $x$.
\end{proof}

We now show that our procedure $\FindNode$ returns a node $\nn$ of the appropriate tree satisfying the formal analogues of the intuitive properties (a)--(c) described in Section~\ref{sec:intu-behind-data}.
\begin{lemma}
\label{lem:find-node}
Let $\hat{\nn}$ be the node returned by $\FindNode(e,\nn_{f(e)})$. Let $d \geq 0$ be the depth of $\hat{\nn}$ and $\hat{p} = \parent(\hat{\nn})$. Then,
\begin{enumerate}[topsep=0pt,itemsep=-1ex,partopsep=1ex,parsep=1ex]
\item $|\points(e) \cap \points(\hat{p})| \geq d$ and $\points(e) \cap \points (\hat{p}) = \points(r) \cap \points(\hat{p})$ for all $r \in \all(\nn)$. 
\item $\points(e) \setminus \points(\hat{p})$ and $\points(r) \setminus \points(\hat{p})$ are disjoint for all $r \in \all(\nn)$.
\end{enumerate}
\end{lemma}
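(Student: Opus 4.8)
The plan is to prove a strengthened, inductive version of the statement that holds at \emph{every} recursive invocation of $\FindNode$, not only the top-level one. Concretely, I would establish the invariant: \emph{if $\FindNode(e,\nn)$ is called on a node $\nn$ of depth $d_0$ with $p=\parent(\nn)$ satisfying $|\points(e)\cap\points(p)|\geq d_0$ and $\points(e)\cap\points(p)=\points(r)\cap\points(p)$ for every $r\in\all(\nn)$, then the node $\hat{\nn}$ it returns (of depth $\hat{d}$, with $\hat{p}=\parent(\hat{\nn})$) satisfies the two conclusions of the lemma, with the quantifier over $r$ ranging over $\all(\hat{\nn})$.} Lemma~\ref{lem:find-node} is the special case $\nn=\nn_{f(e)}$: that node has depth $0$ and $\parent(\nn_{f(e)})=\bot$ covers no points, so $\points(e)\cap\points(\bot)=\emptyset$ and the hypothesis of the invariant holds trivially. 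The induction is on the recursion depth of the call, which is finite since each recursive call descends to a strictly deeper node of a finite-depth tree; the base case is a return with no recursion.

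The engine is Lemma~\ref{lem:point-queries}, which lets me read off the meaning of the two value-oracle tests in $\FindNode$: the test $f(e|p)\neq f(e|\{r,p\})$ on line~\ref{li:fn-test1} holds (part~2 of Lemma~\ref{lem:point-queries}, with $x=p$) exactly when $\points(e)\setminus\points(p)$ and $\points(r)\setminus\points(p)$ are \emph{not} disjoint, and the test $f(r|r')=f(r|e)=f(r|\{r',e\})$ on line~\ref{li:fn-test2} holds (part~1 of Lemma~\ref{lem:point-queries}, with $x=r$) exactly when $\points(e)\cap\points(r)=\points(r')\cap\points(r)$. The induction step then splits into the three ways $\FindNode$ returns. (a) If line~\ref{li:fn-test1} fails for every $r\in\all(\nn)$, we return $\hat{\nn}=\nn$ (so $\hat{p}=p$, $\hat{d}=d_0$): conclusion~2 is exactly the failure of line~\ref{li:fn-test1} over all $r$, and both halves of conclusion~1 are the induction hypothesis verbatim. (b) If some $r$ triggers line~\ref{li:fn-test1} and some child $\nn'$ of $r$ passes line~\ref{li:fn-test2}, we return $\FindNode(e,\nn')$ and invoke the inductive hypothesis on that call---once we verify its precondition. (c) If some $r$ triggers line~\ref{li:fn-test1} but no child of $r$ passes line~\ref{li:fn-test2}, a fresh child $\nn'$ of $r$ with $\all(\nn')=\emptyset$ is created and returned; the statements quantified over $r'\in\all(\hat{\nn})$ become vacuous, leaving only the cardinality bound $|\points(e)\cap\points(\hat{p})|\geq\hat{d}$, and since $\hat{p}=r$ and $\hat{d}=d_0+1$ this is the same estimate needed in case~(b).

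I expect the main obstacle to be verifying the precondition of the recursive call in case~(b): that $\points(e)\cap\points(r)=\points(r')\cap\points(r)$ for every $r'\in\all(\nn')$, and that $|\points(e)\cap\points(r)|\geq d_0+1$ (which is the depth of $\nn'$). The first is immediate from line~\ref{li:fn-test2} via part~1 of Lemma~\ref{lem:point-queries}. For the second, I would argue: the induction hypothesis gives $\points(e)\cap\points(p)=\points(r)\cap\points(p)$, hence $\points(e)\cap\points(p)\subseteq\points(e)\cap\points(r)$ and $|\points(e)\cap\points(p)|\geq d_0$; and since line~\ref{li:fn-test1} triggered, part~2 of Lemma~\ref{lem:point-queries} yields a point $q$ in both $\points(e)\setminus\points(p)$ and $\points(r)\setminus\points(p)$, so $q\in\points(e)\cap\points(r)$ while $q\notin\points(p)$ and therefore $q\notin\points(e)\cap\points(p)$. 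Thus $\points(e)\cap\points(r)$ contains $(\points(e)\cap\points(p))\cup\{q\}$ as a disjoint union, giving $|\points(e)\cap\points(r)|\geq d_0+1$. Once the precondition is in hand, the inductive hypothesis applied to $\FindNode(e,\nn')$ delivers the two conclusions for $\hat{\nn}$ and closes the induction; the lemma follows on taking $\nn=\nn_{f(e)}$.
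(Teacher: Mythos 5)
Your proposal is correct and follows essentially the same route as the paper: an induction along the recursion of $\FindNode$ using Lemma~\ref{lem:point-queries} to interpret the two oracle tests, with the same three-way case split on how the call returns and the same argument that a triggered test on line~\ref{li:fn-test1} contributes a new point to $\points(e)\cap\points(r)$, raising the cardinality bound to $d_0+1$. The only difference is presentational—you phrase the invariant as a precondition/postcondition for each call, while the paper first proves the invariant for all recursive calls by induction on depth and then treats the two return sites—so no substantive gap exists.
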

\begin{proof}
Consider any recursive call made to $\FindNode(e,\nn)$ during the execution $\FindNode(e,\nn_{f(e)})$ and let $p = \parent(\nn)$. We claim that in any such call $|\points(e) \cap \points(p)| \geq d$ and $\points(e) \cap \points (p) = \points(r) \cap \points(p)$ for all $r \in \all(\nn)$. This is true for $\nn=\nn_{f(e)}$, as $p = \parent(\nn_{f(e)}) = \bot$ and $\points(\bot) = \emptyset$. Suppose that the claim holds for all nodes of depth at most $d$, and let $\nn$ be a node of depth $d+1$ such that executing $\FindNode(e,\nn_{f(e)})$  results in a call to $\FindNode(e,\nn')$. This call was made in line~\ref{li:fn-rec} when executing some immediate predecessor call $\FindNode(e,\nn)$ in the recursion, where the depth of $\nn$ was $d$. Let $r = \parent(\nn')$ and $p = \parent(\nn)$. Then, $r \in \all(\nn)$ and by the induction hypothesis, $|\points(e) \cap \points(p)| \geq d$ and $\points(e) \cap \points(p) = \points(r) \cap \points(p)$. Additionally, due to  line~\ref{li:fn-test1} we must have $f(e|p) \neq f(e|\{r,p\})$ and so by Lemma~\ref{lem:point-queries}(2), $\points(e) \setminus \points(p)$ and $\points(r) \setminus \points(p)$ share at least one point. Thus, $|\points(e) \cap \points(r)| \geq d + 1$, as required. Due to line~\ref{li:fn-test2}, we must also have $f(r|r') = f(r|e) = f(r|\{r',e\})$ for all $r' \in \all(\nn')$, and so by Lemma~\ref{lem:point-queries}(1), $\points(e) \cap \points(r) =\points(r') \cap \points(r)$ for all $r' \in \all(\nn')$, as required. This completes the proof of the induction step.

Now to prove the lemma, we note that any node $\hat{\nn}$ returned by $\FindNode(e,\nn_{f(e)})$ must either be returned directly by some call $\FindNode(e,\hat{\nn})$ in line~\ref{li:fn-returncurrent} or be a new child node of some $\nn$,  returned by $\FindNode(e,\nn)$ in line~\ref{li:fn-makechild}. If the former case happens, then the first part of the lemma follows immediately from the claim above. Moreover, in this case, we must have $f(e|\hat{p}) = f(e|\{r,\hat{p}\})$ for all $r \in \all(\hat{\nn})$ due to line~\ref{li:fn-test1} and so by Lemma~\ref{lem:point-queries}(2), $\points(e) \setminus \points(\hat{p})$ and $\points(r) \setminus \points(\hat{p})$ are disjoint for all $r \in \all(\hat{\nn})$.

On the other hand, suppose that $\hat{\nn}$ is returned as a new child of some node $\nn$, and let $p = \parent(\nn)$ and $d$ be the depth of $\nn$. Then, due to line~\ref{li:fn-test1}, we must have $f(e|p) \neq f(e|\{r,p\})$ for $r = \parent(\hat{\nn}) = \hat{p}$ and so, by Lemma~\ref{lem:point-queries}(2), there is at least one point in both $\points(e) \setminus \points(p)$ and $\points(r) \setminus \points(p)$. By our induction claim above, $|\points(e) \cap \points(p)| \geq d$ and $\points(e) \cap \points(p) = \points(r) \cap \points(p)$. Thus, $|\points(e) \cap \points(\hat{p})| = |\points(e) \cap \points(r)| \geq d+1$, which is the depth of $\hat{\nn}$. Moreover, $\all(\hat{\nn}) = \emptyset$, so the rest of the lemma holds trivially.
\end{proof}

Let $\nn$ be the node returned by $\FindNode(e,\nn_{f(e)})$ when $e$ arrives. 
The next lemma summarizes the property guaranteed by $\nn$. The first two items are easy consequences of 
Lemma~\ref{lem:find-node}, while the last two items are the invariants guaranteed by the way in which we update representative sets in 
$\ProcessElem(e,\nn)$.

\begin{lemma}
\label{lem:invariants}
Suppose that $e$ is processed at a node $\nn$ by $\ProcessElem(e,\nn)$ during Algorithm~\ref{alg:streaming-guess-oracle}, and let $p = \parent(\nn)$. Then,
\begin{enumerate}[topsep=0pt,itemsep=-1ex,partopsep=1ex,parsep=1ex]
\item $\points(e) \cap \points(p) = \points(r) \cap \points(p)$ for all $r \in \all(\nn)$ and $|\points(e) \cap \points(p)|$ is at least the depth of $\nn$.
\item $\points(e) \setminus \points(p)$ is disjoint from $\points(r) \setminus \points(p)$ for all $r \in \all(\nn)$.
\item $e$ is added to at most one set $R_j$ stored in $\nn$. Thus, all sets $R_j$ stored in $\nn$ are mutually disjoint.
\item
If $e$ is added to some $R_j$ stored in $\nn$ then $e$ stays always as part of $R_j$ even after any subsequent call to $\ProcessElem(x,\nn)$ for $x \in X$.
\end{enumerate}
\end{lemma}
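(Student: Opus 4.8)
The plan is to establish the four claims in order, piggybacking on the two structural lemmas already proved for the helper procedures $\FindNode$ and $\ProcessElem$.

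Parts 1 and 2 are immediate from Lemma~\ref{lem:find-node}. The node $\nn$ at which $e$ is processed is, by the main loop of Algorithm~\ref{alg:streaming-guess-oracle}, exactly the node returned by $\FindNode(e,\nn_{f(e)})$, and nothing is inserted into $\all(\nn)$ between that return and the call to $\ProcessElem(e,\nn)$; its depth is the $d$ of that lemma. Part~(1) of Lemma~\ref{lem:find-node} then gives both $|\points(e)\cap\points(p)| \ge d$ and $\points(e)\cap\points(p) = \points(r)\cap\points(p)$ for every element $r$ currently stored at $\nn$, and part~(2) gives that $\points(e)\setminus\points(p)$ is disjoint from $\points(r)\setminus\points(p)$. (If $\nn$ is a newly created child then $\all(\nn)=\emptyset$ and the two equalities hold vacuously, while $|\points(e)\cap\points(p)|\ge d$ still holds, as noted in the proof of Lemma~\ref{lem:find-node}.)

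Part 3 follows by inspecting $\ProcessElem(e,\nn)$: the loop over $j=1,\ldots,z$ updates $R_j\gets R'_j$ and returns at the first index with $e\in R'_j=\RepSet(R_j+e)$, so $e$ enters at most one of $R_1,\ldots,R_z$. Since each stream element triggers exactly one $\ProcessElem$ call, an element is added to at most one set at one node over the whole run and is otherwise only ever removed; hence no element ever lies in two of the sets $R_j$ of a node, i.e.\ they stay mutually disjoint. I would package this as a short induction over the stream: the update replaces a single $R_j$ by a subset of $R_j\cup\{e\}$, leaves the other sets alone, and $e$ belonged to none of them before, so disjointness is preserved.

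Part 4 is the crux and I expect it to be the main obstacle, because it needs two facts about $\RepSet$ that have not yet been isolated. The enabling observation is that in every call to $\ProcessElem$ the weight $w$ orders elements strictly by arrival time, so this order is globally consistent across all calls and a freshly arriving element $x$ always has strictly smaller weight than every element already stored at $\nn$. I would first show (i) \emph{monotonicity}: if $w(x)<w(a)$ for all $a\in S$ then $\RepSet(S)\subseteq\RepSet(S+x)$. This is an induction on the recursion tree of $\Guess$, comparing $\Guess(J,Y)$ with $\Guess(J,Y')$ for $Y'\in\{Y,\,Y+x\}$: as $x$ is of minimal weight it is never the selected maximum when $Y\neq\emptyset$, so the same element is picked, the same recursive calls are spawned, and each child's candidate set again differs only by possibly containing $x$, still of minimal weight. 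I would then show (ii) \emph{idempotence} $\RepSet(\RepSet(S))=\RepSet(S)$, using the elementary remark that every element selected at a node of the recursion tree of $\Guess((\emptyset,\ldots,\emptyset),S)$ is unioned into the output $R:=\RepSet(S)$; hence running $\Guess$ on $R$ reaches the same sequence of internal states $J$ and selects the same elements at each of them, because the candidate set at state $J$ for ground set $R$ is the candidate set for $S$ intersected with $R$, and the $S$-maximum of that set already lies in $R$. Finally I maintain the invariant that every stored set $R_j$ satisfies $\RepSet(R_j)=R_j$: this holds initially ($R_j=\emptyset$) and is preserved because each update sets $R_j\gets\RepSet(R_j+x)$, which is $\RepSet$-stable by (ii). Then if $e\in R_j$ and a later $\ProcessElem(x,\nn)$ updates $R_j$ to $\RepSet(R_j+x)$ with $x$ of minimal weight, (i) gives $R_j=\RepSet(R_j)\subseteq\RepSet(R_j+x)$, so $e$ survives. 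The delicate point throughout is verifying that the two recursion trees compared in (i) and (ii) really stay in lockstep — in particular at leaves, where one call may have an empty candidate set while the other has the singleton $\{x\}$.
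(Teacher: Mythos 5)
Your proof is correct, and for parts (1)--(3) it coincides with the paper's own argument: parts (1) and (2) are read off from Lemma~\ref{lem:find-node} applied to the node returned by $\FindNode(e,\nn_{f(e)})$, and part (3) comes from the early return in $\ProcessElem$. For part (4) the underlying mechanism is the same as the paper's --- a lockstep induction over the recursion trees of $\Guess$, using that the arrival-order weights make a newly arriving $x$ strictly lighter than everything already stored (so it never changes which element is selected) together with the fact that every element selected in the tree is placed in the output --- but the packaging differs. The paper runs a single induction comparing the tree of $\RepSet(T)$, where $T$ is the previous input whose output was the stored $R_j$, against the tree of $\RepSet(R_j+x)$, proving the sandwich containment $(Y\cap R_j)\subseteq A\subseteq (Y\cap R_j)+x$ for corresponding candidate sets; this sidesteps any need to know that $R_j$ is a fixed point of $\RepSet$. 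You instead isolate two reusable facts, monotonicity $\RepSet(S)\subseteq\RepSet(S+x)$ for a minimum-weight $x$ and idempotence $\RepSet(\RepSet(S))=\RepSet(S)$, and maintain the invariant $\RepSet(R_j)=R_j$ across updates; each of your two lemmas is proved by exactly the kind of tree comparison the paper performs once. Your decomposition is somewhat more modular and makes the fixed-point property of the stored sets explicit, at the cost of an extra induction, while the paper's single sandwich argument handles the ``restrict to $R_j$'' and ``add $x$'' steps simultaneously and is correspondingly shorter. The delicate points you flag --- the leaf case where one tree's candidate set is empty and the other's is $\{x\}$, and the global consistency of the arrival-order weights across calls to $\ProcessElem$ --- are precisely the points the paper's induction also has to (and does) handle, so your plan goes through as stated.
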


\begin{proof}
Parts (1) and (2) follow directly from Lemma~\ref{lem:find-node}. Part (3) follows immediately from the fact that we return from the loop in $\ProcessElem(e,\nn)$ the first time that $e$ is added to one of the sets $R_1,\ldots,R_z$, and so $e$ is added to at most one set. Intuitively, part (4) follows from the fact that we assign elements dummy weights in descending order of arrival in line 25 of the algorithm. Thus, the procedure $\RepSet$ will always prefer adding an element already in the representative set $R_j$ over the new element $x$.

Formally, consider some set $R_j$ stored at node $\nn$ at some time during the algorithm and consider any element $e \in R_j$. Let $T$ be the set of elements for which $R_j$ was the output of $\RepSet(T)$. Consider the next element $x$ for which $\ProcessElem(x,\nn)$ is called to update $R_j$, and let $R_j'$ be the resulting set output by $\RepSet(R_j + x)$. We will show that $e \in R_j'$ as well. Part (4) then follows by induction on the stream of elements processed at node $\nn$. 

To prove that $e \in R'_j$ as claimed, we show by induction on $\|J\|$ that if there is some call
$\Guess(J,Y)$ in $\cT$, then there must also be a corresponding call
$\Guess(J,A)$ in $\cT'$ with
$(Y\cap R_j) \subseteq A \subseteq (Y\cap R_j)+x$. 
Then, since the elements of $R_j$ are precisely those that are added to $J$ by some call $\Guess(J,Y)$ in $\cT$, we must have $e \in J$ for some call $\Guess(J,Y)$ in $\cT$. The corresponding call $\Guess(J,A)$ in $\cT'$ will then also have $e \in J$ and so $e$ will be included in $R'_j$.

To prove the inductive claim, first we note that $\|J\| = 0$,
the roots of $\cT$ and $\cT'$ correspond to calls
$\Guess((\emptyset,\ldots,\emptyset),T)$ and
$\Guess((\emptyset,\ldots,\emptyset),R_j+x)$, respectively, and so the
claim follows, as $R_j \subseteq T$. Suppose now that the claim holds for all 
$\|J\| \leq d$ and consider some call $\Guess(J,Y)$ in $\cT$,
where $\|J\| = d+1 > 0$. Consider the parent $\Guess(J',Y')$ of this
call in $\cT$ so that $Y = Y' \setminus \spa_{M_i}(J'_i + e')$ and
$J = J' +_i e'$ for some $e' = \argmax_{a \in Y'}w(a)$ with
$X_i \in X(e')$, and $\|J'\| \leq d$. By the induction hypothesis,
there is then some corresponding call $\Guess(J',A')$ in $\cT'$, where
$(Y' \cap R_j) \subseteq A' \subseteq (Y' \cap R_j) +x$. Since $e'$
was selected by $\Guess(J',Y')$, $e' \in Y' \cap R_j$ and so
$e' = \arg\max_{a \in Y'}w(a) = \arg\max_{a \in Y' \cap
  R_j} w(a)$. Moreover, since $e' \in R_j$, $x$ must arrive after $e'$, and so $w(x) < w(e')$ and
$Y' \cap R_j \subseteq A' \subseteq (Y' \cap R_j) + x$. Thus 
$e' = \arg\max_{a \in A'}w(a)$ and so $\Guess(J',A')$ will also select
$e'$, resulting in a child call $\Guess(J,A)$, with $J = J' +_i e'$
and $A = A' \setminus \spa_{M_i}(J'_i + e')$. To complete the
induction step, it remains show that
$(Y \cap R_j) \subseteq A \subseteq (Y \cap R_j) + x$. For any
$y \in (Y \cap R_j)$ we must have $y \in (Y' \cap R_j) \subseteq A'$
and also $y \not\in \spa_{M_i}(J'_i + e')$, since
$y \in Y = Y'\setminus \spa_{M_i}(J'_i + e')$. Then, $y \in A$ as
well. Similarly, for any $a \in A - x$ we must have
$a \in A' - x \subseteq (Y'\cap R_j)$ and
$a \not\in \spa_{M_i}(J'_i+e')$, since
$A = A' \setminus \spa_{M_i}(J'_i + e')$. Then, $a \in (Y \cap R_j)$ 
as well. Thus, $(Y \cap R_j) \subseteq A \subseteq (Y\cap R_j)+x$, as
required. This completes the induction step.
\end{proof}

Using the above properties, we now prove our main result. Note that once the set $R$ has been computed, the following theorem implies that we can find a set $S$ that is feasible for $\cM$ with $f(S) \geq z$ by using at most $|R|^z$ value oracle queries, as in the proof we will show that our kernel $R$ 
must in fact contain a feasible set of \emph{at most $z$} elements $S$ with $f(S) \geq z$. It follows that, when parameterized by $z =  f(O)$, we can optimize an unweighted coverage function $f$ in a general $\ell$-matchoid using at most a polynomial number of value queries to $f$.
\begin{theorem}
\label{thm:main-oracle}
Consider an $\ell$-matchoid $\cM=\{M_i=(X_i, \I_i)\}_{i=1}^s$ and let $f : 2^X \to \posints$ be a value oracle for an unweighted coverage function. Then Algorithm~\ref{alg:streaming-guess} produces a kernel $R$ for finding a feasible set $S$ in $\cM$ with $f(S) \geq z$. For $\ell = 1$, $|R| \leq  N_{1,z} \triangleq \cO\bigl(2^{(z-1)^2}z^{2z+1}\bigr)$ and the algorithm requires storing $\cO(N_{1,z}\log n)$ bits in total and makes at most $\cO(\Gamma_{1,z} 2^{z-1}z^2 n)$ value queries to $f$, where $\Gamma_{1,z}=z$. 
For $\ell > 1$, $|R| \leq N_{\ell,z} \triangleq \cO\bigr(2^{(z-1)^2}\ell^{z(z-1)\ell}z^{z+1}\bigl)$ and the algorithm requires storing at most $\cO(N_{\ell,z}\log n)$ bits in total and makes at most $\cO(\Gamma_{l,z}2^{z-1}z^2 n)$ value queries to $f$, 
where $\Gamma_{l,z} = \cO(l^{(z-1)l})$. 

\end{theorem}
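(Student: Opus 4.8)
The plan is to fix an arbitrary feasible solution $O$ for $\cM$ with $f(O) \geq z$ (if no such $O$ exists there is nothing to prove), and to show that the kernel $R$ output by the algorithm contains a feasible set $S$ with $|S| \leq z$ and $f(S) \geq z$. I would build $S$ greedily, in at most $z$ rounds, by replacing the elements of $O$ one at a time with representatives drawn from $R$, exactly in the style of Theorem~\ref{thm:matchoid-offline}, but now additionally controlling the \emph{coverage} and not merely a linear weight. So the core is an exchange argument: for each $b \in O$ not already in $R$, I must exhibit some $e \in R$ such that $O - b + e$ is feasible for $\cM$ \emph{and} $f(e \mid O - b) \geq f(b \mid O - b)$; iterating this over $O$ and using submodularity/coverage telescoping yields a set $S \subseteq R$ of size at most $|O| \leq k$ with $f(S) \geq f(O) \geq z$, and then a trivial pruning keeps at most $z$ elements (since $f$ is integer-valued and each added element that does not increase $f$ can be dropped).

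The heart of the argument is producing the representative $e$ for a given $b \in O$. Here I would use the structure guaranteed by Lemmas~\ref{lem:find-node} and~\ref{lem:invariants}. When $b$ arrived, it was routed by $\FindNode$ to some node $\nn$ with $\parent(\nn) = p$, and $\ProcessElem(b,\nn)$ tried to insert $b$ into $R_1, \dots, R_z$. If $b$ ended up stored in $R$, take $e = b$. Otherwise, $b$ failed to enter \emph{every} $R_j$, which by Theorem~\ref{thm:main-repset} (each $R_j$ is a joint $z$-representative set for its input together with $b$) means each $R_j$ already contains an element $e_j$ with $O - b + e_j$ feasible for $\cM$ and $w(e_j) \geq w(b)$. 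By Lemma~\ref{lem:invariants}(3) the sets $R_j$ are disjoint, so this gives $z$ \emph{distinct} candidate elements $e_1, \dots, e_z$. By Lemma~\ref{lem:invariants}(1) each $e_j$ covers exactly $\points(b) \cap \points(p)$ inside $\points(p)$, and by Lemma~\ref{lem:invariants}(2) the sets $\points(e_j) \setminus \points(p)$ are pairwise disjoint. Now $\points(O - b)$ has size at most $z - 1$ (since $f(O - b) < f(O)$ would be needed... more precisely, since the remaining $|O|-1$ elements each have $f$-value $< z$, but I only need $|\points(O-b) \setminus \points(p)|$ bounded): among the $z$ pairwise-disjoint sets $\points(e_j) \setminus \points(p)$, at most $z - 1$ of them can intersect $\points(O - b)$, so \emph{some} $e_j$ has $\points(e_j) \setminus \points(p)$ disjoint from $\points(O - b)$. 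For that $e_j =: e$, adding $e$ to $O - b$ newly covers all of $\points(e) \setminus \points(p)$, plus it covers $\points(b) \cap \points(p)$; I then verify $f(e \mid O - b) \geq |\points(e) \setminus \points(O-b)| \geq |\points(e) \setminus \points(p)| + |\points(b) \cap \points(p)| \geq f(b \mid O - b)$ by a short counting identity analogous to the displays in the proof of Lemma~\ref{lem:point-queries}. This $e$ is the required representative.

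The main obstacle I anticipate is getting the "diffuse enough" counting exactly right: one must pin down precisely which points of $\points(b)$ are charged against $\points(O-b)$, separating the part inside $\points(p)$ (which all $e_j$ already cover, identically to $b$) from the part outside (which is spread disjointly across the $z$ candidates), and confirm that the pigeonhole really needs only $z$ candidates rather than more — this is where the choice of exactly $z$ disjoint representative sets per node is used. A secondary point to be careful about is the bookkeeping in the iterated exchange: after replacing $b_1, \dots, b_{r-1}$ by elements of $R$, the current set $O_{r-1}$ is feasible with $f(O_{r-1}) \geq f(O)$, but $O_{r-1}$ is no longer the fixed $O$, so I must re-invoke the node/invariant structure with respect to the original arrival of $b_r$ while arguing feasibility of $O_{r-1} - b_r + e$ — here I would first use the joint $z$-representative property to exchange within $O_{r-1}$ as in Theorem~\ref{thm:matchoid-offline}, and separately track coverage, rather than trying to do both simultaneously; Lemma~\ref{lem:invariants}(4) (elements never leave $R$) is what makes the representatives from earlier rounds still available. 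Finally, the size, space, and query bounds $N_{\ell,z}$ follow by multiplying: at most $z - 1$ trees, each of depth at most $z - 1$ (depth is bounded by $|\points(p)|$ which is at most $z-1$), branching at each node bounded by (number of stored elements) $\times$ ($2^{z-1}$ possible point-subsets of an element), and $\Gamma_{\ell,z}$ elements per node from Theorem~\ref{thm:main-repset}; I would just assemble these factors and invoke the per-element processing cost from Theorem~\ref{thm:main-streaming} together with the $\cO(|R|^2)$ value queries per arrival needed to evaluate the tests in Lemma~\ref{lem:point-queries}.
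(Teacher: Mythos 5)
Your overall route is the same as the paper's (iterated exchange through the node that processed each $b_r$, pigeonhole over the $z$ disjoint representative sets, invariants (1)--(4), and the same tree-size accounting), but there is a genuine gap exactly at the step you flagged as the main obstacle. Your pigeonhole needs $|\points(O-b)|\leq z-1$ (or at least $|\points(O-b)\setminus\points(p)|\leq z-1$), and none of your attempted justifications gives this: ``each remaining element has $f$-value $<z$'' only yields $(|O|-1)(z-1)$, and in general $f(O-b)$ can be far larger than $z-1$ (e.g.\ $z$ elements covering $z-1$ points each, pairwise disjointly), in which case all $z$ disjoint sets $\points(e_j)\setminus\points(p)$ could meet $\points(O-b)$ and no suitable representative is guaranteed. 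Consequently your stated target --- a representative with $f(e\mid O-b)\geq f(b\mid O-b)$ for \emph{every} $b$, telescoping to $f(S)\geq f(O)$ --- is too strong and cannot be proved this way. The paper's proof resolves this with a per-step dichotomy inside the induction (whose invariant is only $f(O_r)\geq z$, not $f(O_r)\geq f(O)$): if $f(O_{r-1}-b_r)\geq z$, simply drop $b_r$ and take $S_r=S_{r-1}$; otherwise $f(O_{r-1}-b_r)\leq z-1$, which is precisely the bound that makes the pigeonhole over the $z$ disjoint candidate sets work. This small case split is the missing idea.

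Two secondary points. First, to invoke the joint $z$-representative property for $\cB=O_{r-1}$ you need $|O_{r-1}|\leq z$ (the calls in $\ProcessElem$ run $\RepSet$ with parameter $k=z$); the paper arranges this by pruning $O$ to at most $z$ elements \emph{before} the induction (fix $z$ covered points and discard redundant elements of $O$), whereas your end-stage pruning of $S$ does not repair the exchange step when $|O|>z$. Second, your sketched chain $f(e\mid O-b)\geq|\points(e)\setminus\points(p)|+|\points(b)\cap\points(p)|$ is not correct as written, since points of $\points(b)\cap\points(p)$ may already be covered by $O-b$; the correct comparison (as in the paper's display) is $f(e\mid O-b)=|\points(e)|-|(\points(e)\cap\points(p))\cap\points(O-b)|=|\points(b)|-|(\points(b)\cap\points(p))\cap\points(O-b)|\geq f(b\mid O-b)$, using $f(e)=f(b)$ and invariant (1); this is easily fixed but should be stated as in the paper. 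Your size, space, and query accounting matches the paper's.
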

\begin{proof}
Let $O = \{b_1,\ldots,b_{k}\}$ with $f(O) \geq z$. We may suppose without loss of generality that $k \leq z$, as follows. Fix some set $\pZ$ of $z$ points covered by $O$. 
Then, as long as $|O| \geq z$, there must exist some $b_r \in O$ that can be removed from $O$ while leaving all of $\pZ$ covered. Then, $O - b_r$ is feasible for $\cM$ with $f(O - b_r) \geq z$.

Let $R$ be the output of Algorithm~\ref{alg:streaming-guess-oracle}. If $f(b_{r}) \geq z$, for some $b_r$, then Algorithm~\ref{alg:streaming-guess-oracle} will return an element $e$ so that $f(e) \geq z$ and the theorem holds trivially. 
So in the following, we suppose that $f(b_r) \leq z-1$ for all $1 \leq r \leq k$ and so every $b_r \in O$ is processed by some call to $\ProcessElem(b_r,\nn)$ upon arrival, where $\nn$ is the node returned by $\FindNode(b_r,\nn_{f(b_r)})$.

We show by induction on $0 \leq r \leq k$ that there is some set $S_r \subseteq R$ with $O_r = O \setminus \{b_1,\ldots,b_r\} \cup S_r$ feasible for $\cM$, $|S_r| \leq r$, and $f(O_r) \geq z$. For $r = 0$, this holds trivially by setting $S_0 = \emptyset$ and $O_0 = O$. For the general case $r>0$, 
We show how to construct $O_{r}$, assuming that $O_{r-1}$ with the desired properties exists. 
If $f(O_{r-1} - b_{r}) \geq z$, then letting $S_{r} = S_{r-1}$ we have $f(O_{r}) = f(O_{r-1} - b_{r}) \geq z$ and so the claim holds easily. Thus in the following we assume that $f(O_{r-1} - b_{r}) \leq z-1$.

Let $\nn$ be the node that processes element $b_{r}$ (i.e.\ the node for which $\ProcessElem(b_{r},\nn)$ is called in Algorithm~\ref{alg:streaming-guess-oracle}) and let $p = \parent(\nn)$. If $b_{r}$ is added to some set stored in $\nn$, then letting $S_{r} = S_{r-1} + b_{r} \subseteq R$ we have $|S_{r}| = |S_{r-1}| +1 \leq r$ and $f(O_{r}) = f(O_{r-1}) \geq z$ as required. On the other hand if $b_{r}$ is not added to any of the sets $R_1,\ldots,R_z$ stored in $\nn$, then Algorithm~\ref{alg:streaming-guess-oracle} must have called $\RepSet(R_j + b_{r})$ to construct a joint $z$-representative set for $(R_j + b_{r},\cM,w)$ for each $1 \leq j \leq z$. 
By Theorem~\ref{thm:main-repset} there is then some element $e_j$ in the output of each of these sets so that $O_{r-1} - b_{r} +e_{j}$ is feasible in $\cM$. By Lemma~\ref{lem:invariants}(1), (2), and (3), each $e_j$ has $\points(e_j) \cap \points(p) = \points(b_{r}) \cap \points(p)$ and the sets $\{\points(e_j) \setminus \points(p)\}_{j = 1}^z$ are mutually disjoint. As $|\points(O_{r-1} - b_{r})| = f(O_{r-1}-b_{r}) \leq z-1$, we have at least one element $e \in \{e_1,\ldots,e_z\}$ for which $\points(e) \setminus \points(p)$ is disjoint from $\points(O_{r-1} - b_{r})$.
Let $S_{r} = S_{r-1} + e$.  Then $|S_{r}| = |S_{r-1}| + 1 \leq r$ as required. Additionally, $O_{r} = O_{r-1} - b_{r} + e$ is feasible for $\cM$. By Lemma~\ref{lem:invariants}(4), we will also have $e \in R$ for the final set of elements $R$ produced by Algorithm~\ref{alg:streaming-guess-oracle}.

It remains to show that $f(O_{r}) \geq z$. We note that:
\begin{align*}
f(e|O_{r-1}-b_{r}) &= |\points(e)| - |\points(e) \cap \points(O_{r-1} -b_{r})|  \\
&= |\points(e)| - |(\points(e) \cap \points(p)) \cap \points(O_{r-1}-b_{r})|
- |(\points(e) \setminus \points(p)) \cap \points(O_{r-1}-b_{r})| \\
&= |\points(e)| - |(\points(e) \cap \points(p)) \cap \points(O_{r-1}-b_{r})| \\
&= |\points(b_{r})| - |(\points(b_{r}) \cap \points(p)) \cap \points(O_{r-1}-b_{r})| \\
&\geq |\points(b_{r})| - |\points(b_{r}) \cap \points(O_{r-1}-b_{r})| \\
&= f(b_{r} | O_{r-1}-b_{r})\,,
\end{align*}
where the third equation follows from the fact that $\points(e) \setminus \points(p)$ is disjoint from $\points(O_{r-1} - b_{r})$, and the fourth from $\points(e) \cap \points(p) = \points(b_{r}) \cap \points(p)$, as well as $|\points(b_{r})| = f(b_{r}) = f(e) = |\points(e)|$ since both $e$ and $b_{r}$ were stored in the same tree. Thus:
\[
f(O_{r}) = f(e|O_{r-1} - b_{r}) + f(O_{r-1} - b_{r})\geq 
f(b_{r} | O_{r-1} - b_{r}) + f(O_{r-1} - b_{r}) = f(O_{r-1}) \geq z,
\]
as required. This completes the proof of the induction step. The first claim of the theorem then follows by setting $r = k$ and noting that $O_k  = S_k \subseteq R$ and $|S_k| \leq k \leq z$. 

We next discuss the space requirement. 
Consider some tree with root $\nn_x$, where $1 \leq x \leq z-1$. We note that by Theorem~\ref{thm:matchoid-offline}, each of the $z$ sets $R_1,\ldots,R_z$ stored at any node of a tree has size at most $\Gamma_{\ell,z} \triangleq \sum_{q = 0}^{(z-1)\ell}\ell^q$. Thus each node in the tree stores at most $\Gamma_{\ell,z}z$ elements. Whenever $\FindNode(e,\nn)$ returns a new child $\nn'$ with $\parent(\nn') = r \in \all(\nn)$, we must have $\points(e) \cap \points(r) \neq \points(e) \cap \points(r')$ for all $r'$ stored in the child nodes of $\nn$ associated with $r$. By Lemma~\ref{lem:invariants}(1), all the elements $r'$ in each such existing child node cover some common set points covered by $\points(r)$. Thus, any element $r$ can have at most $2^{|\points(r)|} = 2^{f(r)} \leq 2^{z-1}$ associated child nodes. Altogether, then, a node in the tree has at most $2^{z-1}\Gamma_{\ell,z}z$ children. Additionally, Lemma~\ref{lem:invariants}(1) implies that the tree has depth at most $x \leq z-1$. Thus, the total number of nodes in the tree is at most
$N \triangleq \sum_{d = 0}^{z-1}(2^{z-1}\Gamma_{\ell,z}z)^d = \cO\bigl( 2^{(z-1)^2}(\Gamma_{\ell,z})^{z-1}z^{z-1} \bigr)$, with each storing at most $\Gamma_{\ell,z}z$ elements. We maintain $z-1$ such trees, so the total number of elements stored across all trees is at most $\cO\bigl(N\Gamma_{\ell,z} z^2) = \cO\bigl(2^{(z-1)^2}(\Gamma_{\ell,z})^zz^{z+1}\bigr)$. 
The total memory required by the algorithm is at most that required to maintain all of the stored elements in a dynamic tree, which requires $\cO(N\Gamma_{\ell,z} z\log n)$ total bits per tree and so $\cO(N\Gamma_{\ell,z} z^2\log n)$ bits in total. When an element $e$ arrives, we make several calls to $\RepSet(\outp_i+e)$. As shown in the proof of Theorem~\ref{thm:main-streaming}, this can be accomplished by temporarily storing only $\cO(z\ell \log n) = \cO(N \log n)$ further bits. Altogether then, the algorithm stores at most $\cO(N\Gamma_{\ell,z} z^2\log n)$ bits at all times during its execution.
For $\ell = 1$, $\Gamma_{\ell,z} = z$, and so $N\Gamma_{\ell,z} z^2 = \cO\bigl( 2^{(z-1)^2}z^{2z+1}\bigr)$. For $\ell > 1$, $\Gamma_{\ell,z} = \cO(\ell^{(z-1)\ell})$ and so $N\Gamma_{\ell,z} z^2 = \cO\bigl(2^{(z-1)^2}(\Gamma_{\ell,z})^zz^{z+1}\bigr) =  \cO\bigl(2^{(z-1)^2}\ell^{z(z-1)\ell}z^{z+1}\bigr)$. 

Finally, we consider the number of value queries.  When descending the tree in the procedure $\FindNode(e,\nn)$, inside each node $\nn$, 
we need to check possibly all elements stored in $\nn$ in Line 13 using value queries, and there can be at most $\Gamma_{\ell,z}z$ of these. 
Furthermore, if the condition in Line 13 holds for some element $r$, we need to check all its child nodes. There can be $2^{z-1}$ such child nodes, 
and for each one, in Line 15, we need to check all its elements using value queries. Each such child again has at most $\Gamma_{\ell,z}z$ elements. 
In summary, inside each node we need $O(\Gamma_{l,z} 2^{z-1}z)$ oracle calls, implying that a total of 
$O(\Gamma_{l,z}  2^{z-1}z^2)$ value queries for processing one new element.

\end{proof}

\section{Improved Algorithms in the Explicit Model}
\label{sec:impr-algor-expl} 

We now consider the weighted version of \textsc{Maximum $(\cM,z)$-Coverage}, in which we additionally have a weight function $\pw : \pU \to \posreals$ and now must find a feasible set $S$ for $\cM$ that covers (up to) $z$ points of maximum total weight. Unlike Section~\ref{sec:an-algor-unwe}, here a critical difference is that we assume that the sets of points corresponding to each element are given explicitly. 
Note that here we allow our solution to cover more than $z$ points, but consider only the $z$ heaviest points in computing the objective. Thus, if $O$ is some optimal solution, by setting $z$ to be the total number of points covered by $O$, then our results imply that we can find  a solution $S$ that has $f(S) \geq f(O)$. In fact, our result implies a stronger guarantee, as it ensures that the heaviest $z$ points covered by $S$ have alone total weight at  least as large as those covered by $O$.

We combine multiple joint $z$-representative sets with a color coding procedure to obtain our results for Maximum $(\cM,z)$-coverage. To this end, we consider a hash function $h : \pU \to [\bar{z}]$, where $\bar{z}$ is the smallest power of $2$ that is at least $z$. For each point $\pp$, we call the value $h(\pp) \in [\bar{z}]$ the \emph{color} assigned to $\pp$. For an element $e \in X$, we further define $h(e) = \{h(\pp) : \pp \in e\}$ to be the set of all colors that are assigned to the points covered by $e$. 

For any set of points $\pT \subseteq \pU$, we let $\pw(\pT) = \sum_{\pp \in \pT}\pw(\pp)$ denote the total weight assigned to these points by the weight function $\pw$. We fix any solution $O$ to the problem and let $\pZ$ be a set of up to $z$ points covered by $O$. Fix $h : \pU \to [\bar{z}]$. We say that $\pZ$ is \emph{well-colored} by $h$ if $h$ is injective on $\pZ$ (i.e.\ $h$ assigns each point $\pp \in \pZ$ a unique color in $[\bar{z}]$). Suppose now that $\pZ$ is well-colored by $h$. For each possible subset $C \subseteq [\bar{z}]$ of colors, and each set $S \subseteq X$ of elements such that $C \subseteq h(\points(S))$, we define
\[
f_C(S) = \sum_{c \in C}\max\{ \pw(\pp) : \pp \in \points(S),\, h(\pp) = c \}
\]
to be the sum of the weights of the single heaviest point of each color in $C$ that is covered by some element of $S$. Let $X_C = \{e \in X : C \subseteq h(e) \}$ be the set of all elements containing at least one point assigned each color of $C$. Then, we can define $w_C(e) : X_C \to \reals$ by $w_C(e) = f_C(\{e\})$. Note that to compute $w_C(e)$ it is enough to simply remember the ``heaviest'' point in $\points(e)$ of each color. Thus, in the streaming setting we can maintain all of our constructions by using only the set of $|h(e)| \leq \bar{z} = \cO(z)$ points 
and all other points can be discarded.

Let $C$ and $C'$ be two disjoint sets of colors and suppose $A \subseteq X$ with $C \subseteq h(\points(A))$ and $b \in X_{C'}$ (so $C' \subseteq h(\points(b))$). Then, $C\cup C' \subseteq h(\points(A+b))$ and so
\begin{align}
f_C(A) + w_{C'}(b) 
&= \sum_{c \in C}\max \{\pw(\pp) : \pp \in \points(A),\, h(\pp) = c\}
+ \sum_{c \in C'}\max \{\pw(\pp) : \pp \in \points(b),\, h(\pp) = c\}
\nonumber \\
&\leq 
\sum_{c \in C \cup C'} \max \{\pw(\pp) : \pp \in \points(A+b),\, h(\pp) = c\} 
= f_{C \cup C'}(A + b)\,, \label{eq:cover-weight-ineq}
\end{align}
since for any point $\pp \in \points(A)$ with $h(\pp) = c \in C$ or any point $\pp \in \points(b)$ with $h(\pp) = c \in C'$, we must also have $\pp \in \points(A+b)$ with $h(\pp) = c \in C \cup C'$.

Using the above constructions, we now show how to combine multiple $z$-representative sets to obtain a kernel for \textsc{Maximum $(\cM,z)$-Coverage}. 
\begin{lemma}\label{lem:coverage-repset}
Let $\cM = \{M_i=(X_i,\I_i)\}_{i=1}^s$ be an $\ell$-matchoid. Suppose that $R = \bigcup_{C \subseteq [\bar{z}]}R_C$, where each $R_C$ is a joint $z$-representative set for $(X_C,\cM, w_C)$. Let $O$ be any set that is 
feasible for $\cM$. Let $\pZ$ be a set of up to $z$ points of maximum weight covered by $O$ and suppose that $\pZ$ is well-colored by $h$. Then, there is some $S \subseteq R$ that is feasible for $\cM$ and covers $|\pZ|$ points of total weight at least as large as that of $\pZ$.
\end{lemma}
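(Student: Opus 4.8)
The plan is to mirror the exchange argument of Theorem~\ref{thm:matchoid-offline}, but now carry along a coverage‑weight accounting through inequality~\eqref{eq:cover-weight-ineq}. First I would reduce to the case $|O|\le z$: since $\pZ\subseteq\points(O)$ and $|\pZ|\le z$, picking one covering element of $O$ for each point of $\pZ$ uses at most $|\pZ|\le z$ elements, so if $|O|>z$ there is some $b\in O$ with $\pZ\subseteq\points(O-b)$; as $O-b$ is still feasible and still covers $\pZ$, we may iterate until $|O|=\{b_1,\dots,b_k\}$ with $k\le z$.

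Next I would set up the color bookkeeping using that $\pZ$ is well‑colored, so $h$ is injective on $\pZ$ and $|h(\pZ)|=|\pZ|$. Assign each $\pp\in\pZ$ to one element $b(\pp)\in O$ covering it, and for $1\le r\le k$ put $\pZ_r=\{\pp\in\pZ: b(\pp)=b_r\}$ and $C_r=h(\pZ_r)$. Then the $C_r$ are pairwise disjoint with $\bigcup_r C_r=h(\pZ)$, and since $\pZ_r\subseteq\points(b_r)$ we get $b_r\in X_{C_r}$ and $w_{C_r}(b_r)=f_{C_r}(\{b_r\})\ge\pw(\pZ_r)$ (for each color $c\in C_r$, the heaviest color‑$c$ point of $b_r$ is at least the weight of the point of $\pZ_r$ of color $c$).

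Then I would perform the exchange. Set $O_0=O$; for $r=1,\dots,k$, apply Definition~\ref{def:jointrepset} to the joint $z$‑representative set $R_{C_r}$ for $(X_{C_r},\cM,w_{C_r})$ with the feasible set $\cB=O_{r-1}$ (which has $|O_{r-1}|\le|O|\le z$) and $b_r\in X_{C_r}\cap O_{r-1}$, obtaining $e_r\in R_{C_r}\subseteq X_{C_r}$ with $w_{C_r}(e_r)\ge w_{C_r}(b_r)$ and $O_r:=O_{r-1}-b_r+e_r$ feasible for $\cM$. Exactly as in Theorem~\ref{thm:matchoid-offline}, $O_k$ is a feasible subset of $R$. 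Writing $S_r=\{e_1,\dots,e_r\}\subseteq R$, I would show by induction on $r$, using $e_r\in X_{C_r}$ and applying~\eqref{eq:cover-weight-ineq} with color sets $C=C_1\cup\cdots\cup C_{r-1}$, $C'=C_r$ and sets $A=S_{r-1}$, $b=e_r$, that $C_1\cup\cdots\cup C_r\subseteq h(\points(S_r))$ and
\[
f_{C_1\cup\cdots\cup C_r}(S_r)\ \ge\ f_{C_1\cup\cdots\cup C_{r-1}}(S_{r-1})+w_{C_r}(e_r)\ \ge\ \pw(\pZ_1\cup\cdots\cup\pZ_{r-1})+\pw(\pZ_r)\ =\ \pw(\pZ_1\cup\cdots\cup\pZ_r).
\]
At $r=k$ this gives a feasible $S:=S_k\subseteq R$ with $h(\pZ)\subseteq h(\points(S))$ and $f_{h(\pZ)}(S)\ge\pw(\pZ)$; taking, for each of the $|h(\pZ)|=|\pZ|$ colors, the heaviest point of that color covered by $S$ produces $|\pZ|$ distinct points covered by $S$ of total weight $f_{h(\pZ)}(S)\ge\pw(\pZ)$, as required.

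I expect the main obstacle to be the weight accounting itself: the representative $e_r$ only comes with the per‑block guarantee $w_{C_r}(e_r)\ge w_{C_r}(b_r)$, which says nothing about individual colors or about colors outside $C_r$, so the accumulated coverage of $S_{r-1}$ cannot be compared color‑by‑color. The pairwise disjointness of the $C_r$ together with~\eqref{eq:cover-weight-ineq} is precisely what lets these per‑block bounds be combined into a single global bound on $f_{h(\pZ)}(S_k)$. A secondary, routine point — handled as in Theorem~\ref{thm:matchoid-offline} — is tracking the sizes of the $O_r$ and $S_r$ when a chosen representative coincides with another already‑selected element; such collisions only shrink these sets and affect neither the feasibility of $O_k$ nor the coverage inequality above.
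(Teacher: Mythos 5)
Your proposal is correct and follows essentially the same route as the paper's proof: reduce to $|O|\le z$, decompose $\pZ$ into disjoint color blocks (your arbitrary assignment $\pZ_r$ is equivalent to the paper's prefix-marginal sets $\pP_r\setminus\pP_{r-1}$ with $C=C_r\setminus C_{r-1}$), exchange each $b_r$ via the joint $z$-representative set $R_{C_r}$, and combine the per-block bounds $w_{C_r}(e_r)\ge w_{C_r}(b_r)\ge\pw(\pZ_r)$ through \eqref{eq:cover-weight-ineq} exactly as in the paper. The collision issue you flag is glossed over in the paper's own argument in the same way, so your treatment matches it.
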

\begin{proof}
Suppose that $O = \{b_1,\ldots,b_{k'}\}$ and fix some set $\pZ$ of $z$ points covered by $O$ that is well-colored by $h$. For each $1 \leq r \leq k'$, let $\pP_r = \bigcup^r_{j = 1}(\pZ \cap \points(b_j))$ be the set of points from $\pZ$ covered by the first $r$ elements of $O$ according to our indexing and let $C_r = h(\pP_r)$ be the set of colors assigned to these points. 
Note that the sets $\{\pP_r \setminus \pP_{r-1}\}_{r=1}^{k'}$ form a partition of $\pZ$. We can suppose without loss of generality that $k' \leq z$ as otherwise there must be some element $b_r \in O$ with $\pP_r \setminus \pP_{r-1} = \emptyset$. Any such element can be removed from $O$ to obtain a feasible solution that still covers all of $\pZ$.

We now show by induction on $0 \leq r \leq k'$, that there exists a set of elements $S_r \subseteq R$ such that $O_r = O \setminus \{b_1,\ldots,b_r\} \cup S_r$ is feasible for $\cM$, $|O_r| = k'$, $C_r \subseteq h(\points(S_r))$, and $f_{C_r}(S_r) \geq \pw(\pP_r)$. In the case that $r = 0$, this follows trivially by letting $S_0 = \emptyset$ and $O_0 = O$.

In the general case $r > 0$, the induction hypothesis implies that there is a set $S_{r-1}$ such that $O_{r-1} = O \setminus \{b_1,\ldots,b_{r-1}\} \cup S_{r-1}$ is feasible for $\cM$, $|O_{r-1}| = k'$, $C_{r-1} \subseteq h(\points(S_{r-1}))$, and $f_{C_{r-1}}(S_{r-1}) \geq \pw(\pP_{r-1})$. 
We will consult the representative set $\outp_C$ associated with the set of colors $C = C_r \setminus C_{r-1}$. Note that $b_r \in X_C$ and so by Theorem~\ref{thm:main-repset}, there must exist some element $e_r$ in $\outp_C$ such that $O_{r-1} - b_r + e_r$ 
is feasible, and $w_{C}(e_r) \geq w_{C}(b_r)$. 
Let $S_r = \{e_1,\ldots,e_r\}$ so that $O_r = O \setminus \{b_1,\ldots,b_r\} \cup S_r = O_{r-1} - b_r + e_r$. Then 
$O_r$ is feasible for $\cM$ and $|O_{r}| = |O_{r-1}| = k'$, as required. Since $e_r \in X_C$, $e_r$ contains a point assigned each color in $C = C_r \setminus C_{r-1}$. Thus, $C_r \subseteq h(\points(S_{r-1} + e_r)) = h(\points(S_r))$, as required.
\begin{multline*}
\pw(\pP_{r}) = \pw(\pP_{r-1}) + \pw(\pP_r \setminus \pP_{r-1})
\leq \pw(\pP_{r-1}) + w_C(b_r) \\
\leq f_{C_{r-1}}(S_{r-1}) + w_C(b_r) 
\leq f_{C_{r-1}}(S_{r-1}) + w_{C}(e_r)
\leq f_{C_{r}}(S_{r})\,,
\end{multline*}
where the first inequality follows since $\pP_r \setminus \pP_{r-1}$ is a subset of points of color $C$ covered by $b_r$, the second from the induction hypothesis, the third from $w_C(e_r) \geq w_C(b_r)$, and the last from \eqref{eq:cover-weight-ineq}. This completes the induction step.

To complete the proof of the lemma, we set $r = k'$. Then, we have $O_{k'} = S_{k'} \subseteq R$ and $|S_{k'}| = |O_{k'}| = k'$. Moreover, by definition $C_{k'} = h(\pp_{k'}) = \pZ$. We have $S_{k'}$ feasible for $\cM$, and $C_{k'} \subseteq h(\points(S_k))$ so $S_{k'}$ contains a distinct point of $\pU$ colored with each color $c \in C_{k'}$. For each color $c$, consider the heaviest such point. The total weight of these points is precisely $f_{C_{k'}}(S_{k'}) \geq \pw(P_{k'}) = \pw(\pZ)$.
\end{proof}

We now give a streaming algorithm for computing the collection of joint $z$-representative sets required by Lemmas~\ref{lem:coverage-repset}. Our procedure $\textsc{StreamingMaxCoverage}$ is shown in Algorithm~\ref{alg:coverage-streaming}. Thus far, we have supposed that some set of points $\pZ$ in a solution $O$ was well-colored by a given $h : \pU \to [\bar{z}]$. Under this assumption, our procedure simply runs a parallel instance of the streaming procedure $\SRepSet$ for each $C \subseteq [\bar{z}]$. When a new element $e \in X$ arrives, we assign each point $\pp \in \points(e)$ a color $h(\pp)$. In order to limit the memory required by our algorithm, we will remember only the maximum-weight point of each color in $e$.
As noted in the proof of Lemmas~\ref{lem:coverage-repset}, the resulting collection of elements $R$ that we produce will still contain a feasible solution $S$ with the necessary properties. It is clear that if we later consider the corresponding set of all points in $\pP(e)$, we can only cover \emph{more} points of $\pU$.

\begin{algorithm}[t]
\KwIn{Parameters $\ell,z$, independence oracles for $\ell$-matchoid $\cM=\{M_i\}_{i=1}^s$, weight function $\pw : \pU \to \reals$, hash function $h : \pU \to [\bar{z}]$.}
\smallskip
\myproc{\textsc{StreamingMaxCoverage}}{
\ForEach{$C \subseteq [\bar{z}]$}{
  Let $\SRepSet_C$ be an instance of the procedure $\SRepSet$ for $\cM$, with output $\outp_C$\;
}
\ForEach{$e \in X$ arriving in the stream}
{
  Color the points $\points(e)$ using $h$\;
  Discard all points from $e$ except for the maximum weight point of each color\;
  \ForEach{$C \subseteq h(\points(e))$}{
    Define $w_C(e) = \sum_{\pp \in \points(e)\, :\, h(\pp) \in C}w(\pp)$\;
    Process the arrival of $e$ in $\SRepSet_C$ with weight $w_C(e)$\;
  }   
}
\Return $R = \bigcup_{C \subseteq [\bar{z}]} \outp_C$\;
}
\caption{Streaming FPT-algorithm for the \textsc{Maximum $(\cM,z)$-Coverage}}
\label{alg:coverage-streaming}
\end{algorithm}

We now consider the problem of ensuring that the given set $\pZ$ of up to $z$ points in $\pP(O)$ is well-colored by $h$. In the offline setting, letting $h$ assign each point $\pp \in \pU$ a color uniformly at random guarantees that this will happen with probability depending on $z$. In the streaming setting, however, we cannot afford to store a color for each point of $\pU$, but must still ensure that a point receives a consistent color in each set that it appears in. To accomplish our goal, we use a \emph{$z$-wise independent family} $\cH$ of hash functions $h : X \to [\bar{z}]$. Such a family $\cH$ has the property that for every set of at most $z$ distinct elements $(e_1,\ldots,e_z) \in X^z$, and any $z$ (not necessarily distinct) values $(c_1,\ldots,c_z) \in [\bar{z}]^z$, the probability that $h(e_1)=c_1, h(e_2)=c_2, \ldots,$ and $h(e_z)=c_z$ is precisely $\bar{z}^{-z}$. A classical result of Wegman and Carter~\cite{Wegman1981} provides a construction of such a family $\cH$ of functions $h : [\bar{m}] \to [\bar{z}]$ when both $\bar{m}$ and $\bar{z}$ are prime powers and storing and computing each function requires a random seed of only $\cO(z \log\bar{m})$ bits. In our setting, it suffices to set $\bar{m}$ to the smallest power of 2 larger than $|\pU|$ to obtain a family of functions $h : \pU \to [\bar{z}]$, each of which can be stored in $\cO(z \log \bar{m}) = \cO(z \log m)$ bits. Then, for any set $\pZ$ of $z$ points, the probability $\pZ$ will be well colored by an $h$ chosen uniformly at random from $\cH$ is 
$\binom{\bar{z}}{z} \frac{z!}{\bar{z}^z} > \frac{\bar{z}^z}{z^z} \frac{z!}{\bar{z}^z} = \frac{z!}{z^z} > e^{-z}$. Thus, if we choose $u$ functions $h \in \cH$ independently and uniformly at random, then the probability that $\pZ$ is not well-colored by at least one of them is at most $(1 - e^{-z})^u$, which is at most $\epsilon$ for $u = e^z\ln(\epsilon^{-1})$. Each such choice can be done in parallel, invoking a separate instance of the procedure in Algorithm~\ref{alg:coverage-streaming}.

Alternatively, we can obtain a deterministic algorithm by making use of a \emph{$z$-perfect family} $\cH$ of hash functions from $\pU \to [\bar{z}]$. Such a family has the property that for any subset $\pZ \subseteq \pU$ of size at most $z$, some function $h \in \cH$ is injective on $\pZ$. Schmidt and Siegal~\cite{SchmidtSiegalS90} give a construction of such a family in which each function can be specified by $\cO(\bar{z}) + 2\log\log |\pU|$ bits.\footnote{There have been several subsequent improvements obtaining smaller families $\cH$ of $z$-perfect hash functions (e.g.~\cite{AlonYusterZwick95,DBLP:conf/focs/NaorSS95,DBLP:conf/soda/ChenLSZ07}). For simplicity, we use the result of~\cite{SchmidtSiegalS90}, which gives explicit bounds on the space required for storing and computing such functions and suffices to obtain poly-logarithmic space in our setting.} Thus, we can simply run our streaming algorithm in parallel for each of the $2^{\cO(\bar{z})}\log^2(m)$ such functions.

Combining the above observations, we have the following:
\begin{theorem}\label{thm:streaming-coverage}
Let $\cM$ be an $\ell$-matchoid and $z \in \posints$. For any  $\epsilon > 0$, there is a randomized streaming algorithm that succeeds with probability $(1-\epsilon)$ and computes a kernel $R$ for \textsc{Maximum $(\cM,z)$-Coverage}. Moreover, $|R| \leq (4e)^{z}\Gamma_{\ell,z}\ln(\epsilon^{-1})$, where $\Gamma_{\ell,z} \triangleq \sum_{q = 0}^{(z-1)\ell}\ell^q$. At all times during its execution, the algorithm stores at most $|R|+1$ sets of at most $\cO(z)$ points each and requires at most $\cO(z\ell\log(n) + (4e)^z\ln(\epsilon^{-1})z\log(m))$ additional bits of storage. For $\ell = 1$, we have $|R| \leq (4e)^zz\ln(\epsilon^{-1})$  and for $\ell > 1$, $|R| = \cO\bigl((4e)^{z}\ell^{(z-1)\ell}\ln(\epsilon^{-1})\bigr)$.

There is also a deterministic algorithm producing a kernel $R$ for the same problem with $|R| \leq 2^{\cO(z)}\Gamma_{\ell,z}\log^2(m)$. At all times during its execution, it stores at most $|R|+1$ sets of at most $\cO(z)$ points each and uses at most
$\cO(z\ell\log n) + 2^{\cO(z)}z\log^2(m)\log\log(m)$  additional bits of storage. For $\ell = 1$, we have $|R| = 2^{\cO(z)}z\log^2(m)$  and for $\ell > 1$, $|R| = 2^{\cO(z)}\ell^{(z-1)\ell}\log^2(m)$.
\end{theorem}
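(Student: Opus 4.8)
The plan is to run Algorithm~\ref{alg:coverage-streaming} in parallel over a family $\cH$ of hash functions $h:\pU\to[\bar z]$ and to output $R=\bigcup_{h\in\cH}\bigcup_{C\subseteq[\bar z]}\outp_C^{(h)}$, the union of all the representative sets returned. For a fixed $h$, Algorithm~\ref{alg:coverage-streaming} maintains one copy of $\SRepSet$ for each $C\subseteq[\bar z]$, fed the substream of elements lying in $X_C$ with weights $w_C$; because $h$ is a fixed function, every point of $\pU$ receives the same color in every element it appears in, so the classes $X_C$ and the weights $w_C$ are well defined across the whole stream. By Theorem~\ref{thm:main-streaming}, invoked with parameter $k=z$, each such copy maintains a joint $z$-representative set for $(X_C,\cM,w_C)$ of size at most $\Gamma_{\ell,z}$, using $\cO(z\ell\log n)$ bits of temporary scratch beyond its stored representatives; this scratch may be reused across all the parallel copies since the stream is processed one element at a time. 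Since $\bar z<2z$, a single copy of Algorithm~\ref{alg:coverage-streaming} thus stores at most $2^{\bar z}\Gamma_{\ell,z}<4^z\Gamma_{\ell,z}$ elements.

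For correctness, fix an arbitrary feasible solution $O$ for $\cM$ and let $\pZ$ be a set of at most $z$ heaviest points covered by $O$. If some $h\in\cH$ well-colors $\pZ$ (is injective on it), then for that $h$ the family $\{\outp_C^{(h)}\}_{C\subseteq[\bar z]}$ satisfies the hypothesis of Lemma~\ref{lem:coverage-repset}, so the corresponding union---a subset of $R$---contains a set $S$ feasible for $\cM$ that covers $|\pZ|$ points of total weight at least $\pw(\pZ)$. Hence $R$ is a valid kernel as soon as $\cH$ contains a function well-coloring $\pZ$. For the randomized algorithm we draw $u$ functions independently and uniformly from the $z$-wise independent family of Wegman and Carter~\cite{Wegman1981}; as computed above, a single draw well-colors $\pZ$ with probability greater than $e^{-z}$, so the chance that none of the $u$ does is at most $(1-e^{-z})^u\le\epsilon$ once $u=e^z\ln(\epsilon^{-1})$. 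For the deterministic algorithm we take $\cH$ to be the $z$-perfect family of Schmidt and Siegal~\cite{SchmidtSiegalS90}, which by definition contains a function injective on every set of at most $z$ points, so correctness holds unconditionally.

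The quantitative bounds follow by substitution. In the randomized case $|R|\le u\cdot 2^{\bar z}\cdot\Gamma_{\ell,z}<e^z\ln(\epsilon^{-1})\cdot 4^z\cdot\Gamma_{\ell,z}=(4e)^z\Gamma_{\ell,z}\ln(\epsilon^{-1})$, and plugging in $\Gamma_{1,z}=z$ and $\Gamma_{\ell,z}=\cO(\ell^{(z-1)\ell})$ from Theorem~\ref{thm:main-repset} gives the stated $\ell=1$ and $\ell>1$ bounds; in the deterministic case $|R|\le|\cH|\cdot 2^{\bar z}\cdot\Gamma_{\ell,z}=2^{\cO(z)}\Gamma_{\ell,z}\log^2 m$, with the same specialisations. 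For the space, the algorithm stores the at most $|R|$ representative elements plus the single element currently being processed, each of which is pruned in Algorithm~\ref{alg:coverage-streaming} to one point per color and hence to $\cO(z)$ points; on top of this it uses the $\cO(z\ell\log n)$ reusable scratch of Theorem~\ref{thm:main-streaming} together with the hash-function seeds---$\cO(z\log m)$ bits for each of the $u$ functions in the randomized case, and $\cO(\bar z)+\cO(\log\log m)$ bits for each of the $2^{\cO(z)}\log^2 m$ functions in the deterministic case---which sum to $\cO\bigl((4e)^z\ln(\epsilon^{-1})z\log m\bigr)$ and to $2^{\cO(z)}z\log^2 m\log\log m$, respectively.

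I expect the only delicate point to be the space bookkeeping rather than a genuine obstacle: one must observe that the $\cO(z\ell\log n)$ recursion scratch of $\RepSet$ need not be replicated across the exponentially many parallel sub-instances, and that discarding all but the heaviest point of each color in a stored element keeps the point count at $\cO(z)$ while leaving the kernel guarantee intact (the discarded points can only increase coverage, and $f_C$ and $w_C$ depend only on the heaviest point of each color). The conceptually substantive facts---that one joint $z$-representative set per color class suffices and that a single well-coloring of the unknown optimum's point set activates Lemma~\ref{lem:coverage-repset}---are already supplied by Lemma~\ref{lem:coverage-repset} and Theorem~\ref{thm:main-streaming}, so the proof is essentially the assembly of these pieces with the two standard hashing constructions.
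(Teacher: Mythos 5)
Your proposal is correct and follows essentially the same route as the paper's proof: parallel instances of Algorithm~\ref{alg:coverage-streaming} over Wegman--Carter $z$-wise independent hash functions (randomized) or a Schmidt--Siegal $z$-perfect family (deterministic), with one $\SRepSet_C$ per color subset, correctness via Lemma~\ref{lem:coverage-repset} applied to the instance whose hash function well-colors $\pZ$, and the same size and space accounting via Theorem~\ref{thm:main-streaming}. The only piece stated rather than derived is the $e^{-z}$ lower bound on the well-coloring probability of a single draw, which is the same standard computation the paper carries out in the text preceding the theorem, so there is no genuine gap.
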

\begin{proof}
Let $O$ be an optimal solution for the problem, and let $\pZ$ be the set of  up to $z$ points of maximum weight covered by $O$. For the randomized algorithm, we process each element of the input stream with $e^z\ln(\epsilon^{-1})$ parallel executions of the procedure \textsc{StreamingMaxCoverage} from Algorithm~\ref{alg:coverage-streaming}, each with a function $h$ sampled uniformly and independently at random from the described $\bar{z}$-wise independent family $\cH$. We then let $R$ be the union of all the sets produced by these processes. With probability at least $(1-\epsilon)$, $\pZ$ is well-colored by one such $h$. Consider the process \textsc{StreamingMaxCoverage} corresponding to this choice of $h$ and let $R$ be its output. For 
every $C \subseteq [\bar{z}]$, all elements $e  \in X_C$ will be processed by a procedure $\SRepSet_{C}$ in this instance. By Theorem~\ref{thm:main-streaming}, each process $\SRepSet_C$ used in  \textsc{StreamingMaxCoverage} then produces a joint $z$-representative set $R_C$ for $(X_C,\cM,w_C)$. Thus, by Lemma~\ref{lem:coverage-repset}, the output $R = \bigcup_{C \subseteq [\bar{z}]}R_C$ for this procedure is a kernel for \textsc{Maximum $(\cM,z)$-Coverage}.

In total, the algorithm maintains $2^{\bar{z}}e^{z}\ln(\epsilon^{-1}) \leq (4e)^z\ln(\epsilon^{-1})$ procedures $\SRepSet_C$. By Theorem~\ref{thm:main-streaming}, each such procedure returns a set of $R_C$ containing at most $\Gamma_{\ell,z} \triangleq \sum_{q=0}^{(z-1)\ell}\ell^q$ elements. For each element, we discard all but the heaviest point of each color class. Thus $|R| \leq (4e)^{z}\Gamma_{\ell,z}\ln(\epsilon^{-1})$ and for each element of $R$, we must store at most $\bar{z} = \cO(z)$ points. When a new element arrives, we can perform the updates in each procedure sequentially, temporarily storing at most one element and using at most $\cO(z\ell\log n)$ bits of additional storage, as shown in Theorem~\ref{thm:main-streaming}. Additionally, we must store $\cO(z\log m)$ bits for the hash function in each of the $(4e)^{z}\ln(\epsilon^{-1})$ procedures \textsc{StreamingMaxCoverage}. Thus, the total number of additional bits required is at most $\cO\bigl(z\ell\log(n) + (4e)^{z}\ln(\epsilon^{-1})z\log(m)\bigr)$.

For the deterministic algorithm, we proceed in the same fashion, but instead use each of the $2^{\cO(\bar{z})}\log^2(m) = 2^{\cO(z)}\log^2(m)$ functions in the $z$-perfect hash family $\cH$, each of which requires at most $\cO(z + \log\log(m))$ bits to store. Then, $\pZ$ will be well-colored by at least one of these functions. By a similar argument as above, the union $R$ of the $2^{\cO(z)}\log^2(m)$ procedures \textsc{StreamingMaxCoverage} will then be a kernel. By a similar calculation, $|R| \leq 2^{\cO(z)}\Gamma_{\ell,z}\log^2(m)$ and the total number of additional bits required is at most $\cO(z\ell\log n) + 2^{\cO(z)}z\log^2(m)\log\log(m)$.
\end{proof}

In Theorem~\ref{thm:streaming-coverage}, we have stated our results in the streaming setting where the primary concern is the space used by the algorithm. However, we note that our algorithms also translate directly to fixed-parameter tractable algorithms for the offline setting, in which the primary concern is computation time. Specifically, instead of processing elements in a stream using multiple instances of $\SRepSet$ we can simply execute multiple instances of the offline procedure $\RepSet$. Combining Theorem~\ref{thm:matchoid-offline} with our analyses from the streaming setting then immediately gives the following.
\begin{theorem}
\label{thm:fpt-algos}
There are fixed-parameter tractable algorithms computing a kernel $R$ for \textsc{Maximum $(\cM,z)$-Coverage} requiring a number of independence oracle calls proportional to $n$ times the stated upper bounds on $|R|$ in  Theorem~\ref{thm:streaming-coverage} plus the time required to sort the input by weight for each of the $(4e)^{z}\ln(\epsilon)$, or $2^{\cO(z)}\log^2(m)$ representative sets maintained, respectively.
\end{theorem}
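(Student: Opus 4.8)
The plan is to observe that the construction behind Theorem~\ref{thm:streaming-coverage} is assembled entirely from black-box calls to the streaming procedure $\SRepSet$, one call for each pair $(h,C)$, where $h$ ranges over the chosen hash family ($z$-wise independent for the randomized bound, $z$-perfect for the deterministic one) and $C$ ranges over all colour subsets $C \subseteq [\bar{z}]$, and that in the offline setting each such call can simply be replaced by one invocation of $\RepSet(X_C)$ on $\cM$ with the weight function $w_C$ and parameters $k = z$, $\ell$. By Theorem~\ref{thm:main-repset} (equivalently Theorem~\ref{thm:matchoid-offline}), each such invocation returns a joint $z$-representative set $R_C$ for $(X_C,\cM,w_C)$ of size at most $\Gamma_{\ell,z} \triangleq \sum_{q=0}^{(z-1)\ell}\ell^q$, using at most $\Gamma_{\ell,z}\cdot|X_C| \leq \Gamma_{\ell,z}\cdot n$ independence oracle queries plus the time to sort $X_C$ by $w_C$.

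For correctness, the key point---which warrants explicit mention but is immediate---is that nothing in the analysis of Theorem~\ref{thm:streaming-coverage} depends on the streaming nature of $\SRepSet$: Lemma~\ref{lem:coverage-repset} only requires each $R_C$ to be a joint $z$-representative set for $(X_C,\cM,w_C)$, and the hash-family argument (which holds with probability $1-\epsilon$ in the randomized case and deterministically in the other) is unchanged. Hence, for the function $h$ that well-colours a fixed optimal point set $\pZ$, the union $R = \bigcup_{C\subseteq[\bar{z}]}R_C$ over that $h$ is already a kernel for \textsc{Maximum $(\cM,z)$-Coverage} by Lemma~\ref{lem:coverage-repset}, and this kernel is contained in the set of all elements output by the offline algorithm; the size bounds on $|R|$ are therefore exactly those proved in Theorem~\ref{thm:streaming-coverage}. (Just as in the streaming version one may keep only the maximum-weight point of each colour in every element, though offline this is unnecessary.)

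It then remains only to count. The number of $\RepSet$ invocations equals the number of $(h,C)$ pairs, which is at most $(4e)^z\ln(\epsilon^{-1})$ in the randomized case and $2^{\cO(z)}\log^2(m)$ in the deterministic case---in each case this matches the upper bound on $|R|$ from Theorem~\ref{thm:streaming-coverage} divided by $\Gamma_{\ell,z}$. Since each invocation costs at most $\Gamma_{\ell,z}\cdot n$ independence oracle queries, the total number of queries is at most $n$ times the stated upper bound on $|R|$, and each invocation additionally requires sorting its input $X_C$ by the corresponding weight $w_C$, giving one sort per representative set maintained. As this is a direct book-keeping translation of the streaming algorithm, there is no genuine obstacle; the only thing to be careful about is the observation above that Lemma~\ref{lem:coverage-repset} is insensitive to how the sets $R_C$ are produced, so replacing $\SRepSet$ by $\RepSet$ preserves correctness verbatim.
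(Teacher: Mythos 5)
Your proposal is correct and follows essentially the same route as the paper: replace each streaming $\SRepSet$ instance (one per hash function and colour set $C$) by an offline call to $\RepSet(X_C)$ with weight $w_C$, invoke Theorem~\ref{thm:matchoid-offline} for the per-call query and sorting costs, and reuse the streaming analysis (Lemma~\ref{lem:coverage-repset} together with the hash-family argument) for correctness and the size bounds. Your additional remark that Lemma~\ref{lem:coverage-repset} only needs each $R_C$ to be a joint $z$-representative set, independent of how it is produced, is exactly the (implicit) justification the paper relies on.
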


\bibliographystyle{plain}
\bibliography{library,names}

\begin{thebibliography}{10}

\bibitem{AlonYusterZwick95}
Noga Alon, Raphael Yuster, and Uri Zwick.
\newblock Color-coding.
\newblock {\em Journal of the ACM}, 42(4):844--856, 1995.

\bibitem{Badanidiyuru2014a}
Ashwinkumar Badanidiyuru, Baharan Mirzasoleiman, Amin Karbasi, and Andreas
  Krause.
\newblock Streaming submodular maximization: Massive data summarization on the
  fly.
\newblock In {\em {KDD}}, pages 671--680, 2014.

\bibitem{DBLP:conf/spaa/BateniEM17}
MohammadHossein Bateni, Hossein Esfandiari, and Vahab~S. Mirrokni.
\newblock Almost optimal streaming algorithms for coverage problems.
\newblock In {\em {SPAA}}, pages 13--23. {ACM}, 2017.

\bibitem{DBLP:journals/ipl/Blaser03}
Markus Bl{\"{a}}ser.
\newblock Computing small partial coverings.
\newblock {\em Information Processing Letters}, 85(6):327--331, 2003.

\bibitem{DBLP:journals/ita/BonnetPS16}
{\'{E}}douard Bonnet, Vangelis~Th. Paschos, and Florian Sikora.
\newblock Parameterized exact and approximation algorithms for maximum
  \emph{k}-set cover and related satisfiability problems.
\newblock {\em {RAIRO} - Theor. Inf. and Applic.}, 50(3):227--240, 2016.

\bibitem{Calinescu2011}
Gruia Calinescu, Chandra Chekuri, Martin P\'{a}l, and Jan Vondr\'{a}k.
\newblock Maximizing a submodular set function subject to a matroid constraint.
\newblock {\em SIAM Journal on Computing}, 40(6):1740--1766, 2011.

\bibitem{DBLP:journals/mp/ChakrabartiK15}
Amit Chakrabarti and Sagar Kale.
\newblock Submodular maximization meets streaming: matchings, matroids, and
  more.
\newblock {\em Mathematical Programming}, 154(1-2):225--247, 2015.

\bibitem{DBLP:conf/icalp/ChekuriGQ15}
Chandra Chekuri, Shalmoli Gupta, and Kent Quanrud.
\newblock Streaming algorithms for submodular function maximization.
\newblock In {\em {ICALP}}, pages 318--330, 2015.

\bibitem{DBLP:conf/soda/ChenLSZ07}
Jianer Chen, Songjian Lu, Sing{-}Hoi Sze, and Fenghui Zhang.
\newblock Improved algorithms for path, matching, and packing problems.
\newblock In {\em {SODA}}, pages 298--307, 2007.

\bibitem{DBLP:conf/iwpec/ChitnisC19}
Rajesh Chitnis and Graham Cormode.
\newblock Towards a theory of parameterized streaming algorithms.
\newblock In {\em {IPEC}}, pages 7:1--7:15, 2019.

\bibitem{Chitnis2016}
Rajesh Chitnis, Graham Cormode, Hossein Esfandiari, MohammadTaghi Hajiaghayi,
  Andrew McGregor, Morteza Monemizadeh, and Sofya Vorotnikova.
\newblock Kernelization via sampling with applications to finding matchings and
  related problems in dynamic graph streams.
\newblock In {\em {SODA}}, pages 1326--1344, 2016.

\bibitem{DBLP:conf/soda/ChitnisCHM15}
Rajesh~Hemant Chitnis, Graham Cormode, Mohammad~Taghi Hajiaghayi, and Morteza
  Monemizadeh.
\newblock Parameterized streaming: Maximal matching and vertex cover.
\newblock In {\em {SODA}}, pages 1234--1251, 2015.

\bibitem{10.5555/2568438}
Rodney~G. Downey and Michael~R. Fellows.
\newblock {\em Fundamentals of Parameterized Complexity}.
\newblock Springer, 2013.

\bibitem{DBLP:journals/geb/DughmiV15}
Shaddin Dughmi and Jan Vondr{\'{a}}k.
\newblock Limitations of randomized mechanisms for combinatorial auctions.
\newblock {\em Games Econ. Behav.}, 92:370--400, 2015.

\bibitem{DBLP:conf/mfcs/FafianieK14}
Stefan Fafianie and Stefan Kratsch.
\newblock Streaming kernelization.
\newblock In {\em MFCS}, pages 275--286, 2014.

\bibitem{Feige1998}
Uriel Feige.
\newblock A threshold of ln n for approximating set cover.
\newblock {\em Journal of the ACM}, 45:634--652, 1998.

\bibitem{Feige2017}
Uriel Feige and Moshe Tennenholtz.
\newblock Optimization with uniform size queries.
\newblock {\em Algorithmica}, 78(1):255--273, 2017.

\bibitem{DBLP:conf/nips/FeldmanK018}
Moran Feldman, Amin Karbasi, and Ehsan Kazemi.
\newblock Do less, get more: Streaming submodular maximization with
  subsampling.
\newblock In {\em NeurIPS}, pages 730--740, 2018.

\bibitem{feldman2020}
Moran Feldman, Ashkan Norouzi-Fard, Ola Svensson, and Rico Zenklusen.
\newblock The one-way communication complexity of submodular maximization with
  applications to streaming and robustness.
\newblock In {\em STOC}, pages 1363--1374, 2020.

\bibitem{Filmus2014}
Yuval Filmus and Justin Ward.
\newblock Monotone submodular maximization over a matroid via non-oblivious
  local search.
\newblock {\em SIAM Journal on Computing}, 43(2):514--542, 2014.

\bibitem{Fisher1978}
Marshall~L. Fisher, George~L Nemhauser, and Laurence~A. Wolsey.
\newblock An analysis of approximations for maximizing submodular set
  functions---ii.
\newblock {\em Mathematical Programming Studies}, 8:73--87, 1978.

\bibitem{DBLP:journals/jacm/FominLPS16}
Fedor~V. Fomin, Daniel Lokshtanov, Fahad Panolan, and Saket Saurabh.
\newblock Efficient computation of representative families with applications in
  parameterized and exact algorithms.
\newblock {\em Journal of the {ACM}}, 63(4):29:1--29:60, 2016.

\bibitem{GJS2021}
Paritosh Garg, Linus Jordan, and Ola Svensson.
\newblock Semi-streaming algorithms for submodular matroid intersection.
\newblock In {\em IPCO}, 2021.

\bibitem{DBLP:journals/corr/abs-2002-05477}
Chien{-}Chung Huang, Naonori Kakimura, Simon Mauras, and Yuichi Yoshida.
\newblock Approximability of monotone submodular function maximization under
  cardinality and matroid constraints in the streaming model.
\newblock {\em SIAM Journal on Discrete Mathematics}, 36(1), 2022.

\bibitem{Huang20}
Chien{-}Chung Huang, Theophile Thiery, and Justin Ward.
\newblock Improved multi-pass streaming algorithms for submodular maximization
  with matroid constraints.
\newblock In {\em APPROX}, 2020.

\bibitem{JenkynsMatchoid}
T.A. Jenkyns.
\newblock {\em Matchoids: A Generalization of Matchings and Matroids}.
\newblock PhD thesis, University of Waterloo, 1974.

\bibitem{Jensen82}
Per~M. Jensen and Bernhard Korte.
\newblock Complexity of matroid property algorithms.
\newblock {\em SIAM J. Computing}, 11(1):184--190, 1982.

\bibitem{Kaparis20}
Konstantinos Kaparis, Adam~N. Letchford, and Ioannis Mourtos.
\newblock On matroid parity and matching polytopes.
\newblock {\em Discrete Applied Mathematics}, 284:322 -- 331, 2020.

\bibitem{DBLP:conf/coco/Karp72}
Richard~M. Karp.
\newblock Reducibility among combinatorial problems.
\newblock In {\em Complexity of Computer Computations}, pages 85--103, 1972.

\bibitem{DBLP:conf/icml/0001MZLK19}
Ehsan Kazemi, Marko Mitrovic, Morteza Zadimoghaddam, Silvio Lattanzi, and Amin
  Karbasi.
\newblock Submodular streaming in all its glory: Tight approximation, minimum
  memory and low adaptive complexity.
\newblock In {\em {ICML}}, pages 3311--3320, 2019.

\bibitem{DBLP:journals/siamcomp/KorulaMZ18}
Nitish Korula, Vahab~S. Mirrokni, and Morteza Zadimoghaddam.
\newblock Online submodular welfare maximization: Greedy beats 1/2 in random
  order.
\newblock {\em {SIAM} Journal on Computing}, 47(3):1056--1086, 2018.

\bibitem{Lee2010a}
Jon Lee, Maxim Sviridenko, and Jan Vondr\'{a}k.
\newblock Submodular maximization over multiple matroids via generalized
  exchange properties.
\newblock {\em Mathematics of Operations Research}, 35(4):795--806, 2010.

\bibitem{Lee2013}
Jon Lee, Maxim Sviridenko, and Jan Vondr\'{a}k.
\newblock Matroid matching: the power of local search.
\newblock {\em {SIAM} Journal on Computing}, 42(1):357--379, 2013.

\bibitem{LW2021}
Roie Levin and David Wajc.
\newblock Streaming submodular matching meets the primal-dual method.
\newblock In {\em SODA}, pages 1914--1933, 2021.

\bibitem{DBLP:journals/talg/LokshtanovMPS18}
Daniel Lokshtanov, Pranabendu Misra, Fahad Panolan, and Saket Saurabh.
\newblock Deterministic truncation of linear matroids.
\newblock {\em {ACM} Transaction on Algorithms}, 14(2):14:1--14:20, 2018.

\bibitem{Lokshtanov2018}
Daniel Lokshtanov, Pranabendu Misra, Fahad Panolan, Saket Saurabh, and Meirav
  Zehavi.
\newblock Quasipolynomial representation of transversal matroids with
  applications in parameterized complexity.
\newblock In {\em ITCS}, pages 32:1--32:13, 2018.

\bibitem{Lovasz1981}
L\'aszl\'o Lov\'{a}sz.
\newblock The matroid matching problem.
\newblock In L\'{a}szl\'{o}. Lov\'{a}sz and Vera~T S\'{o}s, editors, {\em
  Algebraic Methods in Graph Theory, Vol. II (Colloquium Szeged 1978)}, pages
  495--517, 1981.

\bibitem{LovaszPlummer}
L\'aszl\'o Lov\'asz and M.~D. Plummer.
\newblock {\em Matching theory}.
\newblock North-Holland, 1986.

\bibitem{DBLP:conf/soda/Manurangsi20}
Pasin Manurangsi.
\newblock Tight running time lower bounds for strong inapproximability of
  maximum \emph{k}-coverage, unique set cover and related problems (via
  \emph{t}-wise agreement testing theorem).
\newblock In {\em {SODA}}, pages 62--81, 2020.

\bibitem{DBLP:journals/tcs/Marx09}
D{\'{a}}niel Marx.
\newblock A parameterized view on matroid optimization problems.
\newblock {\em Theoretical Computer Science}, 410(44):4471--4479, 2009.

\bibitem{MTV2021}
Andrew McGregor, David Tench, and Hoa~T. Vu.
\newblock Maximum coverage in the data stream model: Parameterized and
  generalized.
\newblock In {\em ICDT}, pages 12:1--12:20, 2021.

\bibitem{DBLP:journals/mst/McGregorV19}
Andrew McGregor and Hoa~T. Vu.
\newblock Better streaming algorithms for the maximum coverage problem.
\newblock {\em Theory of Computing Systems}, 63(7):1595--1619, 2019.

\bibitem{MPRS2020}
Pranabendu Misra, Fahad Panolan, M.S. Ramanujan, and Saket Saurabh.
\newblock Linear representation of transversal matroids and gammoids
  parameterized by rank.
\newblock {\em Theoretical Computer Science}, 818:51--59, 2020.

\bibitem{DBLP:conf/focs/NaorSS95}
Moni Naor, Leonard~J. Schulman, and Aravind Srinivasan.
\newblock Splitters and near-optimal derandomization.
\newblock In {\em {FOCS}}, pages 182--191, 1995.

\bibitem{Nemhauser1978}
George~L Nemhauser and Laurence~A Wolsey.
\newblock Best algorithms for approximating the maximum of a submodular set
  function.
\newblock {\em Mathematics of Operations Research}, 3(3):177--188, 1978.

\bibitem{DBLP:conf/icml/Norouzi-FardTMZ18}
Ashkan Norouzi{-}Fard, Jakub Tarnawski, Slobodan Mitrovic, Amir Zandieh,
  Aidasadat Mousavifar, and Ola Svensson.
\newblock Beyond 1/2-approximation for submodular maximization on massive data
  streams.
\newblock In {\em {ICML}}, pages 3826--3835, 2018.

\bibitem{DBLP:conf/sdm/SahaG09}
Barna Saha and Lise Getoor.
\newblock On maximum coverage in the streaming model {\&} application to
  multi-topic blog-watch.
\newblock In {\em {SDM}}, pages 697--708, 2009.

\bibitem{SchmidtSiegalS90}
Jeanette~P. Schmidt and Alan Siegel.
\newblock The spatial complexity of oblivious k-probe hash functions.
\newblock {\em SIAM Journal on Computing}, 19(5):775--786, 1990.

\bibitem{Schrijver2003}
Alexander Schrijver.
\newblock {\em Combinatorial optimization: polyhedra and efficiency}.
\newblock Springer, 2003.

\bibitem{DBLP:journals/iandc/Skowron17}
Piotr Skowron.
\newblock {FPT} approximation schemes for maximizing submodular functions.
\newblock {\em Information and Computation}, 257:65--78, 2017.

\bibitem{DBLP:journals/jair/SkowronF17}
Piotr Skowron and Piotr Faliszewski.
\newblock Chamberlin-{C}ourant rule with approval ballots: Approximating the
  {M}ax{C}over problem with bounded frequencies in {FPT} time.
\newblock {\em Journal of Artificial Intelligence Research}, 60:687--716, 2017.

\bibitem{DBLP:conf/ciac/BevernTZ19}
Ren{\'{e}} van Bevern, Oxana~Yu. Tsidulko, and Philipp Zschoche.
\newblock Fixed-parameter algorithms for maximum-profit facility location under
  matroid constraints.
\newblock In {\em {CIAC}}, pages 62--74, 2019.

\bibitem{Wegman1981}
Mark~N. Wegman and J.~Lawrence Carter.
\newblock New hash functions and their use in authentication and set equality.
\newblock {\em Journal of Computer and System Sciences}, 22(3):265--279, 1981.

\bibitem{Wolsey1982}
Laurence Wolsey.
\newblock Maximising real-valued submodular functions: Primal and dual
  heuristics for location problems.
\newblock {\em Mathematics of Operations Research}, 7(3):pp. 410--425, 1982.

\end{thebibliography}

\appendix

\section{Hardness Results for Alternative Parameterizations}
\label{sec:hardn-results-altern}

Here, we provide some justification for our choice of parameters for each of the problems we consider by showing that the problems become hard if we use any strict subset of the parameters proposed.

First, we note that the 3-dimensional matching problem, which the 3-matchoid problem generalizes, is one of Karp's original NP-hard problems~\cite{DBLP:conf/coco/Karp72}. It follows that all of our problems remain NP-hard when parameterized by $\ell$ alone.

For linear objectives, we parameterize by $\ell$ and $k$. Here, we note that if we parameterize by $k$ alone, we can encode an arbitrary instance of the \textsc{Independent Set} problem, where $k$ is the size of the independent set. This problem is known to be $W[1]$-hard~\cite{10.5555/2568438}.  Given an arbitrary graph $G=(V,E)$, we let our ground set $X$ be $V$, and use an unweighted objective that sets $w(e) = 1$ for each $e \in V$. Then, we introduce a uniform matroid of rank $1$ on $\{u,v\}$ with each edge $(u,v) \in E$. Note that some $S \subseteq X = V$ is then independent in all matroids if and only if no pair of vertices in $S$ share an edge. Moreover, we have a solution of value at least $k$ for our problem if and only if we can select $k$ elements from $S$ and so have an independent set of size $k$ in $G$.


For coverage functions, parameterizing by the number $k$ of elements chosen immediately gives the \textsc{Maximum $k$-Coverage} problem, which is $W[2]$ hard~\cite{DBLP:journals/ita/BonnetPS16}. Here, we parameterize instead by the number of points $z$ that are covered and $\ell$. If instead we parameterize by only $z$, we can again encode an arbitrary instance of \textsc{Independent Set} as described above. We encode our unweighted objective by letting each element of $X$ cover a single, unique point. Then, similar to the discussion for the case of linear functions, we have an independent set of size $z$ in $G$ if and only if we have a set of elements that is independent in all our matroids covering $z$ points.

\end{document}